\newtheorem{thm}{Theorem}
\newtheorem{cor}{Corollary}
\newtheorem{rem}{Remark}
\newtheorem{lem}{Lemma}
\def\tp{\mathrm{T}}
\def\sn{\mathrm{span}}
\def\nl{\mathrm{null}}
\def\rank{\mathrm{rank}}
\def\Ds{\displaystyle}
\begin{document}

\title{From Control to Mathematics--Part II: Observability-Based Design for Iterative Methods in Solving Linear Equations}


\author{Deyuan Meng, {\it Senior Member}, {\it IEEE}

\thanks{This work was supported by the National Natural Science Foundation of China under Grant 61922007 and Grant 61873013.}
\thanks{The author is with the Seventh Research Division, Beihang University (BUAA), Beijing 100191, P. R. China, and also with the School of Automation Science and Electrical Engineering, Beihang University (BUAA), Beijing 100191, P. R. China (e-mail: dymeng@buaa.edu.cn).}
}

\date{}
\maketitle

\begin{abstract}
The control approaches generally resort to the tools from the mathematics, but whether and how the mathematics can benefit from the control approaches is unclear. This paper aims to bring the ``control design'' idea into the mathematics by providing an observer-based iterative method that focuses on solving linear algebraic equations (LAEs). An inherent relationship is revealed between the problem-solving of LAEs and the design of observer-based control systems, with which the iterative method for solving LAEs is exploited based on the design of the basic state observers. It is shown that all (least squares) solutions for any (un)solvable LAEs can be determined exponentially fast or monotonically with different selections of initial conditions. By integrating the design idea of the deadbeat control, the solving of LAEs can be achieved within only finite iterations. In particular, our proposed iterative method can be leveraged to develop a new observer-based design algorithm to realize the perfect tracking objective of conventional two-dimensional iterative learning control (ILC) systems, where the gap between classical ILC design and popular feedback-based control design is narrowed.
\end{abstract}

\begin{IEEEkeywords}
Control design, iterative method, iterative learning control, linear algebraic equation, state observer.
\end{IEEEkeywords}

\section{Introduction}\label{sec1}

\IEEEPARstart{C}{ontrol} design and analysis are generally implemented by resorting to the well-established methods and theories from the mathematics. This is well-known for all control fields, wherein ``... respect for mathematical rigor has been a hallmark of control systems research ...,'' as said by {\AA}str\"{o}m and Kumar \cite{ak:14}. In particular, of note is that as one of the most fundamental mathematic problems, the solving of linear algebraic equations (LAEs) is widely performed in the control design and analysis. The LAEs as well as the presented solving methods and results of them usually provide necessary design and analysis tools for the control systems, especially for those linear control systems, and there exist also plenty of popular control problems that can be equivalently transformed into the problem-solving of LAEs. However, limited attentions have been devoted to bettering the interaction between mathematics and control on the other side, i.e., to developing how control can feed back into mathematics. By contrast, there are control methods that may help to provide new perspectives into the study of some mathematic problems, such as iterative learning control (ILC) that has a close relation with iterative methods, especially for solving LAEs \cite{bta:06,acm:07,mw:21}.

\subsection{A Motivating Example: Designing ILC and Solving LAEs}\label{sec11}

As a class of intelligent control methods, ILC is particularly available for improving the transient performances of dynamic systems that run repeatedly over a finite time horizon with only measurement and control data (see, e.g., \cite{bta:06,mw:21,acm:07,s:05} and references therein). Typical ILC systems generally evolve along two axes, i.e., the time axis and the iteration axis. A common objective of ILC systems is to track some desired reference trajectory over any finite time interval of interest. The working mechanism of ILC to realize a perfect output tracking objective is the iterative update of the control input based on information from previous iterations. Through over-and-over iterations, the output finally can approach the desired reference trajectory under the impacts of the relevant input that converges to some desired input \cite{mw:21}.

When of interest is the study on ILC for discrete-time linear systems, the finite duration of them along the time axis makes it possible to incorporate their time-domain dynamics into their iteration-domain dynamics with the help of a lifting technique \cite{bta:06,acm:07}. Thanks to using supervectors resulting from the lifting technique, a linear mapping relation can be directly established between the output supervector and the input supervector along the iteration axis, for which a structured mapping matrix arises (see, e.g., \cite[eq. (16)]{acm:07}). Correspondingly, the perfect tracking problem of ILC systems becomes the iterative seeking problem of the desired input supervector subject to the desired reference supervector under the same linear mapping relation. Namely, it discloses a fact that the design problem of ILC may exactly fall into the problem-solving framework of LAEs through iterative methods. As a benefit, new ILC design algorithms may emerge from the iterative methods for the solving of LAEs, especially in contrast with the commonly used ILC algorithms that adopt a class of PID-type iterative updating laws (see detailed results in the following Section \ref{sec5} and also similar discussions in \cite{mw:21}).

As core problems in the control area, the design and analysis for controllers are naturally considered to play a fundamentally important role for ILC, which are consequently integrated into the iterative methods associated with ILC. This actually brings new perspectives into the iterative methods in the mathematics, especially into those for solving LAEs, because they generally do not involve any control design problems (see, e.g., \cite{k:95,g:97,sv:00,mc:05,f:53} and references therein). Take, for example, the P-type discrete-time linear ILC, where the synthesis of the control gain matrix helps to improve the convergence rate, achieve the disturbance rejection, and optimize the specified system performance index for the resulting iterative process of ILC \cite{s:05,bta:06}. Although ILC relates to a specific class of iterative methods, these advantages arising from the control design may provide us the motivations to further explore whether and how the control design can help to improve more general iterative methods in the mathematics, such as those in solving LAEs. In addition, salient convergence analysis methods for ILC, such as those based on contraction mapping and two-dimensional (2-D) systems \cite{acm:07}, may provide us the possible ways to develop new convergence analysis tools of the iterative methods in the mathematics.

From the aforementioned discussions on bridging a relation between the design of ILC and the solving of LAEs, there may exist interactions between control and mathematics from both sides: the mathematics provides the basis for the control, while the control feeds back into the mathematics simultaneously. It thus motivates us to make contributions to promoting the two-sided interaction between mathematics and control by focusing on the problem-solving of LAEs through the iterative methods.

\subsection{A Brief Review of Related Works}

The solving of LAEs, as one of the fundamental problems in the mathematics, is indispensable in many scientific researches and engineering applications (see also \cite{k:95,g:97,sv:00,mc:05}). There are some other mathematic problems that can be addressed by benefiting from the derived methods and results of solving LAEs, such as the solving of linear matrix equations like Lyapunov equations and Sylvester equations \cite{hj:91,h:18}. Two classes of methods for solving LAEs are usually used, namely, the direct methods and the iterative methods \cite{mc:05}. For the direct methods, the Gaussian elimination is implemented such that the solution for any LAE can be directly gained after a calculation process of finite steps. This is particularly effective for LAEs with lower dimensions, but when considering LAEs with higher dimensions, the direct methods are usually subjected to huge computation and storage costs and not practical any longer. By comparison, the iterative methods are more preferred in the practical applications thanks to their simple and easy-to-implement processes that only need addition and multiplication operations among several matrices and vectors. Further, owing to the issue of calculation accuracy (or robustness \cite{bk:03,bk:07}) in the presence of the round-off error, ``... direct methods disappear except for quite simple problems, and all methods become iterative ...,'' as said by Forsythe \cite{f:53}.

For the iterative methods, the implementation mechanism of solving LAEs is to gradually update the estimated solutions for LAEs at each iteration, which is generally realized by resorting to the estimation information only from the previous iterations. There have been introduced various effective iterative methods that are aimed at the solving of LAEs, such as the Gauss-Seidel method, the Krylov subspace method, and the successive over-relaxation method. For a more detailed literature review of the iterative methods for solving LAEs, we refer the readers also to \cite{k:95,g:97,sv:00,mc:05,f:53}. For the classic iterative methods, one of the desirable objectives is to ensure the asymptotic convergence to the exact solutions of LAEs. However, even though they ``feed back'' the previous estimation results into the calculation of the estimated solutions for LAEs at the current iteration, the classic iterative methods are subjected to difficulties to ``control'' their updating processes, where the convergence of them may not always hold and the convergence speed of them may be unclear. A question naturally emerges: whether and how can the design of iterative methods be improved, ensuring them to operate in an effective and systematic regulation that can be controlled as desired?

To find affirmative answers to the abovementioned question, we may remind ourselves of the key idea of automatic control, which is to give us extra degrees of freedom to make controlled systems possible to achieve certain desired requirements, such as those of stability, accuracy, response speed, and robustness \cite{ak:14}. This observation provides motivation for the incorporation of the insightful idea of ``control design'' into the design of the iterative methods to promote their performances especially for solving LAEs, based on which there have been reproted several instructive and valuable attempts in the literature \cite{bk:03,bk:07,hjl:06,hj:05,yst:16,ae:20,mlm:15,lmnb:17,ygqyw:20}. In \cite{bk:03,bk:07}, the solving problem of LAEs has been transformed into the stability problem of constructed residue systems, under which the feedback control design can be successfully adopted for the execution of iterative methods, together with taking full advantage of the famous Lyapunov stability analysis theory. To avoid exhibiting the unstable dynamics or remove the sensitive dependence on the initial conditions, the idea from optimal and robust control has been incorporated into the iterative methods for solving LAEs in \cite{hjl:06,hj:05}, which can guarantee the global convergence of them to the desired solutions of LAEs in spite of any initial states. In addition, a lot of effort has been recently spent to integrate the design mechanism of distributed control for multi-agent systems into the iterative solution methods for LAEs \cite{yst:16,ae:20,mlm:15,lmnb:17,ygqyw:20}, which may help to lighten the computational burden and improve the computational efficiency (for detailed discussions, see, e.g., \cite{wmlr:19}).

Despite the notable attempts made in \cite{bk:03,bk:07,hjl:06,hj:05,yst:16,ae:20,mlm:15,lmnb:17,ygqyw:20}, the methods and results for how to incorporate the control-theoretic design ideas into the iterative methods are still less developed, which has already been disclosed and highlighted for the interaction between control and numerical analysis \cite{c:01}. Of particular note is that the existing control-theoretic design methods for solving LAEs are subjected to some specific constraints on LAEs. For example, the basic solvability hypothesis on LAEs is required in \cite{bk:03,bk:07,hjl:06,hj:05,yst:16,ae:20,mlm:15,lmnb:17}, where the uniqueness assumption is also imposed for the solutions to LAEs in \cite{bk:03,bk:07,hjl:06,hj:05,yst:16,ae:20}. Although an attempt to find the least squares solutions for unsolvable LAEs is made in \cite{ygqyw:20}, a basic full-column rank condition is needed to guarantee the uniqueness of the least squares solutions for LAEs. In fact, it is unclear whether there exist some general control-theoretic design ideas for the iterative methods that can effectively work, regardless of any solvable or unsolvable LAEs. This is because it is lack of methods and results to reasonably connect the basic properties of control, such as controllability and  observability, to the (least squares) solutions for LAEs. Further, by recalling the basic hypotheses required for the control design \cite{am:06,r:96}, it is easy to see that such a reasonable connection may provide a fundamental guideline for the control-theoretic design of the iterative methods, like controllability for a feedback controller design and observability for a state observer design. However, it still remains to be explored whether and how these observed issues on control-based iterative methods can be addressed.

In addition, when concerning LAEs that have multiple (least squares) solutions, it may be desirable to find design tools for the iterative methods such that they are capable of determining all possible (least squares) solutions of LAEs. To proceed, it is expected to establish how the initial conditions of the iterative methods contribute to determining the multiple (least squares) solutions of LAEs, which consequently helps to obtain specific (least squares) solutions by appropriate selections of the initial conditions. Generally, the asymptotic/exponential convergence can only be guaranteed for the existing control-based iterative methods \cite{bk:03,bk:07,hjl:06,hj:05,yst:16,ae:20,mlm:15,lmnb:17,ygqyw:20}, and thus it is left to further develop how to improve the convergence speeds of them, especially by making full use of the control design methods.

\subsection{Our Contributions and Paper Organization}

\begin{figure}
\centering
\includegraphics[width=3in]{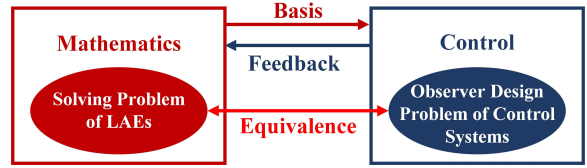}
\caption{An illustration of the relation between mathematics and control.}\label{fig1}
\end{figure}

In this paper, we contribute to bettering the existing iterative methods of solving LAEs, and provide them with a new design perspective by taking full advantage of the design idea for state observers in linear systems and control \cite{am:06,r:96}. We propose a control-theoretic design framework for the iterative methods, which incorporates any given LAE of interest into the dynamic process of a discrete linear system and, thus, makes it possible to reveal the equivalence relation between the solving problem of an LAE and the state observer design problem of a resulting linear system. Clearly, this provides a way to feed back control design into the mathematics, and can promote the bidirectional interaction between mathematics and control, as illustrated by Fig. \ref{fig1}. The main contributions for our control-theoretic design approach to the iterative methods are summarized as follows.
\begin{enumerate}
\item
A tight connection is reasonably established between the fundamental observability properties for control systems and the (least squares) solutions for LAEs, which is due to the incorporation of the design idea for state observers into the problem-solving of LAEs. It is also revealed that for any LAE, together with the linear system associated with it, the general solutions of the LAE exactly span the unobservable subspace of its associated linear system. In contrast to this, the particular (least squares) solutions of the LAE have close connections to the observable states of its associated linear system. For clarity, an illustration of these facts is provided in Fig. \ref{fig2}.

\begin{figure}
\centering
\includegraphics[width=4in]{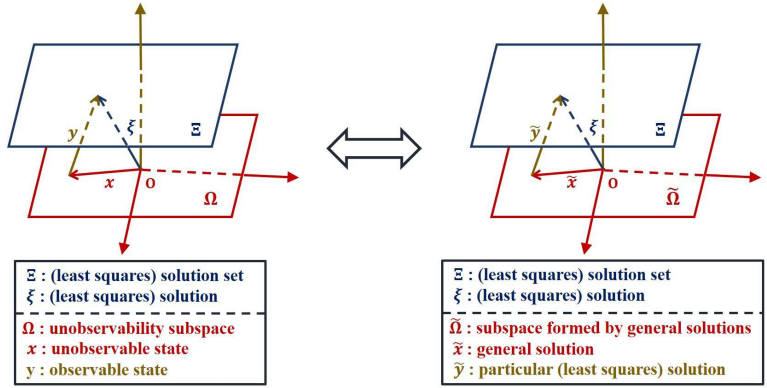}
\caption{The additive decomposition of any (least squares) solution to an LAE from the observerability perspective.}\label{fig2}
\end{figure}

\item
A new iterative solution algorithm that involves a control gain matrix (i.e., an observer gain matrix) to be designed is proposed for any LAE, regardless of whether the LAE is solvable or not, and whether it admits a unique (least squares) solution or multiple (least squares) solutions. It is shown that for any solvable (respectively, unsolvable) LAEs, all solutions (respectively, least squares solutions) of them can be determined in an analytical form. Further, for any LAE, a deterministic relation between the initial conditions of our iterative solution algorithm and all the (least squares) solutions of it can be developed.

\item
Conditions for not only the exponential convergence but also the monotonic convergence of our iterative solution algorithm are presented in the presence of any LAEs. To further improve its convergence speed, the design idea of deadbeat control is leveraged to establish conditions that can reach the finite-iteration convergence of our iterative solution algorithm. Our design approaches to monotonic convergence and finite-iteration convergence may help to provide new ways and ideas for bettering the exponential convergence results derived by the existing control-based iterative methods for solving LAEs (see, e.g., \cite{bk:03,bk:07,hjl:06,hj:05,yst:16,ae:20,mlm:15,lmnb:17,ygqyw:20}).

\item
Our iterative method for solving LAEs is applied to give a new design and analysis framework of the conventional 2-D ILC systems, under which the perfect tracking tasks are realized through the direct seeking of desired inputs. This leads to ILC algorithms that fall beyond the typical ILC design framework. In particular, the perfect tracking of ILC can be achieved for linear systems, even without requiring the full rank condition about their first nonzero Markov parameter matrices, which is generally regarded as a basic condition for the implementation of ILC (see, e.g., \cite{sw:02,xypy:16,mm:171,szwc:16}). As a consequence, the relation between the ILC design and the popular feedback-based control design is bridged. Besides, by certain specific selections of the gain matrix, the perfect tracking objectives of ILC can be accomplished only within finite-iteration steps.
\end{enumerate}

\noindent From these aspects, the interaction between designing ILC and solving LAEs is also verified, as stated in the Subsection \ref{sec11}.

We organize the remaining sections of our paper as follows. In Section \ref{sec2}, we introduce the problem description for solving LAEs, of which the equivalence relation with the state observer design problem for linear systems is established in Section \ref{sec7}. Despite any LAEs, we develop the design and analysis results of our proposed iterative solution algorithm in Section \ref{sec3}, and moreover provide it with a finite-iteration convergence result in Section \ref{sec4}. We consider the application of our iterative method for solving LAEs to the conventional 2-D ILC systems design in Section \ref{sec5}, and then make some concluding remarks about its validity and effectiveness in Section \ref{sec6}.

Before ending this section, we introduce some notations that are needed in the following discussions.

{\it Notations:} Let $\mathbb{Z}_{+}=\{0, 1, 2, \cdots\}$ be the set of nonnegative integers. Denote $\mathbb{Z}_N=\{0, 1, \cdots, N \}$ for $N\in\mathbb{Z}_{+}$ and $N\neq0$, and $I_{n}\in\mathbb{R}^{n\times n}$ as the identity matrix. For any $\bm{x}\in\mathbb{R}^{n}$, $\|\bm{x}\|$ is its any vector norm, and $\|\bm{x}\|_{2}$ is its specific Euclidean norm. For any $\bm{A}\in\mathbb{R}^{m\times n}$, $\|\bm{A}\|$ is its any matrix norm, where particularly $\|\bm{A}\|_{2}$ and $\|\bm{A}\|_{F}$ denote its $2$-norm and Frobenius norm, respectively. We define the generalized inverse matrix set $\bm{A}\{1,3\}$ of $\bm{A}$ as
\[\bm{A}\{1,3\}=\left\{\bm{X}\in\mathbb{R}^{n\times m}\big|\bm{A}\bm{X}\bm{A}=\bm{A},\left(\bm{A}\bm{X}\right)^{\tp}=\bm{A}\bm{X}\right\}
\]

\noindent where any $\bm{X}\in\bm{A}\{1,3\}$ is called a $\{1,3\}$-inverse of $\bm{A}$, and also denoted directly by $\bm{A}^{(1,3)}$ (see Definition 1 of \cite[Chapter 1]{bg:03}). In the case of any square matrix $\bm{A}\in\mathbb{R}^{n\times n}$, its spectral radius is denoted as $\rho(\bm{A})=\max_{i=1,2,\cdots,n}\left\{\left|\lambda_{i}(\bm{A})\right|\right\}$, where $\lambda_{i}(\bm{A})$ denotes any (or the $i$th) eigenvalue of $\bm{A}$. If $\lambda_{i}(\bm{A})\in\mathbb{R}$, $\forall i=1$, $2$, $\cdots$, $n$, then we denote $\lambda_{\max}(\bm{A})=\max_{i=1,2,\cdots,n}\left\{\lambda_{i}(\bm{A})\right\}$. For any set $\bm{\Omega}\subseteq\mathbb{R}^{n}$ and $\bm{x}\in\mathbb{R}^{n}$, let $\bm{\Omega}+\bm{x}=\left\{\bm{x}+\bm{y}\big|\bm{y}\in\bm{\Omega}\right\}$, and denote $\bm{\Omega}^{\perp}$ as the orthogonal complement subspace of $\bm{\Omega}$ when it further constitutes a linear subspace in $\mathbb{R}^{n}$.

\section{Problem Description of Solving LAEs}\label{sec2}

Consider an LAE with any nonzero mapping/transfer matrix $\bm{G}\in\mathbb{R}^{p\times q}$ and any nonzero vector $\bm{Y}_{d}\in\mathbb{R}^{p}$ such that
\begin{equation}\label{eq1}
\bm{Y}_{d}=\bm{G}\bm{U}_{d}~~\hbox{for}~~\bm{U}_{d}\in\mathbb{R}^{q}.
\end{equation}

\noindent For any given $\bm{Y}_{d}\in\mathbb{R}^{p}$, our objective is to establish an iterative method for the solving of the LAE (\ref{eq1}), based on which we can obtain the (least squares) solution $\bm{U}_{d}\in\mathbb{R}^{q}$. If there exist some (respectively, no) solutions for the LAE (\ref{eq1}), then we say that it is {\it solvable} (respectively, {\it unsolvable}). Thus, we further target at developing an analytical description as a linear function of the initial condition for all the solutions (respectively, least squares solutions), regardless of any solvable (respectively, unsolvable) LAE (\ref{eq1}). 

Though there are many effective iterative methods given for solving LAEs, they may either impose some constraints on the LAEs or be incapable of deriving the analytical description of all (least squares) solutions for LAEs (see, e.g.,
\cite{bk:03,bk:07,hjl:06,hj:05,yst:16,ae:20,mlm:15,lmnb:17,ygqyw:20}). In contrast, the idea of ``control design'' that is exploited based on the methods from the mathematics may help to overcome these drawbacks by introducing the design freedom and may, in turn, provide an effective and systematic way to promote particular investigations of the mathematics. We therefore plan to explore a new iterative method that incorporates the control design for the problem-solving of the LAE (\ref{eq1}), and moreover develop the interaction between the control and the mathematics. There are generally two dual approaches to achieving the control design, namely, controller design and observer design, where {\it we focus on using the observer design in the current paper to show how the control design betters iterative methods in solving LAEs.}

\section{An Observer Perspective for Solving LAEs}\label{sec7}

In this section, we first implement a problem transformation for the solving of LAEs by bridging the connection between it and the state observer design of linear systems, based on which we then contribute to establishing the relationship between the solving of LAEs and the observability of linear systems.

\subsection{Problem Transformation: An Observer Design Perspective}

To accomplish the abovementioned objectives of solving the LAE (\ref{eq1}), we provide a new perspective into iterative methods from the observer design of control systems. By this viewpoint, we start with the hypothesis of the solvability for the LAE (\ref{eq1}), and can include (\ref{eq1}) as the trivial steady-state case of the output equation of a discrete control system. To be specific, let $k\in\mathbb{Z}_{+}$ be an iteration index, and denote $\bm{U}_{d}(k)\equiv\bm{U}_{d}$ and  $\bm{Y}_{d}(k)\equiv\bm{Y}_{d}$ for all $k\in\mathbb{Z}_{+}$ such that we can reformulate (\ref{eq1}) in the form of a discrete control system as
\begin{equation}\label{eq2}
\left\{\aligned
\bm{U}_{d}(k+1)&=\bm{U}_{d}(k)+\bm{V}_{d}(k)\\
\bm{Y}_{d}(k)&=\bm{G}\bm{U}_{d}(k)
\endaligned\right.,\quad\forall k\in\mathbb{Z}_{+}
\end{equation}

\noindent where $\bm{V}_{d}(k)\equiv0$, $\forall k\in\mathbb{Z}_{+}$ is just the trivial zero control input. Clearly, $\bm{U}_d(k)$ and $\bm{Y}_d(k)$ are exactly the state and the output of the system \eqref{eq2}, respectively, and the structure of this system is identified with known matrices. Of special note is the design objective of state observers for the system (\ref{eq2}), focusing on how to acquire the state $\bm{U}_{d}(k)$ (or exactly $\bm{U}_{d}$) in the sense of the asymptotic equivalence \cite{am:06}, which coincides with the solving objective of the LAE (\ref{eq1}) for iterative methods. Thanks to this observation, {\it the solving problem of the LAE (\ref{eq1}) can therefore be transformed into a well-known observability problem in the systems and control: whether and how can the system state of the system (\ref{eq2}) be determined in the presence of the knowledge of the measured output and the known system structure?}

To proceed with the discussion on the system (\ref{eq2}), we denote $\bm{U}_{k}$ as the ``state estimation'' at the iteration $k$ for $\bm{U}_{d}$ that fulfills the LAE (\ref{eq1}) for the given $\bm{Y}_{d}$. Hence, our objective can further turn to the design of appropriate iterative algorithms such that the resulting sequence $\left\{\bm{U}_{k}:k\in\mathbb{Z}_{+}\right\}$ converges to the solution of the solvable LAE (\ref{eq1}). Motivated by these facts, we may then employ the design ideas of ``state observers'' for linear systems to achieve the objective of solving LAEs.

Note that the aforementioned discussions are made based on the hypothesis of the solvability for the LAE (\ref{eq1}). For the other case that the LAE (\ref{eq1}) is unsolvable, $\bm{Y}_{d}-\bm{G}\widetilde{\bm{U}}_{d}\neq0$, $\forall\widetilde{\bm{U}}_{d}\in\mathbb{R}^{q}$ holds for the given $\bm{Y}_{d}\in\mathbb{R}^{p}$. Hence, the system (\ref{eq2}) under any initial condition $\bm{U}_{d}(0)=\bm{U}_{d}$ becomes
\begin{equation}\label{eq59}
\left\{\aligned
\bm{U}_{d}(k+1)&=\bm{U}_{d}(k)+\bm{V}_{d}(k)\\
\widetilde{\bm{Y}}_{d}(k)&=\bm{G}\bm{U}_{d}(k)
\endaligned\right.,\quad\forall k\in\mathbb{Z}_{+}
\end{equation}

\noindent where $\widetilde{\bm{Y}}_{d}(k)\equiv\widetilde{\bm{Y}}_{d}=\bm{G}\bm{U}_{d}$, $\forall k\in\mathbb{Z}_{+}$. This system clearly shares the same structure as the system (\ref{eq2}), but is subject to a constant steady-state output error $\bm{Y}_{d}-\bm{G}\bm{U}_{d}\neq0$. We may consequently expect to validate that the design idea of $\left\{\bm{U}_{k}:k\in\mathbb{Z}_{+}\right\}$ based on the state observers still works effectively in the presence of an unsolvable LAE (\ref{eq1}), and further modify the design objective of this sequence to guarantee it to converge to the least squares solution of (\ref{eq1}), which thus minimizes the resulting steady-state output error in the sense of the Euclidean norm.

\subsection{Observability and the Solving of LAEs}

In this subsection, we disclose how the observability affects the solving of LAEs, based on which we use the design method of state observers to gain an observer-based iterative algorithm for solving the LAE (\ref{eq1}). For convenience, we refer the readers directly to, e.g., \cite{am:06} for concepts related with the observability of linear systems. We also simply call the observability instead of the complete (state) observability if no confusions can arise.

We consider the system (\ref{eq2}) that is constructed from the LAE (\ref{eq1}), and denote $\mathcal{O}_{NO}$ as its {\it unobservability subspace} according to \cite[Definition 3.13]{am:06}. For clarity, we also denote $\mathcal{O}_{O}=\mathcal{O}_{NO}^{\perp}$ as the {\it observability subspace} of the system (\ref{eq2}). We can hence present the following properties of $\mathcal{O}_{NO}$ and $\mathcal{O}_{O}$.

\begin{lem}\label{lem6}
For the system (\ref{eq2}), it holds:
\begin{enumerate}
\item
$\mathcal{O}_{NO}=\rm{null}\bm{G}$;

\item
$\mathcal{O}_{O}=\rm{span}\bm{G}^{\tp}$.
\end{enumerate}
\end{lem}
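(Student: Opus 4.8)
The plan is to unwind the definition of the unobservability subspace $\mathcal{O}_{NO}$ for the specific system (\ref{eq2}) and show it collapses exactly to $\nl\bm{G}$, after which part (2) follows immediately by taking orthogonal complements. Recall that the system (\ref{eq2}) has state-transition matrix $\bm{A}=I_q$ and output matrix $\bm{C}=\bm{G}$. By \cite[Definition 3.13]{am:06}, the unobservability subspace consists of all states $\bm{x}\in\mathbb{R}^q$ that produce identically zero output under the free (zero-input) response; equivalently, $\mathcal{O}_{NO}$ is the null space of the observability matrix
\begin{equation}\label{eqobs}
\mathcal{Q}=\begin{bmatrix}\bm{C}\\ \bm{C}\bm{A}\\ \vdots\\ \bm{C}\bm{A}^{q-1}\end{bmatrix}=\begin{bmatrix}\bm{G}\\ \bm{G}\\ \vdots\\ \bm{G}\end{bmatrix},
\end{equation}
where the second equality uses $\bm{A}=I_q$, so every block row equals $\bm{G}$.

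First I would observe that stacking identical copies of $\bm{G}$ does not change the null space: $\mathcal{Q}\bm{x}=0$ if and only if $\bm{G}\bm{x}=0$, since each block row of $\mathcal{Q}\bm{x}$ is exactly $\bm{G}\bm{x}$. Hence $\nl\mathcal{Q}=\nl\bm{G}$, which establishes part (1), namely $\mathcal{O}_{NO}=\nl\bm{G}$. This is the substantive content of the lemma; everything else is bookkeeping. For part (2), I would invoke the standard linear-algebra identity that for any matrix $\bm{G}\in\mathbb{R}^{p\times q}$ the orthogonal complement of its null space equals the row space, i.e. $\left(\nl\bm{G}\right)^{\perp}=\sn\bm{G}^{\tp}$. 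Combining this with part (1) and the paper's definition $\mathcal{O}_{O}=\mathcal{O}_{NO}^{\perp}$ yields
\begin{equation}\label{eqOO}
\mathcal{O}_{O}=\mathcal{O}_{NO}^{\perp}=\left(\nl\bm{G}\right)^{\perp}=\sn\bm{G}^{\tp},
\end{equation}
which is exactly part (2).

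The one point requiring mild care, rather than a genuine obstacle, is matching the paper's chosen definition of $\mathcal{O}_{NO}$ (via \cite[Definition 3.13]{am:06}) to the observability-matrix characterization used in (\ref{eqobs}); depending on whether that reference defines unobservability through the kernel of $\mathcal{Q}$ or directly through the indistinguishability of initial states under zero input, one may need a one-line remark that the two agree for time-invariant systems by the Cayley--Hamilton argument truncating the powers of $\bm{A}$ at $q-1$. Here the argument is especially transparent because $\bm{A}=I_q$ makes all powers trivial, so no Cayley--Hamilton subtlety actually arises. In short, the proof is essentially the single algebraic observation that repeating the rows of $\bm{G}$ preserves its null space, and the main thing to get right is simply citing the complement identity $\left(\nl\bm{G}\right)^{\perp}=\sn\bm{G}^{\tp}$ cleanly so that part (2) reads off without further work.
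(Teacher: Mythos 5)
Your proof is correct and follows essentially the same route as the paper: the paper's part (1) simply cites \cite[Theorem 3.11]{am:06} (the kernel-of-observability-matrix characterization), which is exactly what you make explicit by noting $\bm{A}=I_q$ forces every block row of the observability matrix to equal $\bm{G}$, and the paper's part (2) likewise follows from $\mathcal{O}_{O}=\mathcal{O}_{NO}^{\perp}$ together with the dimension identity $q=\mathrm{dim}\left(\nl\bm{G}\right)+\mathrm{dim}\left(\sn\bm{G}^{\tp}\right)$, which is just the dimension-counting form of your complement identity $\left(\nl\bm{G}\right)^{\perp}=\sn\bm{G}^{\tp}$.
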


\begin{proof}
{\it1):} A consequence of applying \cite[Theorem 3.11]{am:06} directly to the system (\ref{eq2}).

{\it2):} A counterpart result of 1) by noting two facts that $\mathcal{O}_{O}=\mathcal{O}_{NO}^{\perp}$ and $q=\rm{dim}\left(\rm{null}\bm{G}\right)+\rm{dim}\left(\rm{span}\bm{G}^{\tp}\right)$.
\end{proof}

Since $\rm{null}\bm{G}$ is exactly the solution space of the corresponding homogeneous equation $\bm{G}\bm{U}_{d}=0$ for the LAE (\ref{eq1}), Lemma \ref{lem6} discloses actually that $\mathcal{O}_{NO}$ consisting of all the unobservable states of the system (\ref{eq2}) coincides with this solution space. As a counterpart, we will reveal that $\mathcal{O}_{O}$ is connected closely to the particular solutions for the LAE (\ref{eq1}). Thanks to $\mathcal{O}_{O}=\rm{span}\bm{G}^{\tp}$, the system (\ref{eq2}) is observable if and only if ${\rm{span}}\bm{G}^{\tp}=\mathbb{R}^{q}$ holds. We can thus give equivalent conditions on the observability of the system (\ref{eq2}) in the following lemma.

\begin{lem}\label{lem7}
The system (\ref{eq2}) is observable if and only if any of the following statements holds.
\begin{enumerate}
%
\item
The system (\ref{eq2}) is detectable.

\item
There exists some matrix $\bm{F}\in\mathbb{R}^{q\times p}$ such that
\begin{equation}\label{eq5}
\rho\left(I_{q}-\bm{F}\bm{G}\right)<1.
\end{equation}

\item
The matrix $\bm{G}$ has full-column rank, i.e., $\rank\left(\bm{G}\right)=q$.
\end{enumerate}
\end{lem}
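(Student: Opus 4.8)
The plan is to reduce all three statements to the single fact, already supplied by Lemma~\ref{lem6}, that the system \eqref{eq2} is observable if and only if $\mathcal{O}_{NO}=\nl\bm{G}=\{0\}$, and to exploit the special structure of \eqref{eq2}: since $\bm{V}_d(k)\equiv0$ forces $\bm{U}_d(k+1)=\bm{U}_d(k)$, its state-transition matrix is the identity $I_q$. The one observation that drives everything is that every mode of \eqref{eq2} carries the eigenvalue $1$, so in the discrete-time setting no nontrivial unobservable mode can ever be stable. I would therefore organize the argument in a hub-and-spoke fashion, proving each of 1), 2), 3) equivalent to observability rather than chaining them directly.

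First I would dispatch statement 3): observability means $\mathcal{O}_{NO}=\{0\}$, which by Lemma~\ref{lem6} reads $\nl\bm{G}=\{0\}$, and since $\bm{G}\in\mathbb{R}^{p\times q}$ this is exactly $\rank(\bm{G})=q$, i.e. full-column rank; no computation is needed. For statement 1), observability trivially implies detectability, so only the converse needs care. Here I would invoke the definition of detectability — every unobservable eigenvalue lies strictly inside the unit circle — together with the fact that the state matrix of \eqref{eq2} is $I_q$, whose spectrum is $\{1\}$. Any nonzero vector in $\mathcal{O}_{NO}=\nl\bm{G}$ is an unobservable mode with eigenvalue $1$, which sits on the unit circle and hence is not stable; detectability thus forces $\nl\bm{G}=\{0\}$, i.e. observability.

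For statement 2) I would argue both directions through the same eigenvalue-$1$ mechanism. For sufficiency, suppose some $\bm{F}$ gives $\rho(I_q-\bm{F}\bm{G})<1$ while the system is not observable; picking $0\neq\bm{v}\in\nl\bm{G}$ yields $(I_q-\bm{F}\bm{G})\bm{v}=\bm{v}-\bm{F}\bm{G}\bm{v}=\bm{v}$, so $1$ is an eigenvalue of $I_q-\bm{F}\bm{G}$ and $\rho(I_q-\bm{F}\bm{G})\geq1$, a contradiction. For necessity, when $\bm{G}$ has full-column rank the matrix $\bm{G}^{\tp}\bm{G}$ is nonsingular, so the explicit choice $\bm{F}=(\bm{G}^{\tp}\bm{G})^{-1}\bm{G}^{\tp}$ gives $I_q-\bm{F}\bm{G}=0$ and hence $\rho(I_q-\bm{F}\bm{G})=0<1$; this $\bm{F}$ is precisely the pseudoinverse-type gain that will later seed the observer design. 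I note in passing that 1) and 2) together just re-express the standard detectability characterization of the pair $(I_q,\bm{G})$, but proving each against observability keeps the argument self-contained.

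The individual steps are short, so the main obstacle is conceptual rather than computational: one must recognize and state cleanly that the identity state matrix collapses detectability onto observability, since otherwise statement 1) looks strictly weaker than observability and the equivalence seems surprising. Once the eigenvalue-$1$ structure is isolated, the detectability direction of 1) and the spectral-radius direction of 2) both follow from the same one-line null-space identity $(I_q-\bm{F}\bm{G})\bm{v}=\bm{v}$ for $\bm{v}\in\nl\bm{G}$, and the only remaining ingredient is the explicit full-rank construction of $\bm{F}$.
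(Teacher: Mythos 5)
Your proof is correct and takes essentially the same route the paper intends: the paper's one-line proof defers to standard linear system theory ``by Lemma \ref{lem6} and due to the specific structure of (\ref{eq2}),'' and your argument is precisely that — using $\mathcal{O}_{NO}=\nl\bm{G}$ together with the identity state matrix (so every mode has eigenvalue $1$, collapsing detectability onto observability, and forcing $(I_q-\bm{F}\bm{G})\bm{v}=\bm{v}$ for $\bm{v}\in\nl\bm{G}$), plus the explicit gain $\bm{F}=\left(\bm{G}^{\tp}\bm{G}\right)^{-1}\bm{G}^{\tp}$ for the converse of 2). You have simply written out in full the details the paper leaves to the cited references, and all steps are sound.
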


\begin{proof}
Three equivalent conditions follow easily from the standard linear system theory and matrix theory (see, e.g.,\cite{am:06,lt:85}) by Lemma \ref{lem6} and due to the specific structure of (\ref{eq2}).
\end{proof}

From Lemma \ref{lem7}, we know that owing to the specific structure for the system (\ref{eq2}), the observability and the detectability of it are equivalent, and correspond to the full-column rank case of $\bm{G}$ for the LAE (\ref{eq1}). To proceed further with Lemma \ref{lem7}, we show the specific solution of the LAE \eqref{eq1} under the observability of the system (\ref{eq2}).

\begin{lem}\label{lem8}
For any solvable (respectively, unsolvable) LAE (\ref{eq1}), there exists a unique solution (respectively, a unique least squares solution), which is determined by
\[\bm{U}_{d}
=\left(\bm{G}^{\tp}\bm{G}\right)^{-1}\bm{G}^{\tp}\bm{Y}_{d}\]

\noindent if and only if the system (\ref{eq2}) is observable; and there exist multiple solutions (respectively, multiple least squares solutions), otherwise.
\end{lem}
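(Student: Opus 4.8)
The plan is to translate the observability condition into a purely algebraic rank condition and then invoke the normal-equation characterization of (least squares) solutions, handling the solvable and unsolvable cases in a unified way. First I would use Lemma \ref{lem7} to replace the observability of the system (\ref{eq2}) by the equivalent statement $\rank(\bm{G})=q$, and equivalently (via Lemma \ref{lem6}) by $\nl\bm{G}=\{0\}$, so that the entire assertion reduces to a statement about the rank of $\bm{G}$. The second ingredient is to observe that an exact solution of a solvable LAE is nothing but a least squares solution with zero residual; hence both cases can be treated simultaneously by studying the minimizers of $\|\bm{Y}_{d}-\bm{G}\bm{U}\|_{2}$ over $\bm{U}\in\mathbb{R}^{q}$.

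Next I would recall the standard normal-equation characterization: $\bm{U}_{d}$ minimizes $\|\bm{Y}_{d}-\bm{G}\bm{U}\|_{2}$ if and only if $\bm{G}^{\tp}\bm{G}\bm{U}_{d}=\bm{G}^{\tp}\bm{Y}_{d}$, which follows from the projection theorem (the residual $\bm{Y}_{d}-\bm{G}\bm{U}_{d}$ must be orthogonal to $\sn\bm{G}$) and, in particular, guarantees that the normal equations are always consistent. In the solvable case the minimum value is zero and the minimizer set coincides with the exact solution set, whereas in the unsolvable case the minimum is positive and the minimizer set is exactly the least squares solution set. A key auxiliary fact I would establish is $\nl(\bm{G}^{\tp}\bm{G})=\nl\bm{G}$, read off from the identity $\bm{U}^{\tp}\bm{G}^{\tp}\bm{G}\bm{U}=\|\bm{G}\bm{U}\|_{2}^{2}$, which ties the invertibility of $\bm{G}^{\tp}\bm{G}$ directly to the full-column-rank condition.

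With these in hand, the two directions follow quickly. For the observable (hence full-column-rank) direction, $\nl(\bm{G}^{\tp}\bm{G})=\nl\bm{G}=\{0\}$ shows that the $q\times q$ matrix $\bm{G}^{\tp}\bm{G}$ is invertible, so the normal equations admit the single solution $\bm{U}_{d}=(\bm{G}^{\tp}\bm{G})^{-1}\bm{G}^{\tp}\bm{Y}_{d}$; this is the unique minimizer, and in the solvable subcase I would verify that it satisfies $\bm{G}\bm{U}_{d}=\bm{Y}_{d}$, since any exact solution $\bm{U}^{*}$ obeys $\bm{G}^{\tp}\bm{Y}_{d}=\bm{G}^{\tp}\bm{G}\bm{U}^{*}$ and hence equals this expression. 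Conversely, when the system is not observable, $\nl\bm{G}\neq\{0\}$, so for any (least squares) solution $\bm{U}_{d}$ every vector $\bm{U}_{d}+\bm{z}$ with $\bm{z}\in\nl\bm{G}$ is again a (least squares) solution, producing more than one (indeed infinitely many).

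I expect the main obstacle to be organizational rather than technical: making the solvable/unsolvable dichotomy rigorous inside a single least squares framework, and confirming that the normal equations are consistent in the unsolvable case so that a least squares solution genuinely exists (this rests on $\sn(\bm{G}^{\tp}\bm{G})=\sn\bm{G}^{\tp}$, the range counterpart of the nullspace identity above). Once the normal-equation description and the nullspace identity are secured, the uniqueness-versus-multiplicity trichotomy and the closed-form expression are immediate consequences of the invertibility of $\bm{G}^{\tp}\bm{G}$.
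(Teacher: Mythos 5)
Your proof is correct, but it takes a different route from the paper's. The paper's own proof is essentially a proof by citation: for the solvable case it combines Lemma \ref{lem7} with Corollary 3.14 of Antsaklis--Michel (observability of the system (\ref{eq2}) implies the state is uniquely determined from the output, which here directly yields the unique solution), and for the unsolvable case it invokes two least-squares results from Lancaster--Tismenetsky. You instead give a self-contained algebraic argument: reduce observability to $\rank(\bm{G})=q$ via Lemma \ref{lem7}, characterize both exact and least squares solutions in one framework through the normal equations $\bm{G}^{\tp}\bm{G}\bm{U}_{d}=\bm{G}^{\tp}\bm{Y}_{d}$, establish the key identities $\nl(\bm{G}^{\tp}\bm{G})=\nl\bm{G}$ and $\sn(\bm{G}^{\tp}\bm{G})=\sn\bm{G}^{\tp}$ (the latter giving consistency, hence existence, of least squares solutions in the unsolvable case), and then read off uniqueness versus multiplicity from the invertibility of $\bm{G}^{\tp}\bm{G}$ and the translation $\bm{U}_{d}+\bm{z}$, $\bm{z}\in\nl\bm{G}$. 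Your treatment of both directions of the equivalence (observable $\Rightarrow$ unique with the stated formula; unobservable $\Rightarrow$ multiple, giving the converse by contraposition) is complete. What your approach buys is transparency and independence from external references, and it makes explicit the unification of the solvable and unsolvable cases that the paper leaves implicit; what the paper's approach buys is brevity and a framing that stays inside the system-theoretic narrative (observability as unique state reconstruction), which is the conceptual point the section is building toward.
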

%

\begin{proof}
If the LAE (\ref{eq1}) is solvable, then with Lemma \ref{lem7}, this proof is a consequence of considering \cite[Corollary 3.14]{am:06} for the system (\ref{eq2}). Otherwise, it follows by applying the results of Lemma \ref{lem7}, Exercise 1 of \cite[Subsection 12.8]{lt:85}, and Theorem 1 of \cite[Subsection 12.9]{lt:85}.
\end{proof}

From Lemma \ref{lem8}, the observability of the system (\ref{eq2}) provides a necessary and sufficient guarantee for the uniqueness of the solution to the solvable LAE (\ref{eq1}). It is generally a basic premise for the design of state observers for the system (\ref{eq2}). Inspired by this connection, we may pose an iterative method in the design framework of state observers to arrive at the exact solutions of solvable LAEs. Otherwise, if the system (\ref{eq2}) is not completely observable, then according to Lemma \ref{lem8}, the solvable LAE (\ref{eq1}) has multiple solutions, which can be given in the form of (see, e.g., Theorem 2 of \cite[Subsection 3.10]{lt:85})
\[
\bm{U}_{d}=\bm{U}_{d}^{\prime}+\bm{U}_{d}^{\prime\prime}
\]

\noindent where $\bm{U}_{d}^{\prime}$ is a particular solution of (\ref{eq1}) and $\bm{U}_{d}^{\prime\prime}$ is some vector belonging to $\rm{null}\bm{G}$. In this case, Lemma \ref{lem6} helps to distinguish the observable and unobservable states of the system (\ref{eq2}). Since $\bm{U}_{d}^{\prime\prime}$ corresponds to the unobservable state based on Lemma \ref{lem6}, it is only left to determine a particular solution $\bm{U}_{d}^{\prime}$ in order to derive the solution $\bm{U}_{d}$ for the LAE (\ref{eq1}), for which the observer-based iterative method may still work effectively once we only focus on the observable states of the system (\ref{eq2}). Similarly, for the unsolvable LAEs, we may try to explore the observer-based iterative method to obtain the least squares solutions for them. 

Thanks to the aforementioned observations, we consider the system (\ref{eq2}), and then can leverage the idea of designing its state observer to arrive at an iterative algorithm as
\begin{equation}\label{eq4}
\aligned
\bm{U}_{k+1}
&=\bm{U}_{k}+\bm{V}_{d}(k)+\bm{F}\left(\bm{Y}_{d}-\bm{Y}_{k}\right)\\
&=\left(I_{q}-\bm{F}\bm{G}\right)\bm{U}_{k}+\bm{F}\bm{Y}_{d},\quad\forall k\in\mathbb{Z}_{+}
\endaligned
\end{equation}

\noindent where $\bm{F}\in\mathbb{R}^{q\times p}$ is an observer gain matrix to be designed, and $\bm{Y}_{k}$ denotes the estimation of $\bm{Y}_{d}$ at the $k$th iteration satisfying 
\begin{equation}\label{eq3}
\bm{Y}_{k}=\bm{G}\bm{U}_{k},\quad\forall k\in\mathbb{Z}_{+}.
\end{equation}

\noindent Of particular note is that (\ref{eq4}) and (\ref{eq3}) form a linear system with the same structure as the system (\ref{eq2}), except the ``feedback term $\bm{F}\left(\bm{Y}_{d}-\bm{Y}_{k}\right)$.'' This leads to a direct result in the lemma below.

\begin{lem}\label{lem1}
For any LAE (\ref{eq1}), the related system (\ref{eq2}) and that formed by (\ref{eq4}) and (\ref{eq3}) share the same observability properties. In particular, for any nonsingular matrix $\bm{P}\in\mathbb{R}^{q\times q}$, the use of $\overline{\bm{U}}_{d}(k)=\bm{P}^{-1}\bm{U}_{d}(k)$ yields an equivalent internal representation of the system (\ref{eq2}) as
\begin{equation*}\label{}
\left\{\aligned
\overline{\bm{U}}_{d}(k+1)&=\overline{\bm{U}}_{d}(k)+\bm{P}^{-1}\bm{V}_{d}(k)\\
\bm{Y}_{d}(k)&=\overline{\bm{G}}\,\overline{\bm{U}}_{d}(k)
\endaligned\right.,\quad\forall k\in\mathbb{Z}_{+}
\end{equation*}

\noindent and the use of $\overline{\bm{U}}_{k}=\bm{P}^{-1}\bm{U}_{k}$ for the system formed by (\ref{eq4}) and (\ref{eq3}) leads to an equivalent internal representation of it as
\begin{equation*}\label{}
\left\{\aligned
\overline{\bm{U}}_{k+1}&=\left(I_{q}-\overline{\bm{F}}\,\overline{\bm{G}}\right)\overline{\bm{U}}_{k}+\overline{\bm{F}}\bm{Y}_{d}\\
\bm{Y}_{k}&=\overline{\bm{G}}\,\overline{\bm{U}}_{k}
\endaligned\right.,\quad\forall k\in\mathbb{Z}_{+}
\end{equation*}

\noindent where $\overline{\bm{F}}=\bm{P}^{-1}\bm{F}$ and $\overline{\bm{G}}=\bm{G}\bm{P}$.
\end{lem}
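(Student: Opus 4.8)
The plan is to treat the statement as two essentially separate claims: first, that appending the output-injection term $\bm{F}\left(\bm{Y}_{d}-\bm{Y}_{k}\right)$ to the plant \eqref{eq2} leaves its observability structure untouched; and second, that the two displayed representations are nothing more than the coordinate change $\bm{U}_{d}(k)=\bm{P}\overline{\bm{U}}_{d}(k)$ (respectively $\bm{U}_{k}=\bm{P}\overline{\bm{U}}_{k}$) applied to \eqref{eq2} and to \eqref{eq4}--\eqref{eq3}. Neither part requires computation beyond matrix bookkeeping once the right characterization is in hand.

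For the first part, I would anchor everything on Lemma \ref{lem6}, which identifies the unobservability subspace of \eqref{eq2} as $\nl\bm{G}$. The system formed by \eqref{eq4} and \eqref{eq3} differs from \eqref{eq2} only in that its state-transition matrix is $I_{q}-\bm{F}\bm{G}$ rather than $I_{q}$, while its output matrix is still $\bm{G}$; that is, it is precisely \eqref{eq2} subjected to output injection. The key observation is that $\nl\bm{G}$ is left invariant by this modification: for any $\bm{x}$ with $\bm{G}\bm{x}=0$ one has $\left(I_{q}-\bm{F}\bm{G}\right)\bm{x}=\bm{x}$, so the trajectory generated from such $\bm{x}$ stays inside $\nl\bm{G}$ and is annihilated by $\bm{G}$ at every step. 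A short induction on the blocks $\bm{G}\left(I_{q}-\bm{F}\bm{G}\right)^{j}$ of the observability matrix then shows $\nl\bm{G}$ lies in the unobservable subspace of the observer system, while the converse inclusion is immediate from the leading block $\bm{G}$. Hence both systems have unobservability subspace $\nl\bm{G}$, share the same $\mathcal{O}_{O}=\mathcal{O}_{NO}^{\perp}$, and therefore share all observability properties (observable/detectable status and the dimensions of $\mathcal{O}_{O}$ and $\mathcal{O}_{NO}$).

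For the second part, the plan is direct substitution. Writing $\bm{U}_{d}(k)=\bm{P}\overline{\bm{U}}_{d}(k)$ in \eqref{eq2} and left-multiplying the state equation by $\bm{P}^{-1}$ yields the input matrix $\bm{P}^{-1}$, and in the output equation gives $\bm{Y}_{d}(k)=\bm{G}\bm{P}\overline{\bm{U}}_{d}(k)=\overline{\bm{G}}\,\overline{\bm{U}}_{d}(k)$ with $\overline{\bm{G}}=\bm{G}\bm{P}$. For the observer, writing $\bm{U}_{k}=\bm{P}\overline{\bm{U}}_{k}$ in \eqref{eq4} and left-multiplying by $\bm{P}^{-1}$ turns the transition matrix into the similar matrix $\bm{P}^{-1}\left(I_{q}-\bm{F}\bm{G}\right)\bm{P}=I_{q}-\left(\bm{P}^{-1}\bm{F}\right)\left(\bm{G}\bm{P}\right)=I_{q}-\overline{\bm{F}}\,\overline{\bm{G}}$ and the forcing term into $\bm{P}^{-1}\bm{F}\bm{Y}_{d}=\overline{\bm{F}}\bm{Y}_{d}$ with $\overline{\bm{F}}=\bm{P}^{-1}\bm{F}$, while the output equation becomes $\bm{Y}_{k}=\overline{\bm{G}}\,\overline{\bm{U}}_{k}$. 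These are exactly the two displayed representations.

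I do not anticipate a genuine obstacle, since both assertions are standard facts of linear system theory---invariance of observability under output injection and under similarity transformation. The only point demanding care is the factoring $\bm{P}^{-1}\left(I_{q}-\bm{F}\bm{G}\right)\bm{P}=I_{q}-\overline{\bm{F}}\,\overline{\bm{G}}$, where one must insert $\bm{P}\bm{P}^{-1}$ so that $\overline{\bm{F}}$ and $\overline{\bm{G}}$ appear in the stated grouping rather than as an unfactored $\bm{P}^{-1}\bm{F}\bm{G}\bm{P}$; checking that $\overline{\bm{F}}\,\overline{\bm{G}}=\left(\bm{P}^{-1}\bm{F}\right)\left(\bm{G}\bm{P}\right)$ agrees with the definitions of $\overline{\bm{F}}$ and $\overline{\bm{G}}$ is the one line that closes the argument.
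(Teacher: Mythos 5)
Your proposal is correct and is essentially the paper's own route: the paper gives no explicit proof of Lemma~\ref{lem1}, presenting it as a ``direct result'' of the observation that the system formed by (\ref{eq4}) and (\ref{eq3}) is just the system (\ref{eq2}) with an output-injection term $\bm{F}\left(\bm{Y}_{d}-\bm{Y}_{k}\right)$ appended, together with the standard similarity-transformation facts. Your write-up simply supplies the details the paper leaves implicit---the invariance of $\nl\bm{G}$ under $I_{q}-\bm{F}\bm{G}$ (so both pairs $\left(I_{q},\bm{G}\right)$ and $\left(I_{q}-\bm{F}\bm{G},\bm{G}\right)$ have unobservable subspace $\nl\bm{G}$, consistent with Lemma~\ref{lem6}) and the change-of-coordinates bookkeeping $\bm{P}^{-1}\left(I_{q}-\bm{F}\bm{G}\right)\bm{P}=I_{q}-\overline{\bm{F}}\,\overline{\bm{G}}$---and both steps are carried out correctly.
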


We will reveal that the property of Lemma \ref{lem1} in fact ensures the iterative algorithm formed by  (\ref{eq4}) and (\ref{eq3}) works for solving the LAE (\ref{eq1}), regardless of any matrix $\bm{G}$ and any given vector $\bm{Y}_{d}$, where the design of the feedback term plays a key role.

\begin{rem}\label{rem10}
Need of special note is the design of the iterative algorithm (\ref{eq4}) for the unsolvable LAE (\ref{eq1}) to determine the least squares solutions. Since $\bm{U}_{d}$ is a least squares solution of (\ref{eq1}) if and only if it is a solution of the resulting LAE from (\ref{eq1}) such that $\bm{G}\bm{G}^{(1,3)}\bm{Y}_{d}=\bm{G}\bm{U}_{d}$, $\forall\bm{G}^{(1,3)}\in\bm{G}\{1,3\}$ (see, e.g., Corollary 1 of \cite[Chapter 3]{bg:03}), we revisit (\ref{eq59}) by taking $\widetilde{\bm{Y}}_{d}=\bm{G}\bm{G}^{(1,3)}\bm{Y}_{d}$, and then can leverage the state observer design of (\ref{eq59}) to present an iterative algorithm devoted to approaching the least squares solutions of (\ref{eq1}). Namely, in the same way as (\ref{eq4}), we can design
\begin{equation*}\label{}
\aligned
\bm{U}_{k+1}
&=\left(I_{q}-\widetilde{\bm{F}}\bm{G}\right)\bm{U}_{k}+\widetilde{\bm{F}}\widetilde{\bm{Y}}_{d}\\
&=\left(I_{q}-\widetilde{\bm{F}}\bm{G}\right)\bm{U}_{k}+\widetilde{\bm{F}}\bm{G}\bm{G}^{(1,3)}\bm{Y}_{d},\quad\forall k\in\mathbb{Z}_{+}
\endaligned
\end{equation*}

\noindent for some gain matrix $\widetilde{\bm{F}}\in\mathbb{R}^{q\times p}$. Clearly, it can be viewed as a special design of (\ref{eq4}) by setting $\bm{F}=\widetilde{\bm{F}}\bm{G}\bm{G}^{(1,3)}$ since $\bm{F}\bm{G}=\widetilde{\bm{F}}\bm{G}$ is also achieved by this selection of $\bm{F}$ due to $\bm{G}^{(1,3)}\in\bm{G}\{1,3\}$. Furthermore, the application of (\ref{eq4}) naturally avoids calculating a $\{1,3\}$-inverse $\bm{G}^{(1,3)}$ of $\bm{G}$.
\end{rem}

Although the iterative algorithm (\ref{eq4}) applies to any LAE (\ref{eq1}), its application to obtaining the least squares solutions requires a special selection of $\bm{F}$ in the form of $\bm{F}=\widetilde{\bm{F}}\bm{G}\bm{G}^{(1,3)}$  for some $\widetilde{\bm{F}}\in\mathbb{R}^{q\times p}$ and $\bm{G}^{(1,3)}\in\bm{G}\{1,3\}$, for which the following lemma gives an alternative selection condition.

\begin{lem}\label{lem11}
For the iterative algorithm (\ref{eq4}), there exists some matrix $\widetilde{\bm{F}}\in\mathbb{R}^{q\times p}$ such that $\bm{F}=\widetilde{\bm{F}}\bm{G}\bm{G}^{(1,3)}$, $\forall\bm{G}^{(1,3)}\in\bm{G}\{1,3\}$ if and only if the selection of $\bm{F}$ possesses the following property: 
\begin{enumerate}
\item[(P)] $\sn\bm{F}^{\tp}\subseteq\sn\bm{G}$. 
\end{enumerate}
\end{lem}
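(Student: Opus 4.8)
The plan is to reduce the ``for all $\{1,3\}$-inverses'' condition to a single matrix identity by recognizing that $\bm{G}\bm{G}^{(1,3)}$ does not actually depend on the chosen $\{1,3\}$-inverse. Indeed, writing $\bm{X}$ for any member of $\bm{G}\{1,3\}$, the defining relations $\bm{G}\bm{X}\bm{G}=\bm{G}$ and $\left(\bm{G}\bm{X}\right)^{\tp}=\bm{G}\bm{X}$ give $\left(\bm{G}\bm{X}\right)^{2}=\left(\bm{G}\bm{X}\bm{G}\right)\bm{X}=\bm{G}\bm{X}$, so $\bm{G}\bm{X}$ is a symmetric idempotent, i.e., an orthogonal projector. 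Its range satisfies $\sn\left(\bm{G}\bm{X}\right)\subseteq\sn\bm{G}$ trivially, while $\bm{G}\bm{X}\bm{G}=\bm{G}$ forces $\sn\bm{G}\subseteq\sn\left(\bm{G}\bm{X}\right)$; hence $\bm{G}\bm{X}$ is \emph{the} orthogonal projector onto $\sn\bm{G}$. Since such a projector is unique, $\bm{G}\bm{G}^{(1,3)}$ equals one and the same matrix $\bm{P}_{G}$ for every $\bm{G}^{(1,3)}\in\bm{G}\{1,3\}$. Consequently the requirement ``$\bm{F}=\widetilde{\bm{F}}\bm{G}\bm{G}^{(1,3)}$, $\forall\bm{G}^{(1,3)}\in\bm{G}\{1,3\}$'' collapses to the single equation $\bm{F}=\widetilde{\bm{F}}\bm{P}_{G}$, and the quantifier disappears.

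With this reduction in hand, I would prove both implications directly. For the ``only if'' direction, suppose $\bm{F}=\widetilde{\bm{F}}\bm{P}_{G}$ for some $\widetilde{\bm{F}}$. Transposing and using the symmetry of $\bm{P}_{G}$ gives $\bm{F}^{\tp}=\bm{P}_{G}\widetilde{\bm{F}}^{\tp}$, so every column of $\bm{F}^{\tp}$ lies in $\sn\bm{P}_{G}=\sn\bm{G}$; that is, $\sn\bm{F}^{\tp}\subseteq\sn\bm{G}$, which is exactly (P). For the ``if'' direction, assume (P), i.e., $\sn\bm{F}^{\tp}\subseteq\sn\bm{G}=\sn\bm{P}_{G}$. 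Because an orthogonal projector fixes every vector of its range, $\bm{P}_{G}\bm{F}^{\tp}=\bm{F}^{\tp}$, and transposing yields $\bm{F}\bm{P}_{G}=\bm{F}$. Thus the choice $\widetilde{\bm{F}}=\bm{F}$ already satisfies $\widetilde{\bm{F}}\bm{P}_{G}=\bm{F}$, establishing the existence of the required $\widetilde{\bm{F}}$.

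The only genuinely substantive step is the first one, namely identifying $\bm{G}\bm{G}^{(1,3)}$ as the choice-independent orthogonal projector onto $\sn\bm{G}$; everything afterward is a routine manipulation of projectors and transposes. I expect the main subtlety to be notational rather than mathematical: one must keep straight that $\sn\bm{F}^{\tp}$ denotes the row space of $\bm{F}$ regarded as a subspace of $\mathbb{R}^{p}$, while $\sn\bm{G}$ is the column space of $\bm{G}$ in the same $\mathbb{R}^{p}$, so that the containment in (P) is between two subspaces of a common ambient space. Once the projector identity is isolated, no appeal to the explicit form of any particular $\{1,3\}$-inverse (such as $\left(\bm{G}^{\tp}\bm{G}\right)^{-1}\bm{G}^{\tp}$ in the full-column-rank case of Lemma \ref{lem8}) is needed.
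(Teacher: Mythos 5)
Your proof is correct, and it is organized around a different key observation than the paper's. The paper never isolates the choice-independence of $\bm{G}\bm{G}^{(1,3)}$; it simply works with an arbitrary $\{1,3\}$-inverse throughout, using only the two defining identities. For necessity the two arguments essentially coincide (transpose, use symmetry of $\bm{G}\bm{G}^{(1,3)}$, and conclude that the columns of $\bm{F}^{\tp}$ lie in $\sn\bm{G}$). For sufficiency they diverge: the paper writes (P) as $\bm{F}=\overline{\bm{F}}\bm{G}^{\tp}$ for some $\overline{\bm{F}}\in\mathbb{R}^{q\times q}$, substitutes $\bm{G}=\bm{G}\bm{G}^{(1,3)}\bm{G}$, and transposes to read off the explicit witness $\widetilde{\bm{F}}=\overline{\bm{F}}\bm{G}^{\tp}$, whereas you first prove that $\bm{P}_{G}=\bm{G}\bm{G}^{(1,3)}$ is the (unique, hence choice-independent) orthogonal projector onto $\sn\bm{G}$, observe that (P) forces $\bm{P}_{G}\bm{F}^{\tp}=\bm{F}^{\tp}$ because a projector fixes its range, and take the simpler witness $\widetilde{\bm{F}}=\bm{F}$. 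Your route buys conceptual clarity: it shows the lemma's condition is equivalent to the single identity $\bm{F}\bm{P}_{G}=\bm{F}$, which explains why the universal quantifier over $\bm{G}\{1,3\}$ costs nothing and why $\bm{F}$ itself can serve as $\widetilde{\bm{F}}$. The paper's route is shorter and purely computational, needing no uniqueness argument for orthogonal projectors. Both constructions produce a $\widetilde{\bm{F}}$ that does not depend on the particular $\bm{G}^{(1,3)}$, which is what the ``for all'' clause in the statement requires.
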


\begin{proof}
{\it Necessity:} With $\left(\bm{G}\bm{G}^{(1,3)}\right)^{\tp}=\bm{G}\bm{G}^{(1,3)}$, $\forall\bm{G}^{(1,3)}\in\bm{G}\{1,3\}$, $\bm{F}=\widetilde{\bm{F}}\bm{G}\bm{G}^{(1,3)}$ implies $\bm{F}^{\tp}=\bm{G}\left(\bm{G}^{(1,3)}\widetilde{\bm{F}}^{\tp}\right)$, and thus $\sn\bm{F}^{\tp}\subseteq\sn\bm{G}$ is direct.

{\it Sufficiency:} If $\sn\bm{F}^{\tp}\subseteq\sn\bm{G}$, then $\bm{F}=\overline{\bm{F}}\bm{G}^{\tp}$ holds for some $\overline{\bm{F}}\in\mathbb{R}^{q\times q}$. Thanks to $\bm{G}\bm{G}^{(1,3)}\bm{G}=\bm{G}$ and  $\left(\bm{G}\bm{G}^{(1,3)}\right)^{\tp}=\bm{G}\bm{G}^{(1,3)}$, $\forall\bm{G}^{(1,3)}\in\bm{G}\{1,3\}$, we can get $\bm{F}=\overline{\bm{F}}\left(\bm{G}\bm{G}^{(1,3)}\bm{G}\right)^{\tp}=\widetilde{\bm{F}}\bm{G}\bm{G}^{(1,3)}$ for $\widetilde{\bm{F}}=\overline{\bm{F}}\bm{G}^{\tp}$.
\end{proof}

By Lemma \ref{lem11}, we can employ the property (P) for the design of the iterative algorithm (\ref{eq4}), instead of adopting the particular design $\bm{F}=\widetilde{\bm{F}}\bm{G}\bm{G}^{(1,3)}$, which thus removes the dependence of (\ref{eq4}) on using a $\{1,3\}$-inverse $\bm{G}^{(1,3)}$ of $\bm{G}$ simultaneously.

\section{Observer-Based Iterative Solution Methods and Results of LAEs}\label{sec3}

In this section, we incorporate the ideas of observability and observability decomposition into the solving of any LAEs, and establish iterative solution methods and results simultaneously. We explore conditions for both exponentially fast convergence and monotonic convergence of our iterative algorithm.

\subsection{Basic Observability-Based Design and Analysis}

This subsection is devoted to the solving of the LAE (\ref{eq1}) for the observable case of the system (\ref{eq2}). For this case, Lemma \ref{lem7} ensures that $\bm{F}\bm{G}$ is nonsingular when the selection of $\bm{F}$ fulfills the condition (\ref{eq5}). By this observation and due to the equivalent conditions in Lemma \ref{lem7}, we can develop an observability-based iterative solution result for the LAE (\ref{eq1}) in the theorem below.

\begin{thm}\label{thm1}
For the LAE (\ref{eq1}), the sequence of $\bm{U}_{k}$, $\forall k\in\mathbb{Z}_{+}$ generated by (\ref{eq4}) converges exponentially fast with a converged value independent of the initial condition $\bm{U}_{0}$ as
\begin{equation}\label{eq6}
\bm{U}_{\infty}
\triangleq\lim_{k\to\infty}\bm{U}_{k}
=\left(\bm{F}\bm{G}\right)^{-1}\bm{F}\bm{Y}_{d}
\end{equation}

\noindent if and only if there exists some gain matrix $\bm{F}\in\mathbb{R}^{q\times p}$ fulfilling (\ref{eq5}). Furthermore, $\bm{U}_{\infty}\in\mathcal{O}_{O}$ always holds, and $\bm{U}_{\infty}$ is the unique solution for the LAE (\ref{eq1}) if and only if it is solvable. Otherwise, $\bm{U}_{\infty}$ is the unique least squares solution for any unsolvable LAE (\ref{eq1}) if and only if the property (P) holds.
%
\end{thm}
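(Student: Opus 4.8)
The plan is to dispatch the four assertions of the theorem in turn, starting from the explicit solution of the affine recursion (\ref{eq4}). Writing $\bm{M}=I_{q}-\bm{F}\bm{G}$, I would unroll (\ref{eq4}) to $\bm{U}_{k}=\bm{M}^{k}\bm{U}_{0}+\left(\sum_{j=0}^{k-1}\bm{M}^{j}\right)\bm{F}\bm{Y}_{d}$ for every $k$, and then argue both implications of the first claim. For sufficiency, if the $\bm{F}$ in use satisfies (\ref{eq5}), i.e. $\rho(\bm{M})<1$, then $\bm{M}^{k}\to0$ geometrically (by the Gelfand/spectral-radius estimate $\|\bm{M}^{k}\|\le Cr^{k}$ for any $r\in(\rho(\bm{M}),1)$), so the homogeneous term vanishes exponentially regardless of $\bm{U}_{0}$, while the Neumann series gives $\sum_{j=0}^{\infty}\bm{M}^{j}=(I_{q}-\bm{M})^{-1}=(\bm{F}\bm{G})^{-1}$, where $\bm{F}\bm{G}$ is nonsingular precisely because $\rho(\bm{M})<1$ excludes the eigenvalue $1$ of $\bm{M}$, i.e. the eigenvalue $0$ of $\bm{F}\bm{G}$. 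This produces (\ref{eq6}). For necessity, convergence to a value independent of $\bm{U}_{0}$ forces $\bm{M}^{k}(\bm{U}_{0}-\bm{U}_{0}')\to0$ for every pair of initial conditions, hence $\bm{M}^{k}\to0$, which is equivalent to $\rho(\bm{M})<1$; by Lemma \ref{lem7} such an $\bm{F}$ exists exactly when the system is observable.

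The inclusion $\bm{U}_{\infty}\in\mathcal{O}_{O}$ is then immediate: by Lemma \ref{lem7} the existence of $\bm{F}$ fulfilling (\ref{eq5}) is equivalent to observability, and in the observable case Lemma \ref{lem6} gives $\mathcal{O}_{O}=\sn\bm{G}^{\tp}=\mathbb{R}^{q}$, so every vector, and in particular $\bm{U}_{\infty}\in\mathbb{R}^{q}$, lies in $\mathcal{O}_{O}$. For the solvable case I would substitute $\bm{Y}_{d}=\bm{G}\bm{U}_{d}$ into (\ref{eq6}) to obtain $\bm{U}_{\infty}=(\bm{F}\bm{G})^{-1}(\bm{F}\bm{G})\bm{U}_{d}=\bm{U}_{d}$, so $\bm{U}_{\infty}$ solves (\ref{eq1}), its uniqueness coming from Lemma \ref{lem8} in the observable case; conversely, if $\bm{U}_{\infty}$ is a solution then $\bm{Y}_{d}=\bm{G}\bm{U}_{\infty}$ exhibits solvability, which closes the equivalence.

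For the unsolvable case the goal is to match $\bm{U}_{\infty}=(\bm{F}\bm{G})^{-1}\bm{F}\bm{Y}_{d}$ against the unique least squares solution $(\bm{G}^{\tp}\bm{G})^{-1}\bm{G}^{\tp}\bm{Y}_{d}$ furnished by Lemma \ref{lem8}. For sufficiency I would use property (P), $\sn\bm{F}^{\tp}\subseteq\sn\bm{G}$, to write $\bm{F}=\bm{H}\bm{G}^{\tp}$ for some $\bm{H}\in\mathbb{R}^{q\times q}$; then $\bm{F}\bm{G}=\bm{H}(\bm{G}^{\tp}\bm{G})$ is nonsingular and, since $\bm{G}^{\tp}\bm{G}$ is invertible, so is $\bm{H}$, whence $\bm{U}_{\infty}=(\bm{G}^{\tp}\bm{G})^{-1}\bm{H}^{-1}\bm{H}\bm{G}^{\tp}\bm{Y}_{d}=(\bm{G}^{\tp}\bm{G})^{-1}\bm{G}^{\tp}\bm{Y}_{d}$ is the least squares solution. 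For necessity I would promote the pointwise coincidence to the matrix identity $(\bm{F}\bm{G})^{-1}\bm{F}=(\bm{G}^{\tp}\bm{G})^{-1}\bm{G}^{\tp}$, then left-multiply by $\bm{F}\bm{G}$ to get $\bm{F}=(\bm{F}\bm{G})(\bm{G}^{\tp}\bm{G})^{-1}\bm{G}^{\tp}$, so that $\bm{F}^{\tp}=\bm{G}(\bm{G}^{\tp}\bm{G})^{-1}(\bm{F}\bm{G})^{\tp}$ has its columns in $\sn\bm{G}$, which is exactly (P).

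The step I expect to be the main obstacle is this necessity half of the least squares claim, namely justifying the passage from "$\bm{U}_{\infty}$ equals the least squares solution for every unsolvable $\bm{Y}_{d}$" to the \emph{global} matrix identity $(\bm{F}\bm{G})^{-1}\bm{F}=(\bm{G}^{\tp}\bm{G})^{-1}\bm{G}^{\tp}$ on all of $\mathbb{R}^{p}$. The key observation is that for solvable $\bm{Y}_{d}\in\sn\bm{G}$ both sides automatically agree with the exact solution (by the solvable-case computation above), while for unsolvable $\bm{Y}_{d}$ they agree by hypothesis; since $\sn\bm{G}$ together with the unsolvable vectors spans $\mathbb{R}^{p}$, linearity extends the equality to every $\bm{Y}_{d}$ and yields the matrix identity. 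I would state this spanning argument carefully and then read off (P) from the resulting identity.
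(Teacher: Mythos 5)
Your proof is correct, and its skeleton is the same as the paper's: a spectral-radius argument for the convergence equivalence, direct substitution of $\bm{Y}_{d}=\bm{G}\bm{U}_{d}$ for the solvable case, and the factorization $\bm{F}=\bm{H}\bm{G}^{\tp}$ coming from property (P) for both directions of the least squares claim. The local differences are worth recording, though. Your Neumann-series unrolling replaces the paper's error recursion (\ref{eq7}); this is cosmetic, but your necessity argument is slightly stronger than the paper's, since you show the $\bm{F}$ actually in use must satisfy (\ref{eq5}), whereas the paper only deduces full-column rank of $\bm{G}$ and then invokes Lemma \ref{lem7} for existence of some such $\bm{F}$. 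For the claim $\bm{U}_{\infty}\in\mathcal{O}_{O}$ you note that under the standing hypothesis $\bm{G}$ has full-column rank, so $\mathcal{O}_{O}=\sn\bm{G}^{\tp}=\mathbb{R}^{q}$ and the claim is trivial; the paper instead exhibits an explicit $\bm{\chi}$ with $\bm{G}^{\tp}\bm{\chi}=\bm{U}_{\infty}$, which is the same content in heavier clothing. The most substantive difference is in the necessity half of the least squares claim: the paper asserts that $\bm{U}_{\infty}$ being the least squares solution ``leads to'' the matrix identity $\left(\bm{F}\bm{G}\right)^{-1}\bm{F}=\left(\bm{G}^{\tp}\bm{G}\right)^{-1}\bm{G}^{\tp}$, but the hypothesis only gives the pointwise equality $\left(\bm{F}\bm{G}\right)^{-1}\bm{F}\bm{Y}_{d}=\left(\bm{G}^{\tp}\bm{G}\right)^{-1}\bm{G}^{\tp}\bm{Y}_{d}$ at unsolvable $\bm{Y}_{d}$. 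Your spanning argument---the two linear maps agree on $\sn\bm{G}$ because both return the exact (unique) solution there, they agree off $\sn\bm{G}$ by hypothesis, and these two sets exhaust $\mathbb{R}^{p}$---is precisely the justification the paper omits, so on this one step your write-up is more complete than the published proof.
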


\begin{proof}
If $\bm{U}_{k}$ converges and (\ref{eq6}) holds, then the nonsingularity of $\bm{F}\bm{G}$ ensures that $\bm{G}$ has full-column rank. Hence, from Lemma \ref{lem7}, there always exists some gain matrix $\bm{F}\in\mathbb{R}^{q\times p}$ such that (\ref{eq5}) holds. On the contrary, if (\ref{eq5}) holds, then from (\ref{eq4}), we equivalently have
\begin{equation}\label{eq7}
\aligned
\bm{U}_{k}-\left(\bm{F}\bm{G}\right)^{-1}\bm{F}\bm{Y}_{d}
&=\left(I_{q}-\bm{F}\bm{G}\right)\left[\bm{U}_{k-1}-\left(\bm{F}\bm{G}\right)^{-1}\bm{F}\bm{Y}_{d}\right]\\
&=\left(I_{q}-\bm{F}\bm{G}\right)^{k}\Big[\bm{U}_{0}
-\left(\bm{F}\bm{G}\right)^{-1}\bm{F}\bm{Y}_{d}\Big],\quad\forall k\in\mathbb{Z}_{+}.
\endaligned
\end{equation}

\noindent By substituting (\ref{eq5}) into (\ref{eq7}), we can conclude that $\bm{U}_{k}$ converges exponentially fast for the system (\ref{eq4}), where its converged value takes the form of (\ref{eq6}). In addition, (\ref{eq6}) holds regardless of any selection of $\bm{U}_{0}$, i.e., $\bm{U}_{\infty}$ is independent of $\bm{U}_{0}$. By specifically taking $\bm{\chi}=\bm{G}\left(\bm{G}^{\tp}\bm{G}\right)^{-1}\left(\bm{F}\bm{G}\right)^{-1}\bm{F}\bm{Y}_{d}\in\mathbb{R}^{p}$, we can validate
\[\bm{G}^{\tp}\bm{\chi}
=\left(\bm{F}\bm{G}\right)^{-1}\bm{F}\bm{Y}_{d}
=\bm{U}_{\infty}\]

\noindent which obviously results in $\bm{U}_{\infty}\in\rm{span}\bm{G}^{\tp}$. Therefore, $\bm{U}_{\infty}\in\mathcal{O}_{O}$ is immediate by Lemma \ref{lem6}.

Since $\bm{G}$ has the full-column rank, the solution (respectively, least squares solution) for the LAE (\ref{eq1}) is unique if it is solvable (respectively, unsolvable). This unique solution or least squares solution takes the form of $\bm{U}_{d}=\left(\bm{G}^{\tp}\bm{G}\right)^{-1}\bm{G}^{\tp}\bm{Y}_{d}$ according to Lemma \ref{lem8}. When (\ref{eq1}) is a solvable LAE, the substitution of (\ref{eq1}) into (\ref{eq6}) leads to
\[
\bm{G}\bm{U}_{\infty}
=\bm{G}\left(\bm{F}\bm{G}\right)^{-1}\bm{F}\left(\bm{G}\bm{U}_{d}\right)
=\bm{G}\bm{U}_{d}
=\bm{Y}_{d}.
\]

\noindent Hence, $\bm{U}_{\infty}$ in (\ref{eq6}) is the unique solution for the LAE (\ref{eq1}) if and only if it is solvable. Otherwise, for any unsolvable LAE (\ref{eq1}), if the property (P) holds, then there exists some square matrix $\overline{\bm{F}}\in\mathbb{R}^{q\times q}$ that fulfills $\bm{F}=\overline{\bm{F}}\bm{G}^{\tp}$, and hence $\overline{\bm{F}}=\bm{F}\bm{G}\left(\bm{G}^{\tp}\bm{G}\right)^{-1}$ follows and is naturally nonsingular. As a consequence,
\[
\bm{U}_{\infty}
=\left(\overline{\bm{F}}\bm{G}^{\tp}\bm{G}\right)^{-1}\overline{\bm{F}}\bm{G}^{\tp}\bm{Y}_{d}
=\left(\bm{G}^{\tp}\bm{G}\right)^{-1}\bm{G}^{\tp}\bm{Y}_{d}
\]

\noindent is the least squares solution to (\ref{eq1}). On the contrary, if $\bm{U}_{\infty}$ in (\ref{eq6}) is the least squares solution to (\ref{eq1}), then it follows $\left(\bm{F}\bm{G}\right)^{-1}\bm{F}=\left(\bm{G}^{\tp}\bm{G}\right)^{-1}\bm{G}^{\tp}$, which leads to $\bm{F}=\overline{\bm{F}}\bm{G}^{\tp}$ for $\overline{\bm{F}}=\bm{F}\bm{G}\left(\bm{G}^{\tp}\bm{G}\right)^{-1}$. Consequently, $\sn\bm{F}^{\tp}\subseteq\sn\bm{G}$ is immediate. The statements of Theorem \ref{thm1} are all developed.
\end{proof}

\begin{rem}\label{rem1}
From Theorem \ref{thm1} together with Lemmas \ref{lem7} and \ref{lem8}, a perspective from the system observability is exploited for the design of iterative algorithms to solve LAEs with full-column rank mapping matrices despite the solvability of them. Thanks to the dual relation between observability and controllability, we can develop the synthesis of the iterative algorithm (\ref{eq4}) by employing the popular design methods for feedback controllers. In particular, the exponential convergence result of Theorem \ref{thm1} can be improved to provide a monotonic convergence result through the synthesis of $\bm{F}$, for which the condition (\ref{eq5}) needs to be strengthened to satisfy $\left\|I_{q}-\bm{F}\bm{G}\right\|<1$ for some compatible matrix norm $\|\cdot\|$. 
\end{rem}

For the (least squares) solution result of LAEs in Theorem \ref{thm1}, we give a candidate selection of $\bm{F}$ in the following corollary.

\begin{cor}\label{cor5}
For the iterative algorithm (\ref{eq4}), let $\bm{F}=\bm{\sigma}\bm{G}^{\tp}$ be selected for $\bm{\sigma}\in\left(0,2/\lambda_{\max}\left(\bm{G}^{\tp}\bm{G}\right)\right)$. Then the same results of Theorem \ref{thm1} can be established, where the iteration convergence of (\ref{eq4}) can be further achieved in a monotonic way, that is, for $\bm{\gamma}=\max_{i=1,2,\cdots,q}\left\{\left|1-\bm{\sigma}\lambda_{i}\left(\bm{G}^{\tp}\bm{G}\right)\right|\right\}<1$,
\begin{equation*}\label{}
\aligned
\left\|\bm{U}_{k+1}-\left(\bm{F}\bm{G}\right)^{-1}\bm{F}\bm{Y}_{d}\right\|_{2}
&\leq\bm{\gamma}\left\|\bm{U}_{k}-\left(\bm{F}\bm{G}\right)^{-1}\bm{F}\bm{Y}_{d}\right\|_{2},\quad\forall k\in\mathbb{Z}_{+}.
\endaligned
\end{equation*}
\end{cor}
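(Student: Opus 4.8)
The plan is to verify that the specific choice $\bm{F}=\bm{\sigma}\bm{G}^{\tp}$ satisfies both the spectral-radius condition \eqref{eq5} and the stronger property (P), so that all conclusions of Theorem \ref{thm1} apply directly, and then to upgrade the exponential convergence to monotonic convergence in the Euclidean norm. First I would observe that $\bm{F}=\bm{\sigma}\bm{G}^{\tp}$ immediately gives $\bm{F}^{\tp}=\bm{\sigma}\bm{G}$, so $\sn\bm{F}^{\tp}\subseteq\sn\bm{G}$ and property (P) holds trivially; this is exactly what Theorem \ref{thm1} requires to guarantee that $\bm{U}_{\infty}$ is the unique least squares solution in the unsolvable case. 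With $\bm{F}=\bm{\sigma}\bm{G}^{\tp}$ we have $I_{q}-\bm{F}\bm{G}=I_{q}-\bm{\sigma}\bm{G}^{\tp}\bm{G}$, whose eigenvalues are precisely $1-\bm{\sigma}\lambda_{i}(\bm{G}^{\tp}\bm{G})$ for $i=1,2,\cdots,q$.

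The next step is to check \eqref{eq5}. Since $\bm{G}$ has full-column rank (the standing assumption of this subsection, equivalent to observability by Lemma \ref{lem7}), the matrix $\bm{G}^{\tp}\bm{G}$ is symmetric positive definite, so every $\lambda_{i}(\bm{G}^{\tp}\bm{G})>0$ and in particular $0<\lambda_{i}(\bm{G}^{\tp}\bm{G})\le\lambda_{\max}(\bm{G}^{\tp}\bm{G})$. For $\bm{\sigma}\in\bigl(0,2/\lambda_{\max}(\bm{G}^{\tp}\bm{G})\bigr)$ one then has $\bm{\sigma}\lambda_{i}(\bm{G}^{\tp}\bm{G})\in(0,2)$ for each $i$, whence $\bigl|1-\bm{\sigma}\lambda_{i}(\bm{G}^{\tp}\bm{G})\bigr|<1$ and therefore $\rho(I_{q}-\bm{F}\bm{G})=\bm{\gamma}<1$. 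This establishes \eqref{eq5}, so Theorem \ref{thm1} applies and yields the claimed (least squares) solution result $\bm{U}_{\infty}=(\bm{G}^{\tp}\bm{G})^{-1}\bm{G}^{\tp}\bm{Y}_{d}$ together with exponential convergence.

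It remains to derive the monotonic contraction estimate. I would start from the exact error recursion \eqref{eq7}, which for this $\bm{F}$ reads $\bm{U}_{k+1}-(\bm{F}\bm{G})^{-1}\bm{F}\bm{Y}_{d}=(I_{q}-\bm{\sigma}\bm{G}^{\tp}\bm{G})\bigl[\bm{U}_{k}-(\bm{F}\bm{G})^{-1}\bm{F}\bm{Y}_{d}\bigr]$. Taking Euclidean norms and using submultiplicativity gives the bound with factor $\|I_{q}-\bm{\sigma}\bm{G}^{\tp}\bm{G}\|_{2}$. The key point is that $I_{q}-\bm{\sigma}\bm{G}^{\tp}\bm{G}$ is symmetric, so its $2$-norm equals its spectral radius, namely $\|I_{q}-\bm{\sigma}\bm{G}^{\tp}\bm{G}\|_{2}=\max_{i}\bigl|1-\bm{\sigma}\lambda_{i}(\bm{G}^{\tp}\bm{G})\bigr|=\bm{\gamma}$. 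This delivers the stated inequality with contraction constant $\bm{\gamma}<1$, which is exactly the strengthening $\|I_{q}-\bm{F}\bm{G}\|<1$ anticipated in Remark \ref{rem1}.

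The only subtlety worth flagging, and the step I would be most careful with, is the identification $\|I_{q}-\bm{\sigma}\bm{G}^{\tp}\bm{G}\|_{2}=\rho(I_{q}-\bm{\sigma}\bm{G}^{\tp}\bm{G})$: this equality holds because the matrix is real symmetric (hence normal), so that the $2$-norm coincides with the spectral radius, whereas for a general nonsymmetric $I_{q}-\bm{F}\bm{G}$ one would only have $\rho(I_{q}-\bm{F}\bm{G})<1$ without an immediate norm contraction at each step. The symmetric structure forced by the choice $\bm{F}=\bm{\sigma}\bm{G}^{\tp}$ is precisely what converts the asymptotic (exponential) statement into a per-iteration monotonic decrease, so the whole argument hinges on that structural observation rather than on any lengthy computation.
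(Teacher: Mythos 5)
Your proposal is correct and follows precisely the route the paper intends: the paper states Corollary \ref{cor5} without proof, but its justification is exactly the combination you give, namely verifying property (P) and the spectral radius condition (\ref{eq5}) for $\bm{F}=\bm{\sigma}\bm{G}^{\tp}$ so that Theorem \ref{thm1} applies, and then using the symmetry of $I_{q}-\bm{\sigma}\bm{G}^{\tp}\bm{G}$ to identify its $2$-norm with its spectral radius $\bm{\gamma}$, which is the strengthening $\left\|I_{q}-\bm{F}\bm{G}\right\|<1$ anticipated in Remark \ref{rem1}. Your flagged subtlety (normality is what upgrades $\rho<1$ to a per-iteration contraction) is indeed the crux, and your appeal to the full-column-rank standing assumption of the subsection is the correct reason why $\bm{G}^{\tp}\bm{G}$ is positive definite and hence $\bm{\gamma}<1$.
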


%
\subsection{Design and Analysis With Observability Decomposition}

Despite the design results of iterative algorithms in Theorem \ref{thm1}, the LAE (\ref{eq1}) is solved based on the requirement of the full-column rank condition of $\bm{G}$ or the observability of the system (\ref{eq4}) and (\ref{eq3}). Clearly, when this requirement can not be satisfied, the design results of Theorem \ref{thm1} may not be applied any longer. Moreover, a question naturally arising is: whether and how can the iterative algorithm (\ref{eq4}) be explored for solving the LAE (\ref{eq1}) without any requirements on $\bm{G}$? The answer to this question is affirmative, where the application of the iterative algorithm (\ref{eq4}) is closely tied to the observability decomposition thanks to the development of Lemma \ref{lem1}. It actually coincides with a common fact that only observable states can be approached by the state observers. To clearly disclose these facts, we introduce a basic solvability result for the LAE (\ref{eq1}) in the presence of any $\bm{Y}_{d}\in\mathbb{R}^{p}$ in the following lemma.

\begin{lem}\label{lem2}
The LAE (\ref{eq1}) is solvable for any $\bm{Y}_{d}\in\mathbb{R}^{p}$ if and only if any of the following conditions holds.
\begin{enumerate}
%
\item
There exists some matrix $\bm{G}^{+}\in\mathbb{R}^{q\times p}$ to yield $I_{p}=\bm{G}\bm{G}^{+}$.

\item
The matrix $\bm{G}$ is of full-row rank, i.e., $\rank\left(\bm{G}\right)=p$.
%

\item
There exists some gain matrix $\bm{F}\in\mathbb{R}^{q\times p}$ such that
\begin{equation}\label{eq8}
\rho\left(I_{p}-\bm{G}\bm{F}\right)<1.
\end{equation}

\item
There exists some full-row rank matrix $\bm{M}\in\mathbb{R}^{(q-p)\times q}$ to construct a nonsingular matrix as
\[
\bm{P}^{-1}=\begin{bmatrix}\bm{G}\\\bm{M}\end{bmatrix}\in\mathbb{R}^{q\times q}
\]

\noindent such that the system (\ref{eq2}) has a standard observability decomposition (that is, standard form for the unobservable system \cite{am:06}) with respect to $\overline{\bm{U}}_{d}(k)=\bm{P}^{-1}\bm{U}_{d}(k)$ as
\begin{equation}\label{eq36}
\left\{\aligned
\overline{\bm{U}}_{d}(k+1)&=\overline{\bm{U}}_{d}(k)+\begin{bmatrix}\bm{G}\\0\end{bmatrix}\bm{V}_{d}(k)\\
\bm{Y}_{d}(k)&=\begin{bmatrix}I_{p}&0\end{bmatrix}\overline{\bm{U}}_{d}(k)
\endaligned\right.,\quad\forall k\in\mathbb{Z}_{+}.
\end{equation}
%
\end{enumerate}
\end{lem}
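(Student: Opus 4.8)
The plan is to treat the base assertion---solvability of (\ref{eq1}) for every $\bm{Y}_{d}\in\mathbb{R}^{p}$---as a hub and prove it equivalent to each of 1)--4). First I would observe that solvability for all $\bm{Y}_{d}$ is exactly surjectivity of the map $\bm{U}_{d}\mapsto\bm{G}\bm{U}_{d}$ onto $\mathbb{R}^{p}$, i.e. $\sn\bm{G}=\mathbb{R}^{p}$, which is precisely $\rank(\bm{G})=p$. Hence the base assertion coincides with condition 2), and it suffices to show that conditions 1), 3), and 4) are each equivalent to the full-row-rank condition 2).

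For 1) and 3) I would exploit the duality with Lemma \ref{lem7}: since transposition preserves both rank and spectral radius, $\bm{G}$ has full-row rank if and only if $\bm{G}^{\tp}$ has full-column rank, so applying Lemma \ref{lem7} to $\bm{G}^{\tp}$ settles both at once. Concretely, for 2)$\Rightarrow$1), when $\rank(\bm{G})=p$ the Gram matrix $\bm{G}\bm{G}^{\tp}$ is nonsingular and $\bm{G}^{+}=\bm{G}^{\tp}(\bm{G}\bm{G}^{\tp})^{-1}$ is a right inverse; conversely $I_{p}=\bm{G}\bm{G}^{+}$ forces $\rank(\bm{G})\ge\rank(I_{p})=p$, hence equality. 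For 3), the choice $\bm{F}=\bm{G}^{+}$ gives $I_{p}-\bm{G}\bm{F}=0$ and thus $\rho(I_{p}-\bm{G}\bm{F})=0<1$; conversely $\rho(I_{p}-\bm{G}\bm{F})<1$ keeps every eigenvalue of $\bm{G}\bm{F}$ away from zero, so $\bm{G}\bm{F}$ is nonsingular and $\rank(\bm{G})\ge\rank(\bm{G}\bm{F})=p$, yielding 2). Equivalently, writing $\bm{F}=\bm{F}'^{\tp}$, the identity $\rho(I_{p}-\bm{G}\bm{F})=\rho(I_{p}-\bm{F}'\bm{G}^{\tp})$ reduces 3) directly to part 2) of Lemma \ref{lem7} for $\bm{G}^{\tp}$.

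The substantive step is 2)$\Leftrightarrow$4), the observability-decomposition statement, where I expect the main work. For 2)$\Rightarrow$4), since the $p$ rows of $\bm{G}$ are linearly independent, I would extend them to a basis of $\mathbb{R}^{q}$ by appending a full-row-rank block $\bm{M}\in\mathbb{R}^{(q-p)\times q}$, so that $\bm{P}^{-1}=\bigl[\begin{smallmatrix}\bm{G}\\\bm{M}\end{smallmatrix}\bigr]$ is nonsingular. Reading $\bm{P}^{-1}\bm{P}=I_{q}$ blockwise gives $\bm{G}\bm{P}=[\,I_{p}\;0\,]$, so under $\overline{\bm{U}}_{d}(k)=\bm{P}^{-1}\bm{U}_{d}(k)$ (as in Lemma \ref{lem1}) the output becomes $\bm{Y}_{d}(k)=[\,I_{p}\;0\,]\overline{\bm{U}}_{d}(k)$, while $\bm{P}^{-1}I_{q}\bm{P}=I_{q}$ leaves the state matrix unchanged; since the output depends only on the first $p$ coordinates and the unobservable subspace is the span of the last $q-p$ coordinate directions, this is exactly the standard form (\ref{eq36}). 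For the converse 4)$\Rightarrow$2), nonsingularity of $\bm{P}^{-1}=\bigl[\begin{smallmatrix}\bm{G}\\\bm{M}\end{smallmatrix}\bigr]$ forces its first $p$ rows, namely the rows of $\bm{G}$, to be linearly independent, whence $\rank(\bm{G})=p$.

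The one point needing care---and the chief obstacle I anticipate---is reconciling the displayed input block $\bigl[\begin{smallmatrix}\bm{G}\\0\end{smallmatrix}\bigr]$ in (\ref{eq36}) with the genuine transformed input matrix $\bm{P}^{-1}I_{q}=\bm{P}^{-1}=\bigl[\begin{smallmatrix}\bm{G}\\\bm{M}\end{smallmatrix}\bigr]$, which do not agree in general. I would dispose of this by noting that the input of the system (\ref{eq2}) is the trivial $\bm{V}_{d}(k)\equiv0$, so the input term vanishes identically and the precise form of its coefficient matrix is immaterial to the representation (\ref{eq36}); the genuine content of 4) is the nonsingularity of $\bm{P}^{-1}$ together with the transformed output matrix $[\,I_{p}\;0\,]$, both secured above. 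Chaining the three equivalences through the hub condition 2) then completes the proof.
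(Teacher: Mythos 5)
Your proof is correct. Bear in mind, however, that the paper provides no actual argument for this lemma: its ``proof'' is a single sentence asserting that the equivalences can be developed from standard matrix theory and linear system theory, with all details omitted. Your write-up is therefore a filling-in of omitted details rather than an alternative to an existing argument, and the details you supply are sound: the hub reduction of everything to the full-row-rank condition 2); the right inverse $\bm{G}^{+}=\bm{G}^{\tp}\left(\bm{G}\bm{G}^{\tp}\right)^{-1}$ for 1); for 3), the observation that $\rho\left(I_{p}-\bm{G}\bm{F}\right)<1$ prevents $0$ from being an eigenvalue of $\bm{G}\bm{F}$ (else $1$ would be an eigenvalue of $I_{p}-\bm{G}\bm{F}$), so $\bm{G}\bm{F}$ is nonsingular and $\rank\left(\bm{G}\right)\geq p$; and for 4), the basis-extension construction of $\bm{M}$ with $\bm{G}\bm{P}=[\,I_{p}\;\;0\,]$ read off blockwise from $\bm{P}^{-1}\bm{P}=I_{q}$. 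The one genuinely delicate point is exactly the one you flagged: for $q>p$ the transformed input matrix $\bm{P}^{-1}=\left[\bm{G}^{\tp},\bm{M}^{\tp}\right]^{\tp}$ can never coincide with the displayed block $\left[\bm{G}^{\tp},0\right]^{\tp}$ in (\ref{eq36}), since $\bm{M}$ has full row rank. Your resolution---that $\bm{V}_{d}(k)\equiv0$ in the system (\ref{eq2}), so (\ref{eq36}) holds as an identity of trajectories whatever coefficient multiplies $\bm{V}_{d}(k)$, and the substance of 4) lies in the nonsingularity of $\bm{P}^{-1}$ together with the state/output pair being in standard form with the trivially observable top pair $\left(I_{p},I_{p}\right)$---is the right one, and it is consistent with the paper's own treatment of the analogous decomposition in Lemma \ref{lem9}, where the zero block in the transformed gain $\overline{\bm{F}}$ is genuine and must be engineered by a special choice of $\bm{P}$ satisfying $\bm{M}\bm{F}=0$ rather than following from a vanishing input.
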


\begin{proof}
The equivalence relations among all the statements in this lemma can be easily developed with the standard matrix theory \cite{bg:03,lt:85} and linear system theory \cite{r:96,am:06}, and thus the detailed proofs are omitted for simplicity.
\end{proof}

From Lemma \ref{lem2}, it is clear that we may no longer apply the design results of Theorem \ref{thm1} to obtain the solutions for the LAE (\ref{eq1}) in the absence of the observability property because we can not gain any estimation information for the unobservable states by (\ref{eq36}). However, this drawback may not affect the application of the iterative algorithm (\ref{eq4}) to solving the LAE (\ref{eq1}), especially due to the specific structure of (\ref{eq36}). To make this observation clear to follow, we combine the result of Lemma \ref{lem1} to introduce an observability decomposition lemma for the system formed by (\ref{eq4}) and (\ref{eq3}).

\begin{lem}\label{lem9}
Let the same linear transformation as (\ref{eq36}) be peformed. Then the system formed by (\ref{eq4}) and (\ref{eq3}) has a standard observability decomposition with respect to $\overline{\bm{U}}_{k}=\bm{P}^{-1}\bm{U}_{k}$ as 
\begin{equation}\label{eq53}
\left\{\aligned
\overline{\bm{U}}_{k+1}
&=\begin{bmatrix}I_{p}-\bm{G}\bm{F}&0\\0&I_{q-p}\end{bmatrix}\overline{\bm{U}}_{k}
+\begin{bmatrix}\bm{G}\bm{F}\\0\end{bmatrix}\bm{Y}_{d}\\
\bm{Y}_{k}
&=\begin{bmatrix}I_{p}&0\end{bmatrix}\overline{\bm{U}}_{k}
\endaligned\right.,\quad\forall k\in\mathbb{Z}_{+}
\end{equation}

\noindent if and only if the condition (\ref{eq8}) holds.
\end{lem}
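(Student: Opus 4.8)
The plan is to prove the claimed equivalence by a direct similarity computation, reading the ``if and only if'' off the \emph{existence} of the transformation $\bm{P}$ rather than off any convergence argument. The pivotal observation is that, by Lemma \ref{lem2}, the nonsingular transformation $\bm{P}^{-1}=\begin{bmatrix}\bm{G}\\\bm{M}\end{bmatrix}$ used in (\ref{eq36}) can be constructed \emph{exactly when} $\bm{G}$ has full-row rank, which by the same lemma is equivalent to (\ref{eq8}). Together with Lemma \ref{lem1} (which guarantees that the system formed by (\ref{eq4}) and (\ref{eq3}) shares the observability structure of (\ref{eq2})), the whole statement reduces to: (i) when (\ref{eq8}) holds, the transformed representation of (\ref{eq4})--(\ref{eq3}) is precisely (\ref{eq53}); and (ii) conversely, the mere existence of a representation of the form (\ref{eq53}) forces $\bm{G}$ to have full-row rank.

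For the sufficiency I would first record the block identities obtained from $\bm{P}^{-1}\bm{P}=I_{q}$. Writing $\bm{P}=[\bm{P}_{1}\ \bm{P}_{2}]$ conformably gives $\bm{G}\bm{P}_{1}=I_{p}$ and $\bm{G}\bm{P}_{2}=0$, hence $\bm{G}\bm{P}=[I_{p}\ 0]$, which immediately reproduces the output matrix in (\ref{eq53}). Applying $\overline{\bm{U}}_{k}=\bm{P}^{-1}\bm{U}_{k}$ to (\ref{eq4}) makes the state matrix $\bm{P}^{-1}(I_{q}-\bm{F}\bm{G})\bm{P}=I_{q}-(\bm{P}^{-1}\bm{F})(\bm{G}\bm{P})$ and the input matrix $\bm{P}^{-1}\bm{F}$. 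Using $\bm{P}^{-1}\bm{F}=\begin{bmatrix}\bm{G}\bm{F}\\\bm{M}\bm{F}\end{bmatrix}$ together with $\bm{G}\bm{P}=[I_{p}\ 0]$, a single block multiplication yields the state matrix $\begin{bmatrix}I_{p}-\bm{G}\bm{F}&0\\-\bm{M}\bm{F}&I_{q-p}\end{bmatrix}$ and the input matrix $\begin{bmatrix}\bm{G}\bm{F}\\\bm{M}\bm{F}\end{bmatrix}$; the vanishing $(1,2)$ block already shows that the observable part $(I_{p}-\bm{G}\bm{F},[I_{p}\ 0])$ evolves without influence from the unobservable coordinates, exactly as a standard observability decomposition demands.

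The step I expect to be the crux is reconciling this generic computation with the exact block-\emph{diagonal} form displayed in (\ref{eq53}): the $(2,1)$ state block $-\bm{M}\bm{F}$ and the lower input block $\bm{M}\bm{F}$ must both vanish. The key is that the complement $\bm{M}$ is not forced on us---any full-row-rank $\bm{M}$ completing $\bm{G}$ to a nonsingular matrix is admissible---and under (\ref{eq8}) one can select $\bm{M}$ so that $\bm{M}\bm{F}=0$. Indeed, (\ref{eq8}) makes $\bm{G}\bm{F}$ nonsingular, which gives $\sn\bm{F}\cap\nl\bm{G}=\{0\}$; passing to orthogonal complements, this is equivalent to $\nl\bm{F}^{\tp}+\sn\bm{G}^{\tp}=\mathbb{R}^{q}$, so a $(q-p)$-dimensional subspace of $\nl\bm{F}^{\tp}$ complementary to $\sn\bm{G}^{\tp}$ exists. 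Taking the rows of $\bm{M}$ to span it keeps $\begin{bmatrix}\bm{G}\\\bm{M}\end{bmatrix}$ nonsingular while enforcing $\bm{M}\bm{F}=0$, whereupon the two offending blocks disappear and the representation collapses exactly to (\ref{eq53}).

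Finally, for the necessity I would argue purely from the output equation: if the system admits a representation of the form (\ref{eq53}) under a nonsingular $\bm{P}$, then $\bm{Y}_{k}=[I_{p}\ 0]\bm{P}^{-1}\bm{U}_{k}$ must coincide with $\bm{Y}_{k}=\bm{G}\bm{U}_{k}$ identically, forcing $\bm{G}=[I_{p}\ 0]\bm{P}^{-1}$. Since $[I_{p}\ 0]$ extracts the top $p$ rows of the nonsingular matrix $\bm{P}^{-1}$, these rows are linearly independent and $\rank(\bm{G})=p$; invoking Lemma \ref{lem2} then returns (\ref{eq8}) and closes the equivalence. I would stress that everything outside the $\bm{M}\bm{F}=0$ construction is routine block algebra plus a citation of Lemmas \ref{lem1} and \ref{lem2}.
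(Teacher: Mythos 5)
Your proof is correct, and its skeleton matches the paper's: both arguments reduce the lemma to the single crux that, because (\ref{eq8}) forces $\bm{G}\bm{F}$ to be nonsingular, the completion $\bm{M}$ of $\bm{G}$ can be chosen so that $\bm{M}\bm{F}=0$, which is exactly what kills the $-\bm{M}\bm{F}$ block and the lower input block in your generic similarity computation and collapses it to (\ref{eq53}). The only real difference is which side of the transformation gets constructed. You build the rows of $\bm{P}^{-1}$: take $\sn\bm{M}^{\tp}$ to be a $(q-p)$-dimensional complement of $\sn\bm{G}^{\tp}$ contained in $\nl\bm{F}^{\tp}$, whose existence you correctly deduce from $\sn\bm{F}\cap\nl\bm{G}=\{0\}$ via orthogonal complements. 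The paper instead builds the columns of $\bm{P}$ explicitly, $\bm{P}=\left[\bm{F}\left(\bm{G}\bm{F}\right)^{-1}~~\bm{N}\right]$ with $\bm{N}$ spanning $\nl\bm{G}$, and reads $\bm{M}\bm{F}=0$ off the identity $\bm{P}^{-1}\bm{P}=I_{q}$. These are dual descriptions of the same transformation (once $\bm{M}\bm{F}=0$ holds, the first column block of $\bm{P}$ is forced to equal $\bm{F}\left(\bm{G}\bm{F}\right)^{-1}$); the paper's explicit form buys concrete formulas that are reused in the proof of Theorem \ref{thm2}, while yours buys a cleaner existence argument. For the ``if and only if,'' the paper disposes of it in one line by citing the equivalence of statements 3) and 4) of Lemma \ref{lem2}; your necessity argument---the output matrix of (\ref{eq53}) forces $\bm{G}$ to be the top $p$ rows of the nonsingular $\bm{P}^{-1}$, hence $\rank(\bm{G})=p$, hence (\ref{eq8}) by Lemma \ref{lem2}---is more self-contained, and it correctly lands on the same existential reading of (\ref{eq8}) that the paper intends (existence of \emph{some} admissible gain), which is the right reading, since a decomposition of the form (\ref{eq53}) can exist even when the particular $\bm{F}$ of (\ref{eq4}) violates (\ref{eq8}), e.g.\ $\bm{F}=0$.
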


\begin{proof}
The necessity and sufficiency is obvious from the equivalence between the results 3) and 4) in Lemma \ref{lem2}. Hence, we only need to develop (\ref{eq53}). In fact, the definition of $\bm{P}^{-1}=\left[\bm{G}^{\tp},\bm{M}^{\tp}\right]^{\tp}$ implies
\[\overline{\bm{G}}=\bm{G}\bm{P}=\begin{bmatrix}I_{p}&0\end{bmatrix}
\]

\noindent which, together with (\ref{eq3}), immediately leads to
\begin{equation}\label{eq37}
\bm{Y}_{k}=\begin{bmatrix}I_{p}&0\end{bmatrix}\overline{\bm{U}}_{k},\quad\forall k\in\mathbb{Z}_{+}.
\end{equation}

\noindent Because (\ref{eq8}) ensures that $\bm{G}\bm{F}$ is nonsingular, there exists some full-column rank matrix $\bm{N}\in\mathbb{R}^{q\times(q-p)}$ such that we can choose $\bm{P}$ in a specific form of
\[
\bm{P}=\begin{bmatrix}\bm{F}\left(\bm{G}\bm{F}\right)^{-1}&\bm{N}\end{bmatrix}\in\mathbb{R}^{q\times q}.
\]

\noindent Obviously, we can validate $\bm{M}\bm{F}=0$, and consequently,
\[
\overline{\bm{F}}=\bm{P}^{-1}\bm{F}=\begin{bmatrix}\bm{G}\bm{F}\\0\end{bmatrix}.
\]

\noindent We thus revisit Lemma \ref{lem1} and can obtain that for any $\bm{Y}_{d}\in\mathbb{R}^{p}$,
\begin{equation}\label{eq11}
\aligned
\overline{\bm{U}}_{k+1}
&=\left(I_{q}-\overline{\bm{F}}\,\overline{\bm{G}}\right)\overline{\bm{U}}_{k}+\overline{\bm{F}}\bm{Y}_{d}\\
&=\begin{bmatrix}I_{p}-\bm{G}\bm{F}&0\\0&I_{q-p}\end{bmatrix}\overline{\bm{U}}_{k}
+\begin{bmatrix}\bm{G}\bm{F}\\0\end{bmatrix}\bm{Y}_{d},\quad\forall k\in\mathbb{Z}_{+}.
\endaligned
\end{equation}

\noindent From (\ref{eq37}) and (\ref{eq11}), (\ref{eq53}) is immediate.
\end{proof}

From the observability decomposition (\ref{eq53}) in Lemma \ref{lem9}, we can clearly verify that the iterative algorithm (\ref{eq4}) is essentially effective in approaching the $p$ observable state variables of the system \eqref{eq2}. It simultaneously follows from (\ref{eq53}) that the $(q-p)$ unobservable state variables essentially keep unchanged along the iteration axis. With these observations, we can present the following theorem to develop a solution result to the LAE (\ref{eq1}) without the observability hypothesis of its counterpart system (\ref{eq2}), which is realized with a different design condition of the iterative algorithm (\ref{eq4}) from that provided in Theorem \ref{thm1}.

\begin{thm}\label{thm2}
For any $\bm{Y}_{d}\in\mathbb{R}^{p}$, the LAE (\ref{eq1}) is solvable such that the sequence of $\bm{U}_{k}$, $\forall k\in\mathbb{Z}_{+}$ generated by (\ref{eq4}) converges exponentially fast with its converged value depending heavily on the initial condition $\bm{U}_{0}$ and forming a set given by
\begin{equation}\label{eq9}
\aligned
\mathcal{U}_{\mathrm{ILC}}(\bm{Y}_{d})
=\bigg\{\bm{U}_{\infty}&=\left[I_{q}-\bm{F}\left(\bm{G}\bm{F}\right)^{-1}\bm{G}\right]\bm{U}_{0}
+\bm{F}\left(\bm{G}\bm{F}\right)^{-1}\bm{Y}_{d}
\Big|\bm{U}_{0}\in\mathbb{R}^{q}\bigg\},\quad\forall\bm{Y}_{d}\in\mathbb{R}^{p}
\endaligned
\end{equation}

\noindent if and only if there exists some gain matrix $\bm{F}\in\mathbb{R}^{q\times p}$ fulfilling (\ref{eq8}). Furthermore, $\mathcal{U}_{\mathrm{ILC}}(\bm{Y}_{d})=\mathcal{U}_{d}(\bm{Y}_{d})$, $\forall\bm{Y}_{d}\in\mathbb{R}^{p}$ holds, where $\mathcal{U}_{d}\left(\bm{Y}_{d}\right)$ is the set of all solutions to the solvable LAE (\ref{eq1}), i.e.,
\begin{equation}\label{eq10}
\mathcal{U}_{d}\left(\bm{Y}_{d}\right)
=\left\{\bm{U}_{d}\in\mathbb{R}^{q}
\big|\bm{Y}_{d}=\bm{G}\bm{U}_{d}\right\},\quad\forall\bm{Y}_{d}\in\mathbb{R}^{p}.
\end{equation}
\end{thm}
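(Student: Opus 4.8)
The plan is to read the result off the observability decomposition already prepared in Lemmas \ref{lem2} and \ref{lem9}, handling the ``if and only if'' and the identity $\mathcal{U}_{\mathrm{ILC}}(\bm{Y}_d)=\mathcal{U}_d(\bm{Y}_d)$ in turn. For the equivalence with (\ref{eq8}) I would lean directly on Lemma \ref{lem2}: the requirement that the LAE (\ref{eq1}) be solvable for every $\bm{Y}_d\in\mathbb{R}^p$ is, by the equivalence of its statements 2)--3), exactly the existence of a gain $\bm{F}$ obeying $\rho(I_p-\bm{G}\bm{F})<1$. This disposes of the ``only if'' direction at once and simultaneously supplies the solvability needed afterwards. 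I would also record that (\ref{eq8}) forces $\bm{G}\bm{F}$ to be nonsingular, since no eigenvalue of $I_p-\bm{G}\bm{F}$ can equal $1$; this is what makes $(\bm{G}\bm{F})^{-1}$ in (\ref{eq9}) well defined.

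For the ``if'' direction, assuming (\ref{eq8}), I would invoke Lemma \ref{lem9} to pass to the decomposed coordinate $\overline{\bm{U}}_k=\bm{P}^{-1}\bm{U}_k$ with $\bm{P}=\left[\bm{F}(\bm{G}\bm{F})^{-1}~~\bm{N}\right]$ and $\bm{P}^{-1}=\left[\bm{G}^{\tp}~~\bm{M}^{\tp}\right]^{\tp}$, in which (\ref{eq53}) splits the iteration into a $p$-dimensional observable part and a $(q-p)$-dimensional unobservable part. Writing $\overline{\bm{U}}_k=\left[(\overline{\bm{U}}_k^{[1]})^{\tp}~~(\overline{\bm{U}}_k^{[2]})^{\tp}\right]^{\tp}$ accordingly, these obey $\overline{\bm{U}}_{k+1}^{[1]}=(I_p-\bm{G}\bm{F})\overline{\bm{U}}_k^{[1]}+\bm{G}\bm{F}\bm{Y}_d$ and $\overline{\bm{U}}_{k+1}^{[2]}=\overline{\bm{U}}_k^{[2]}$. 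By (\ref{eq8}) the first block converges exponentially to its unique fixed point $\bm{Y}_d$ (solving $\bm{G}\bm{F}\,\overline{\bm{U}}_\infty^{[1]}=\bm{G}\bm{F}\bm{Y}_d$ with $\bm{G}\bm{F}$ nonsingular), while the second stays frozen at its initial value $\overline{\bm{U}}_0^{[2]}=\bm{M}\bm{U}_0$; hence $\bm{U}_k$ converges exponentially and, transforming back, $\bm{U}_\infty=\bm{P}\overline{\bm{U}}_\infty=\bm{F}(\bm{G}\bm{F})^{-1}\bm{Y}_d+\bm{N}\bm{M}\bm{U}_0$.

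The single identity that converts this into (\ref{eq9}) is $\bm{N}\bm{M}=I_q-\bm{F}(\bm{G}\bm{F})^{-1}\bm{G}$, which I would read off the block relation $\bm{P}\bm{P}^{-1}=\bm{F}(\bm{G}\bm{F})^{-1}\bm{G}+\bm{N}\bm{M}=I_q$. Substituting yields (\ref{eq9}) verbatim, and letting $\bm{U}_0$ range over $\mathbb{R}^q$ sweeps out the whole set $\mathcal{U}_{\mathrm{ILC}}(\bm{Y}_d)$.

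Finally, for $\mathcal{U}_{\mathrm{ILC}}(\bm{Y}_d)=\mathcal{U}_d(\bm{Y}_d)$ I would verify the two inclusions directly. Premultiplying (\ref{eq9}) by $\bm{G}$ and using $\bm{G}\left[I_q-\bm{F}(\bm{G}\bm{F})^{-1}\bm{G}\right]=0$ together with $\bm{G}\bm{F}(\bm{G}\bm{F})^{-1}=I_p$ gives $\bm{G}\bm{U}_\infty=\bm{Y}_d$, so every element of $\mathcal{U}_{\mathrm{ILC}}(\bm{Y}_d)$ solves (\ref{eq1}); conversely, for any solution $\bm{U}_d$ the choice $\bm{U}_0=\bm{U}_d$ in (\ref{eq9}) returns $\bm{U}_\infty=\bm{U}_d$, since the two $\bm{F}(\bm{G}\bm{F})^{-1}$-terms cancel once $\bm{G}\bm{U}_d=\bm{Y}_d$ is used. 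I expect the only genuinely non-mechanical point to be the back-transformation step: keeping the coordinate systems $\bm{P}$ and $\bm{P}^{-1}$ straight and recognizing that the $\bm{U}_0$-dependent part of the limit is precisely the complementary oblique projector $\bm{N}\bm{M}=I_q-\bm{F}(\bm{G}\bm{F})^{-1}\bm{G}$ of $\bm{F}(\bm{G}\bm{F})^{-1}\bm{G}$; everything else follows from Lemmas \ref{lem2} and \ref{lem9} and two short matrix identities.
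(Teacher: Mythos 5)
Your proposal is correct and follows essentially the same route as the paper's own proof: Lemma \ref{lem2} for the equivalence with (\ref{eq8}), the observability decomposition of Lemma \ref{lem9} to split the iteration into a converging observable block and a frozen unobservable block, the identity $\bm{N}\bm{M}=I_{q}-\bm{F}\left(\bm{G}\bm{F}\right)^{-1}\bm{G}$ for the back-transformation, and the two inclusions for $\mathcal{U}_{\mathrm{ILC}}(\bm{Y}_{d})=\mathcal{U}_{d}(\bm{Y}_{d})$. The only cosmetic difference is that for the inclusion $\mathcal{U}_{d}(\bm{Y}_{d})\subseteq\mathcal{U}_{\mathrm{ILC}}(\bm{Y}_{d})$ you take $\bm{U}_{0}=\bm{U}_{d}$, whereas the paper exhibits the larger family $\bm{U}_{0}=\bm{U}_{d}+\bm{F}\bm{\alpha}$; both suffice.
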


\begin{proof}
From Lemma \ref{lem2}, we know the equivalence between the spectral radius condition (\ref{eq8}) and the solvability of the LAE (\ref{eq1}). By revisiting $\bm{P}^{-1}=\left[\bm{G}^{\tp},\bm{M}^{\tp}\right]^{\tp}$ and $\bm{P}=\left[\bm{F}\left(\bm{G}\bm{F}\right)^{-1},\bm{N}\right]$, we can also get $\bm{M}\bm{N}=I_{q-p}$, $\bm{G}\bm{N}=0$, and $\bm{F}\left(\bm{G}\bm{F}\right)^{-1}\bm{G}+\bm{N}\bm{M}=I_{q}$. If we denote $\overline{\bm{U}}_{2,k}=\bm{M}\bm{U}_{k}$, then we have
\[
\overline{\bm{U}}_{k}
=\begin{bmatrix}\bm{G}\bm{U}_{k}\\\bm{M}\bm{U}_{k}\end{bmatrix}
=\begin{bmatrix}\bm{Y}_{k}\\\overline{\bm{U}}_{2,k}\end{bmatrix}
\]

\noindent and consequently, the use of (\ref{eq53}) results in two subsystems as
\begin{equation}\label{eq12}
\bm{Y}_{k+1}
=\left(I_{p}-\bm{G}\bm{F}\right)\bm{Y}_{k}+\bm{G}\bm{F}\bm{Y}_{d},\quad\forall k\in\mathbb{Z}_{+}
\end{equation}

\noindent and
\begin{equation}\label{eq13}
\overline{\bm{U}}_{2,k+1}
=\overline{\bm{U}}_{2,k},\quad\forall k\in\mathbb{Z}_{+}.
\end{equation}

\noindent By integrating (\ref{eq8}) into (\ref{eq12}), we can conclude that $\bm{Y}_{k}$ converges exponentially fast with
\begin{equation}\label{eq14}
\bm{Y}_{\infty}
\triangleq\lim_{k\to\infty}\bm{Y}_{k}
=\bm{Y}_{d}
\end{equation}

\noindent and as a direct consequence of (\ref{eq13}), we have
\begin{equation}\label{eq15}
\overline{\bm{U}}_{2,k}
=\overline{\bm{U}}_{2,0}
=\bm{M}\bm{U}_{0},\quad\forall k\in\mathbb{Z}_{+}.
\end{equation}

\noindent Based on (\ref{eq14}) and (\ref{eq15}), we can leverage $\bm{U}_{k}=\bm{P}\overline{\bm{U}}_{k}$ and $\overline{\bm{U}}_{k}=\left[\bm{Y}_{k}^{\tp},\overline{\bm{U}}_{2,k}^{\tp}\right]^{\tp}$ to conclude that $\bm{U}_{k}$ converges exponentially fast with $\bm{U}_{\infty}$ given by
\begin{equation}\label{eq16}
\aligned
\bm{U}_{\infty}
&=\bm{P}\begin{bmatrix}
\lim\limits_{k\to\infty}\bm{Y}_{k}\\\lim\limits_{k\to\infty}\overline{\bm{U}}_{2,k}\end{bmatrix}\\
&=\bm{F}\left(\bm{G}\bm{F}\right)^{-1}\bm{Y}_{\infty}
+\bm{N}\overline{\bm{U}}_{2,0}\\
&=\bm{F}\left(\bm{G}\bm{F}\right)^{-1}\bm{Y}_{d}
+\bm{N}\bm{M}\bm{U}_{0}\\
&=\left[I_{q}-\bm{F}\left(\bm{G}\bm{F}\right)^{-1}\bm{G}\right]\bm{U}_{0}
+\bm{F}\left(\bm{G}\bm{F}\right)^{-1}\bm{Y}_{d}.
\endaligned
\end{equation}

\noindent From (\ref{eq9}) and (\ref{eq16}), we have $\bm{U}_{\infty}\in\mathcal{U}_{\mathrm{ILC}}(\bm{Y}_{d})$ for any $\bm{Y}_{d}\in\mathbb{R}^{p}$.

Besides, we can employ (\ref{eq16}) to get $\bm{G}\bm{U}_{\infty}=\bm{Y}_{d}$. This ensures $\bm{U}_{\infty}\in\mathcal{U}_{d}(\bm{Y}_{d})$, which implies $\mathcal{U}_{\mathrm{ILC}}(\bm{Y}_{d})\subseteq\mathcal{U}_{d}(\bm{Y}_{d})$. In addition, for any $\bm{U}_{d}\in\mathcal{U}_{d}(\bm{Y}_{d})$, we notice (\ref{eq10}) and can deduce
\begin{equation}\label{eq17}
\aligned
\bm{U}_{d}
&=\left[I_{q}-\bm{F}\left(\bm{G}\bm{F}\right)^{-1}\bm{G}\right]\bm{U}_{d}
+\bm{F}\left(\bm{G}\bm{F}\right)^{-1}\bm{G}\bm{U}_{d}\\
&=\left[I_{q}-\bm{F}\left(\bm{G}\bm{F}\right)^{-1}\bm{G}\right]\bm{U}_{0}
+\bm{F}\left(\bm{G}\bm{F}\right)^{-1}\bm{Y}_{d}
\endaligned
\end{equation}

\noindent where $\bm{U}_{0}=\bm{U}_{d}+\bm{F}\bm{\alpha}$, $\forall\bm{\alpha}\in\mathbb{R}^{p}$. Thus,  $\bm{U}_{d}\in\mathcal{U}_{\mathrm{ILC}}(\bm{Y}_{d})$ follows from (\ref{eq9}) and (\ref{eq17}), from which we have $\mathcal{U}_{d}(\bm{Y}_{d})\subseteq\mathcal{U}_{\mathrm{ILC}}(\bm{Y}_{d})$. As a consequence, $\mathcal{U}_{d}(\bm{Y}_{d})=\mathcal{U}_{\mathrm{ILC}}(\bm{Y}_{d})$ holds.
\end{proof}

\begin{rem}\label{rem2}
By Theorem \ref{thm2}, we clearly disclose that although the system given by (\ref{eq4}) and (\ref{eq3}) is not observable, it is possible to solve the LAE (\ref{eq1}) still based on the synthesis of this system if the LAE \eqref{eq1} is solvable for any given $\bm{Y}_d\in\mathbb{R}^p$. Furthermore, the implementation of the iterative algorithm (\ref{eq4}) can arrive at all solutions to LAEs through the selection of different initial conditions, for which an analytical description is provided in (\ref{eq9}). It is worth highlighting that the calculation of the iterative algorithm (\ref{eq4}) is linear, and we only need to determine the gain matrix $\bm{F}$ under the selection condition (\ref{eq8}) through the popular feedback controller design tools. As a candidate selection of $\bm{F}$, $\bm{F}=\bm{\sigma}\bm{G}^{\tp}$ can be directly adopted for $\bm{\sigma}\in\left(0,2/\lambda_{\max}\left(\bm{G}\bm{G}^{\tp}\right)\right)$, which together with (\ref{eq9}) also yields a monotonic convergence result of the iterative algorithm (\ref{eq4}), namely,
\begin{equation*}\label{}
\aligned
\left\|\bm{U}_{k+1}-\bm{U}_{\infty}\right\|_{2}
&=\left\|\left(I_{q}-\bm{F}\bm{G}\right)\left(\bm{U}_{k}-\bm{U}_{\infty}\right)\right\|_{2}\\
&=\left\|\left(I_{q}-\bm{\sigma}\bm{G}^{\tp}\bm{G}\right)\left(\bm{U}_{k}-\bm{U}_{\infty}\right)\right\|_{2}\\
&\leq\left\|\bm{U}_{k}-\bm{U}_{\infty}\right\|_{2},\quad\forall k\in\mathbb{Z}_{+}.
\endaligned
\end{equation*}
\end{rem}
\begin{rem}\label{rem3}
By (\ref{eq9}), we denote that for any given $\bm{Y}_{d}\in\mathbb{R}^{p}$,
\[\aligned
\bm{U}_{\infty}^{\prime}
&=\bm{F}\left(\bm{G}\bm{F}\right)^{-1}\bm{Y}_{d}\\
\bm{U}_{\infty}^{\prime\prime}\left(\bm{U}_{0}\right)
&=\left[I_{q}-\bm{F}\left(\bm{G}\bm{F}\right)^{-1}\bm{G}\right]\bm{U}_{0},\quad\forall\bm{U}_{0}\in\mathbb{R}^{q}.
\endaligned\]

\noindent Since we can take $\bm{G}^{+}=\bm{F}\left(\bm{G}\bm{F}\right)^{-1}$ such that $\bm{G}\bm{G}^{+}=I_{p}$ holds, $\bm{U}_{\infty}^{\prime}$ is clearly a particular solution for the LAE (\ref{eq1}) despite any given $\bm{Y}_{d}\in\mathbb{R}^{p}$, which is determined essentially by the iteration convergence of $\bm{Y}_{k}$, as shown in (\ref{eq16}). From (\ref{eq53}), we know that $\bm{Y}_{k}$ collects the observable state variables. These observations reveal actually that the observable states of the system (\ref{eq2}) are tied tightly to the particular solutions for the LAE (\ref{eq1}). Besides, by contrast to $\bm{U}_{\infty}^{\prime}$, $\bm{U}_{\infty}^{\prime\prime}\left(\bm{U}_{0}\right)$ is a general solution for the LAE (\ref{eq1}) (i.e., a solution for the homogeneous equation $\bm{G}\bm{U}_{d}=0$). In view of (\ref{eq16}), $\bm{U}_{\infty}^{\prime\prime}\left(\bm{U}_{0}\right)$ is calculated essentially through the iteration convergence of $\overline{\bm{U}}_{2,k}$, where $\overline{\bm{U}}_{2,k}$ is actually composed of the unobservable state variables of the system (\ref{eq2}) by noting the equivalence relation between the unobservability subspace of the system \eqref{eq2} and that of the system \eqref{eq4} and \eqref{eq3}. Moreover, a fact worth highlighting is that $\overline{\bm{U}}_{2,k}$ remains unchanged for all iterations, as shown in (\ref{eq15}), and as a result, $\bm{U}_{\infty}^{\prime\prime}\left(\bm{U}_{0}\right)$ depends linearly on the initial condition $\bm{U}_{0}$.
\end{rem}

As a consequence of Lemma \ref{lem6} and Theorem \ref{thm2}, the following corollary presents a deterministic relation between iterative solutions of LAEs and the unobservability subspace.

\begin{cor}\label{cor6}
For the LAE (\ref{eq1}), let the iterative algorithm (\ref{eq4}) be applied under the condition (\ref{eq8}). Then it follows
\[
\mathcal{O}_{NO}
=\sn\left[I_{q}-\bm{F}\left(\bm{G}\bm{F}\right)^{-1}\bm{G}\right]
\]

\noindent which implies
\[
\mathcal{U}_{d}(\bm{Y}_{d})
=\mathcal{U}_{\mathrm{ILC}}(\bm{Y}_{d})
=\mathcal{O}_{NO}+\bm{F}\left(\bm{G}\bm{F}\right)^{-1}\bm{Y}_{d}, \quad\forall\bm{Y}_{d}\in\mathbb{R}^{p}.
\]
\end{cor}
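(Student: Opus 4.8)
The plan is to prove the two displayed identities in sequence, since the second is a direct consequence of the first together with Theorem \ref{thm2}. Throughout I abbreviate the matrix $\bm{\Pi}=I_{q}-\bm{F}\left(\bm{G}\bm{F}\right)^{-1}\bm{G}$, which is well defined because condition (\ref{eq8}) guarantees, through the equivalences of Lemma \ref{lem2}, that $\bm{G}\bm{F}$ is nonsingular and hence $\left(\bm{G}\bm{F}\right)^{-1}$ exists.

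First I would establish $\mathcal{O}_{NO}=\sn\bm{\Pi}$. By Lemma \ref{lem6} we have $\mathcal{O}_{NO}=\nl\bm{G}$, so it suffices to show $\nl\bm{G}=\sn\bm{\Pi}$. For the inclusion $\sn\bm{\Pi}\subseteq\nl\bm{G}$, I note the single identity $\bm{G}\bm{\Pi}=\bm{G}-\bm{G}\bm{F}\left(\bm{G}\bm{F}\right)^{-1}\bm{G}=\bm{G}-\bm{G}=0$, which shows that every vector of the form $\bm{\Pi}\bm{z}$ is annihilated by $\bm{G}$ and therefore lies in $\nl\bm{G}$. For the reverse inclusion, any $\bm{x}$ with $\bm{G}\bm{x}=0$ satisfies $\bm{\Pi}\bm{x}=\bm{x}-\bm{F}\left(\bm{G}\bm{F}\right)^{-1}\bm{G}\bm{x}=\bm{x}$, so that $\bm{x}=\bm{\Pi}\bm{x}\in\sn\bm{\Pi}$. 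Structurally these two facts simply say that $\bm{\Pi}$ is the idempotent projector onto $\nl\bm{G}$, but the two inclusions already give $\nl\bm{G}=\sn\bm{\Pi}$ directly, and hence $\mathcal{O}_{NO}=\sn\bm{\Pi}$.

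Next I would derive the set equality. The analytical description (\ref{eq9}) of Theorem \ref{thm2} reads $\mathcal{U}_{\mathrm{ILC}}(\bm{Y}_{d})=\left\{\bm{\Pi}\bm{U}_{0}+\bm{F}\left(\bm{G}\bm{F}\right)^{-1}\bm{Y}_{d}\,\big|\,\bm{U}_{0}\in\mathbb{R}^{q}\right\}$. As $\bm{U}_{0}$ ranges over all of $\mathbb{R}^{q}$, the term $\bm{\Pi}\bm{U}_{0}$ sweeps out precisely the column space $\sn\bm{\Pi}=\mathcal{O}_{NO}$ established in the first part, while the remaining term $\bm{F}\left(\bm{G}\bm{F}\right)^{-1}\bm{Y}_{d}$ is a single fixed vector. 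By the set-translation notation $\bm{\Omega}+\bm{x}$ introduced in the Notations, this is exactly $\mathcal{U}_{\mathrm{ILC}}(\bm{Y}_{d})=\mathcal{O}_{NO}+\bm{F}\left(\bm{G}\bm{F}\right)^{-1}\bm{Y}_{d}$ for every $\bm{Y}_{d}\in\mathbb{R}^{p}$. Finally, Theorem \ref{thm2} already supplies $\mathcal{U}_{d}(\bm{Y}_{d})=\mathcal{U}_{\mathrm{ILC}}(\bm{Y}_{d})$ under (\ref{eq8}), which chains the three sets together and completes the argument.

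I do not anticipate a genuine obstacle: the corollary is essentially a repackaging of the analytical description (\ref{eq9}) once the range of $\bm{\Pi}$ is identified. The only point requiring a little care is verifying \emph{both} inclusions so that $\sn\bm{\Pi}$ is exactly $\nl\bm{G}$ rather than a proper subspace of it; this is precisely where the nonsingularity of $\bm{G}\bm{F}$, equivalently the solvability of the LAE (\ref{eq1}) for every $\bm{Y}_{d}$, is invoked, and it is the single place the hypothesis (\ref{eq8}) enters the proof.
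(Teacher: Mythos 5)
Your proof is correct, and it reaches the same two conclusions the paper does, but the route to the key identity $\nl\bm{G}=\sn\bigl[I_{q}-\bm{F}\left(\bm{G}\bm{F}\right)^{-1}\bm{G}\bigr]$ is different. The paper reuses the observability-decomposition machinery already built for Lemma \ref{lem9} and Theorem \ref{thm2}: writing $\bm{P}^{-1}=\left[\bm{G}^{\tp},\bm{M}^{\tp}\right]^{\tp}$ and $\bm{P}=\left[\bm{F}\left(\bm{G}\bm{F}\right)^{-1},\bm{N}\right]$, it invokes $\bm{G}\bm{N}=0$, the full-rank properties of $\bm{N}$ and $\bm{M}$, and the identity $\bm{N}\bm{M}=I_{q}-\bm{F}\left(\bm{G}\bm{F}\right)^{-1}\bm{G}$ to get the chain $\nl\bm{G}=\sn\bm{N}=\sn\left(\bm{N}\bm{M}\right)=\sn\bm{\Pi}$, after which Lemma \ref{lem6} and Theorem \ref{thm2} finish the job. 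You instead verify the identity from scratch by the two-inclusion projector argument: $\bm{G}\bm{\Pi}=0$ gives $\sn\bm{\Pi}\subseteq\nl\bm{G}$, and $\bm{\Pi}\bm{x}=\bm{x}$ for $\bm{x}\in\nl\bm{G}$ gives the reverse. Your version is more elementary and self-contained --- it never needs the existence or properties of $\bm{M}$ and $\bm{N}$, only the nonsingularity of $\bm{G}\bm{F}$ guaranteed by (\ref{eq8}) --- and it makes transparent exactly where the hypothesis enters; the paper's version is shorter on the page precisely because it amortizes work already done in the decomposition lemmas, and it keeps the corollary tied to the observability-decomposition narrative that motivates the section. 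The second half of your argument (reading (\ref{eq9}) as the translate of the range of $\bm{\Pi}$ by the particular solution, then citing $\mathcal{U}_{d}(\bm{Y}_{d})=\mathcal{U}_{\mathrm{ILC}}(\bm{Y}_{d})$ from Theorem \ref{thm2}) matches the paper's intent exactly. One small wording caveat: nonsingularity of $\bm{G}\bm{F}$ for your \emph{particular} $\bm{F}$ is not literally equivalent to solvability of (\ref{eq1}) for every $\bm{Y}_{d}$; the equivalence in Lemma \ref{lem2} concerns the \emph{existence} of some $\bm{F}$ satisfying (\ref{eq8}). This does not affect the proof, since (\ref{eq8}) for the applied $\bm{F}$ already yields the nonsingularity you use.
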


\begin{proof}
Thanks to the following relation:
\[
\nl\bm{G}=\sn\bm{N}=\sn\left(\bm{N}\bm{M}\right)=
\sn\left[I_{q}-\bm{F}\left(\bm{G}\bm{F}\right)^{-1}\bm{G}\right]\]

\noindent this corollary is immediate from Lemma \ref{lem6} and Theorem \ref{thm2}.
\end{proof}

From Corollary \ref{cor6}, we can derive that for any given $\bm{Y}_{d}\in\mathbb{R}^{p}$, the solution set of the LAE (\ref{eq1}) is parallel to the unobservability subspace of the system (\ref{eq2}) with an offset $\bm{F}(\bm{G}\bm{F})^{-1}\bm{Y}_{d}$. Of note is that this offset is exactly a particular solution to the LAE (\ref{eq1}), and has a close connection to the observable state of the system (\ref{eq2}) (see Remark \ref{rem3}). Furthermore, we have $\bm{F}\left(\bm{G}\bm{F}\right)^{-1}\bm{Y}_{d}\in\mathcal{O}_{O}$, $\forall\bm{Y}_{d}\in\mathbb{R}^{p}$ if and only if $\sn\bm{F}=\sn\bm{G}^{\tp}$, that is, an observable state of the system (\ref{eq2}) denotes exactly the distance between the solution set of the LAE (\ref{eq1}) and the unobservability subspace of the system (\ref{eq2}) once the selection of $\bm{F}$ fulfills $\sn\bm{F}=\sn\bm{G}^{\tp}$.

In Theorems \ref{thm1} and \ref{thm2}, the solving problem of the LAE (\ref{eq1}) is considered based on the iterative algorithm (\ref{eq4}) for the cases of $\bm{G}$ with full-column rank and full-row rank, respectively. Next, we aim at solving the LAE (\ref{eq1}) with the iterative algorithm (\ref{eq4}) for general cases without requiring any full rank conditions of $\bm{G}$. Before proceeding further with this discussion, we provide a helpful lemma about the rank decomposition of $\bm{G}$ \cite{lt:85}.

\begin{lem}\label{lem3}
Let $\rank\left(\bm{G}\right)=m$. Then two matrices $\bm{H}\in\mathbb{R}^{p\times m}$ and $\widehat{\bm{G}}\in\mathbb{R}^{m\times q}$ can be determined such that
\begin{equation}\label{eq18}
\bm{G}=\bm{H}\widehat{\bm{G}}
\end{equation}

\noindent where $\rank\left(\bm{H}\right)=\rank\left(\widehat{\bm{G}}\right)=m$. Moreover, there can always be designed some gain matrix $\bm{F}\in\mathbb{R}^{q\times p}$ such that
\begin{equation}\label{eq39}
\rho\left(I_{m}-\widehat{\bm{G}}\bm{F}\bm{H}\right)<1.
\end{equation}
%
\end{lem}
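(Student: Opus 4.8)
The plan is to split the statement into its two assertions: the full-rank (rank) factorization of $\bm{G}$, and the subsequent existence of a gain matrix $\bm{F}$ placing the spectrum of $I_m-\widehat{\bm{G}}\bm{F}\bm{H}$ inside the open unit disk. For the first assertion I would invoke the standard full-rank factorization from \cite{lt:85}: taking the $m=\rank(\bm{G})$ columns of a basis of $\sn\bm{G}$ as the columns of $\bm{H}\in\mathbb{R}^{p\times m}$ makes $\rank(\bm{H})=m$, and expressing each column of $\bm{G}$ in this basis assembles the coefficient matrix $\widehat{\bm{G}}\in\mathbb{R}^{m\times q}$ with $\bm{G}=\bm{H}\widehat{\bm{G}}$. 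A rank sandwich $m=\rank(\bm{G})\leq\min\{\rank(\bm{H}),\rank(\widehat{\bm{G}})\}\leq m$ then forces $\rank(\widehat{\bm{G}})=m$ as well, so that $\bm{H}$ is of full column rank and $\widehat{\bm{G}}$ of full row rank.

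For the second assertion, the key is that these two full-rank properties render $\bm{H}^{\tp}\bm{H}$ and $\widehat{\bm{G}}\widehat{\bm{G}}^{\tp}$ nonsingular, so that $\widehat{\bm{G}}$ admits the right inverse $\widehat{\bm{G}}^{\tp}(\widehat{\bm{G}}\widehat{\bm{G}}^{\tp})^{-1}$ and $\bm{H}$ the left inverse $(\bm{H}^{\tp}\bm{H})^{-1}\bm{H}^{\tp}$. The cleanest existence argument is then to set
\[
\bm{F}=\widehat{\bm{G}}^{\tp}\left(\widehat{\bm{G}}\widehat{\bm{G}}^{\tp}\right)^{-1}\left(\bm{H}^{\tp}\bm{H}\right)^{-1}\bm{H}^{\tp}\in\mathbb{R}^{q\times p},
\]
which collapses by direct cancellation to $\widehat{\bm{G}}\bm{F}\bm{H}=I_m$, whence $\rho(I_m-\widehat{\bm{G}}\bm{F}\bm{H})=\rho(0)=0<1$. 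This already establishes the claim with room to spare.

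To stay in line with the scalar-gain candidates used elsewhere in the paper (cf. Corollary \ref{cor5} and Remark \ref{rem2}), I would also record the alternative $\bm{F}=\bm{\sigma}\bm{G}^{\tp}=\bm{\sigma}\widehat{\bm{G}}^{\tp}\bm{H}^{\tp}$, for which $\widehat{\bm{G}}\bm{F}\bm{H}=\bm{\sigma}(\widehat{\bm{G}}\widehat{\bm{G}}^{\tp})(\bm{H}^{\tp}\bm{H})$. Here lies the one nontrivial point, and the main obstacle of the argument: this is a product of two symmetric positive-definite matrices, so it need not itself be symmetric, yet it is similar (via $(\widehat{\bm{G}}\widehat{\bm{G}}^{\tp})^{1/2}$) to the symmetric positive-definite matrix $(\widehat{\bm{G}}\widehat{\bm{G}}^{\tp})^{1/2}\bm{H}^{\tp}\bm{H}(\widehat{\bm{G}}\widehat{\bm{G}}^{\tp})^{1/2}$ and hence has a real, strictly positive spectrum $\mu_1,\cdots,\mu_m$. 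Choosing $\bm{\sigma}\in(0,2/\lambda_{\max}(\widehat{\bm{G}}\widehat{\bm{G}}^{\tp}\bm{H}^{\tp}\bm{H}))$ then renders the eigenvalues $1-\bm{\sigma}\mu_i$ of $I_m-\widehat{\bm{G}}\bm{F}\bm{H}$ all strictly inside $(-1,1)$, so that $\rho(I_m-\widehat{\bm{G}}\bm{F}\bm{H})<1$ and the existence claim follows. Both routes are short; the only step requiring genuine care is the positivity of the spectrum of the product of the two positive-definite factors in the scalar-gain route.
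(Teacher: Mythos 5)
Your proof is correct and takes essentially the same approach as the paper: the same full-rank factorization of $\bm{G}$ cited from \cite{lt:85}, followed by an explicit pseudo-inverse-style construction of the gain, where your choice $\bm{F}=\widehat{\bm{G}}^{\tp}\left(\widehat{\bm{G}}\widehat{\bm{G}}^{\tp}\right)^{-1}\left(\bm{H}^{\tp}\bm{H}\right)^{-1}\bm{H}^{\tp}$ is exactly the paper's candidate (\ref{eq38}) specialized to $\widetilde{\bm{F}}=0$, giving $\widehat{\bm{G}}\bm{F}\bm{H}=I_{m}$. The only difference is that the paper keeps the free parameter $\widetilde{\bm{F}}$ with $\rho\left(\widetilde{\bm{F}}\right)<1$ inside this expression (design freedom it exploits later, e.g., taking $\widetilde{\bm{F}}$ nilpotent for the finite-iteration convergence results), while your added scalar-gain route $\bm{F}=\bm{\sigma}\bm{G}^{\tp}$ reproduces what the paper defers to Corollary \ref{cor7}.
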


\begin{proof}
We can obtain (\ref{eq18}) as a direct consequence of the use of the full rank decomposition technique of matrices (see, e.g., Proposition 3 of \cite[Subsection 3.8]{lt:85}). Due to $\rank\left(\bm{H}\right)=\rank\left(\widehat{\bm{G}}\right)=m$, $\bm{H}$ and $\widehat{\bm{G}}$ in (\ref{eq18}) are full-column and full-row rank matrices, respectively. Hence, we can accomplish (\ref{eq39}) by particularly taking $\bm{F}$ in the form of
\begin{equation}\label{eq38}
\bm{F}=\widehat{\bm{G}}^{\tp}\left(\widehat{\bm{G}}\widehat{\bm{G}}^{\tp}\right)^{-1}\left(I_{m}-\widetilde{\bm{F}}\right)
\left(\bm{H}^{\tp}\bm{H}\right)^{-1}\bm{H}^{\tp}
\end{equation}

\noindent for any matrix $\widetilde{\bm{F}}\in\mathbb{R}^{m\times m}$ such that $\rho\left(\widetilde{\bm{F}}\right)<1$. This obviously implies that (\ref{eq39}) always holds for some $\bm{F}\in\mathbb{R}^{q\times p}$. 
\end{proof}

\begin{rem}\label{rem4}
Similarly to the derivation of (\ref{eq38}), if we resort to the full rank properties of $\bm{H}$ and $\widehat{\bm{G}}$, then we can benefit from (\ref{eq18}) to deduce $\widehat{\bm{G}}=\left(\bm{H}^{\tp}\bm{H}\right)^{-1}\bm{H}^{\tp}\bm{G}$ and  $\bm{H}=\bm{G}\widehat{\bm{G}}^{\tp}\left(\widehat{\bm{G}}\widehat{\bm{G}}^{\tp}\right)^{-1}$. It is thus immediate to arrive at
\begin{equation}\label{eq54}
\bm{G}=\bm{H}\left(\bm{H}^{\tp}\bm{H}\right)^{-1}\bm{H}^{\tp}\bm{G}
=\bm{G}\widehat{\bm{G}}^{\tp}\left(\widehat{\bm{G}}\widehat{\bm{G}}^{\tp}\right)^{-1}\widehat{\bm{G}}
\end{equation}

\noindent where both $\bm{H}\left(\bm{H}^{\tp}\bm{H}\right)^{-1}\bm{H}^{\tp}$ and $\widehat{\bm{G}}^{\tp}\left(\widehat{\bm{G}}\widehat{\bm{G}}^{\tp}\right)^{-1}\widehat{\bm{G}}$ are idempotent matrices. Besides, we have $\sn\bm{G}=\sn\bm{H}$ and $\sn\bm{G}^{\tp}=\sn\widehat{\bm{G}}^{\tp}$ from the full rank properties of $\widehat{\bm{G}}$ and $\bm{H}$, respectively.
\end{rem}

To proceed further with Lemma \ref{lem3}, we can introduce a least squares solution result of any unsolvable LAE (\ref{eq1}).

\begin{lem}\label{lem10}
For any $\bm{Y}_{d}\in\mathbb{R}^{p}$, the least squares solutions of the unsolvable LAE (\ref{eq1}) can be described in the form of
\begin{equation}\label{eq29}
\aligned
\bm{U}_{d}
&=\arg{\min_{\bm{\Omega}\in\mathbb{R}^{q}}}\left\|\bm{Y}_{d}-\bm{G}\bm{\Omega}\right\|_{2}\\
&=\widehat{\bm{G}}^{\tp}\left(\widehat{\bm{G}}\widehat{\bm{G}}^{\tp}\right)^{-1}\left(\bm{H}^{\tp}\bm{H}\right)^{-1}\bm{H}^{\tp}\bm{Y}_{d}
+\bm{\gamma},\quad\forall\bm{\gamma}\in\rm{null}\bm{G}.
\endaligned
\end{equation}

\noindent If (\ref{eq39}) is fulfilled, then (\ref{eq29}) can be described alternatively as
\begin{equation}\label{eq56}
\aligned
\bm{U}_{d}
&=\arg{\min_{\bm{\Omega}\in\mathbb{R}^{q}}}\left\|\bm{Y}_{d}-\bm{G}\bm{\Omega}\right\|_{2}\\
&=\bm{F}\bm{H}\left(\bm{H}^{\tp}\bm{G}\bm{F}\bm{H}\right)^{-1}\bm{H}^{\tp}\bm{Y}_{d}
+\bm{\gamma},\quad\forall\bm{\gamma}\in\rm{null}\bm{G}.
\endaligned
\end{equation}
\end{lem}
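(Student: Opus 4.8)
The plan is to characterize the least squares solutions through the normal equations and then check each of the two closed forms against that characterization. Recall that $\bm{U}_{d}$ minimizes $\|\bm{Y}_{d}-\bm{G}\bm{U}_{d}\|_{2}$ if and only if it solves the normal equations $\bm{G}^{\tp}\bm{G}\bm{U}_{d}=\bm{G}^{\tp}\bm{Y}_{d}$, and that the general solution of this (always consistent) system is any particular solution plus an arbitrary element of $\nl(\bm{G}^{\tp}\bm{G})=\nl\bm{G}$, the last identity being standard since $\bm{G}^{\tp}\bm{G}\bm{x}=0$ forces $\|\bm{G}\bm{x}\|_{2}=0$. Hence it suffices to exhibit one particular least squares solution for each expression; the additive term $\bm{\gamma}\in\nl\bm{G}$ appearing in both (\ref{eq29}) and (\ref{eq56}) then accounts for the full solution set. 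An equivalent route, if preferred, is the $\{1,3\}$-inverse characterization recalled in Remark \ref{rem10}, namely that $\bm{U}_{d}$ is a least squares solution if and only if $\bm{G}\bm{U}_{d}=\bm{G}\bm{G}^{(1,3)}\bm{Y}_{d}$.

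First I would establish (\ref{eq29}). Writing $\bm{G}=\bm{H}\widehat{\bm{G}}$ from the full rank decomposition (\ref{eq18}) of Lemma \ref{lem3}, the normal equations become $\widehat{\bm{G}}^{\tp}\bm{H}^{\tp}\bm{H}\widehat{\bm{G}}\bm{U}_{d}=\widehat{\bm{G}}^{\tp}\bm{H}^{\tp}\bm{Y}_{d}$. Denote the proposed particular solution by $\bm{U}_{d}^{\star}=\widehat{\bm{G}}^{\tp}(\widehat{\bm{G}}\widehat{\bm{G}}^{\tp})^{-1}(\bm{H}^{\tp}\bm{H})^{-1}\bm{H}^{\tp}\bm{Y}_{d}$. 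The full-row rank of $\widehat{\bm{G}}$ and the full-column rank of $\bm{H}$ (Lemma \ref{lem3}) guarantee the indicated inverses exist, and a direct substitution gives $\widehat{\bm{G}}\bm{U}_{d}^{\star}=(\bm{H}^{\tp}\bm{H})^{-1}\bm{H}^{\tp}\bm{Y}_{d}$, whence $\bm{H}^{\tp}\bm{H}\widehat{\bm{G}}\bm{U}_{d}^{\star}=\bm{H}^{\tp}\bm{Y}_{d}$, and the normal equations follow after left multiplication by $\widehat{\bm{G}}^{\tp}$. This confirms that $\bm{U}_{d}^{\star}$ is a least squares solution and proves (\ref{eq29}).

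For (\ref{eq56}) I would show that, under (\ref{eq39}), its particular part $\bm{U}_{d}^{\star\star}=\bm{F}\bm{H}(\bm{H}^{\tp}\bm{G}\bm{F}\bm{H})^{-1}\bm{H}^{\tp}\bm{Y}_{d}$ has the same image under $\bm{G}$ as $\bm{U}_{d}^{\star}$, so that the two differ only by an element of $\nl\bm{G}$ and therefore describe the same set. The key observation is the factorization $\bm{G}\bm{F}\bm{H}=\bm{H}(\widehat{\bm{G}}\bm{F}\bm{H})$, where (\ref{eq39}) forces $\rho(I_{m}-\widehat{\bm{G}}\bm{F}\bm{H})<1$ and hence makes $\bm{S}\triangleq\widehat{\bm{G}}\bm{F}\bm{H}$ nonsingular, so that $(\bm{H}^{\tp}\bm{G}\bm{F}\bm{H})^{-1}=(\bm{H}^{\tp}\bm{H}\bm{S})^{-1}=\bm{S}^{-1}(\bm{H}^{\tp}\bm{H})^{-1}$ is well defined. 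Substituting, the oblique-looking map $\bm{G}\bm{F}\bm{H}(\bm{H}^{\tp}\bm{G}\bm{F}\bm{H})^{-1}\bm{H}^{\tp}=\bm{H}\bm{S}\,\bm{S}^{-1}(\bm{H}^{\tp}\bm{H})^{-1}\bm{H}^{\tp}$ collapses to the orthogonal projector $\bm{H}(\bm{H}^{\tp}\bm{H})^{-1}\bm{H}^{\tp}$ onto $\sn\bm{H}=\sn\bm{G}$, giving $\bm{G}\bm{U}_{d}^{\star\star}=\bm{H}(\bm{H}^{\tp}\bm{H})^{-1}\bm{H}^{\tp}\bm{Y}_{d}=\bm{G}\bm{U}_{d}^{\star}$, which equals $\bm{G}\bm{G}^{(1,3)}\bm{Y}_{d}$ and thus matches the characterization of Remark \ref{rem10}.

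The routine parts are the two substitutions verifying how the inverses act. The only step demanding genuine care is the collapse in (\ref{eq56}): one must spot the factorization $\bm{G}\bm{F}\bm{H}=\bm{H}(\widehat{\bm{G}}\bm{F}\bm{H})$ and invoke (\ref{eq39}) precisely to secure nonsingularity of $\widehat{\bm{G}}\bm{F}\bm{H}$, without which the expression in (\ref{eq56}) would be ill-posed. I expect this to be the main obstacle; everything else reduces to bookkeeping with the full-rank factors $\bm{H}$ and $\widehat{\bm{G}}$ from Lemma \ref{lem3} and Remark \ref{rem4}.
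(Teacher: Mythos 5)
Your proof is correct, and it takes a genuinely different route from the paper's. You characterize the least squares set once and for all through the normal equations $\bm{G}^{\tp}\bm{G}\bm{U}_{d}=\bm{G}^{\tp}\bm{Y}_{d}$ together with $\nl\left(\bm{G}^{\tp}\bm{G}\right)=\nl\bm{G}$, and then verify both closed forms by substitution; your collapse $\bm{G}\bm{F}\bm{H}\left(\bm{H}^{\tp}\bm{G}\bm{F}\bm{H}\right)^{-1}\bm{H}^{\tp}=\bm{H}\left(\bm{H}^{\tp}\bm{H}\right)^{-1}\bm{H}^{\tp}$ under (\ref{eq39}), resting on the nonsingularity of $\widehat{\bm{G}}\bm{F}\bm{H}$, is exactly the right observation and settles (\ref{eq56}) by showing the two particular solutions differ by an element of $\nl\bm{G}$. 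The paper never writes the normal equations: it splits the minimization through the surjection $\widehat{\bm{G}}$, establishing $\min_{\bm{\Omega}\in\mathbb{R}^{q}}\left\|\bm{Y}_{d}-\bm{G}\bm{\Omega}\right\|_{2}=\min_{\widehat{\bm{\Omega}}\in\mathbb{R}^{m}}\left\|\bm{Y}_{d}-\bm{H}\widehat{\bm{\Omega}}\right\|_{2}$ in (\ref{eq25}), solves the full-column-rank subproblem uniquely in (\ref{eq26}), and concludes that the least squares solutions of (\ref{eq1}) are precisely the solutions of the consistent full-row-rank LAE $\widehat{\bm{G}}\bm{U}_{d}=\left(\bm{H}^{\tp}\bm{H}\right)^{-1}\bm{H}^{\tp}\bm{Y}_{d}$ in (\ref{eq57}); both particular solutions in (\ref{eq29}) and (\ref{eq56}) are then checked against that intermediate equation rather than against $\bm{G}^{\tp}\bm{G}\bm{U}_{d}=\bm{G}^{\tp}\bm{Y}_{d}$. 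Your version is more elementary and self-contained, needing only the classical orthogonality characterization of least squares. The paper's version buys structure: the reduction recasts the least squares problem as an exactly solvable LAE of the kind handled by Lemma \ref{lem2} and Theorem \ref{thm2}, and its ingredients (\ref{eq25}) and (\ref{eq26}) are reused verbatim later, in the chain of equalities (\ref{eq28}) inside the proof of Theorem \ref{thm3}. Both arguments are complete; the trade-off is a quick direct verification versus a reduction whose intermediate objects feed the rest of the paper.
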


\begin{proof}
For any $\bm{\Omega}\in\mathbb{R}^{q}$, we have $\widehat{\bm{\Omega}}=\widehat{\bm{G}}\bm{\Omega}\in\mathbb{R}^{m}$, by which $\min_{\bm{\Omega}\in\mathbb{R}^{q}}\left\|\bm{Y}_{d}-\bm{H}\widehat{\bm{G}}\bm{\Omega}\right\|_{2}
\geq\min_{\widehat{\bm{\Omega}}\in\mathbb{R}^{m}}\left\|\bm{Y}_{d}-\bm{H}\widehat{\bm{\Omega}}\right\|_{2}$ holds. With the full-row rank property of $\widehat{\bm{G}}$, we can verify that for any $\widehat{\bm{\Omega}}\in\mathbb{R}^{m}$, $\bm{\Omega}=\widehat{\bm{G}}^{\tp}\left(\widehat{\bm{G}}\widehat{\bm{G}}^{\tp}\right)^{-1}\widehat{\bm{\Omega}}\in\mathbb{R}^{q}$ fulfills $\widehat{\bm{G}}\bm{\Omega}=\widehat{\bm{\Omega}}$, and therefore $\min_{\bm{\Omega}\in\mathbb{R}^{q}}\left\|\bm{Y}_{d}-\bm{H}\widehat{\bm{G}}\bm{\Omega}\right\|_{2}
\leq\min_{\widehat{\bm{\Omega}}\in\mathbb{R}^{m}}\left\|\bm{Y}_{d}-\bm{H}\widehat{\bm{\Omega}}\right\|_{2}$ holds. From (\ref{eq18}), we can consequently arrive at
\begin{equation}\label{eq25}
\aligned
\min_{\bm{\Omega}\in\mathbb{R}^{q}}\left\|\bm{Y}_{d}-\bm{G}\bm{\Omega}\right\|_{2}
&=\min_{\bm{\Omega}\in\mathbb{R}^{q}}\left\|\bm{Y}_{d}-\bm{H}\widehat{\bm{G}}\bm{\Omega}\right\|_{2}\\
&=\min_{\widehat{\bm{\Omega}}\in\mathbb{R}^{m}}\left\|\bm{Y}_{d}-\bm{H}\widehat{\bm{\Omega}}\right\|_{2}.
\endaligned
\end{equation}

\noindent Since $\bm{H}$ is a full-column rank matrix, the least squares solution to the unsolvable LAE $\bm{Y}_{d}=\bm{H}\widehat{\bm{U}}_{d}$ is unique, which is actually given by (for the same reason as the proof of Lemma \ref{lem8})
\begin{equation}\label{eq26}
\widehat{\bm{U}}_{d}
=\arg{\min_{\widehat{\bm{\Omega}}\in\mathbb{R}^{m}}}\left\|\bm{Y}_{d}-\bm{H}\widehat{\bm{\Omega}}\right\|_{2}
=\left(\bm{H}^{\tp}\bm{H}\right)^{-1}\bm{H}^{\tp}\bm{Y}_{d}.
\end{equation}

\noindent By incorporating (\ref{eq26}) into (\ref{eq25}), we can thus develop that $\bm{U}_{d}=\arg{\min_{\bm{\Omega}\in\mathbb{R}^{q}}}\left\|\bm{Y}_{d}-\bm{G}\bm{\Omega}\right\|_{2}$ is the solution to the following LAE:
\begin{equation}\label{eq57}
\widehat{\bm{G}}\bm{U}_{d}=\left(\bm{H}^{\tp}\bm{H}\right)^{-1}\bm{H}^{\tp}\bm{Y}_{d}.
\end{equation}

\noindent Again by the full-row rank property of $\widehat{\bm{G}}$, the LAE (\ref{eq57}) always has solutions that can be expressed in the form of (see Lemma \ref{lem2} and Theorem 2 of \cite[Subsection 3.10]{lt:85})
\[
\bm{U}_{d}=\widehat{\bm{G}}^{\tp}\left(\widehat{\bm{G}}\widehat{\bm{G}}^{\tp}\right)^{-1}\left(\bm{H}^{\tp}\bm{H}\right)^{-1}\bm{H}^{\tp}\bm{Y}_{d}
+\bm{\gamma},\quad\forall\bm{\gamma}\in\rm{null}\widehat{\bm{G}}
\]

\noindent which, together with $\bm{\gamma}\in\rm{null}\widehat{\bm{G}}\Leftrightarrow\bm{\gamma}\in\rm{null}\bm{G}$ resulting from the full-column rank property of $\bm{H}$, yields (\ref{eq29}). To proceed, if (\ref{eq39}) holds, $\widehat{\bm{G}}\bm{F}\bm{H}$ is nonsingular, and thus, $\bm{F}\bm{H}\left(\bm{H}^{\tp}\bm{G}\bm{F}\bm{H}\right)^{-1}\bm{H}^{\tp}\bm{Y}_{d}$ is a particular solution of the LAE (\ref{eq57}) thanks to
\[\aligned
\widehat{\bm{G}}\bm{F}\bm{H}\left(\bm{H}^{\tp}\bm{G}\bm{F}\bm{H}\right)^{-1}\bm{H}^{\tp}\bm{Y}_{d}
&=\widehat{\bm{G}}\bm{F}\bm{H}\left(\bm{H}^{\tp}\bm{H}\widehat{\bm{G}}\bm{F}\bm{H}\right)^{-1}\bm{H}^{\tp}\bm{Y}_{d}\\
&=\left(\bm{H}^{\tp}\bm{H}\right)^{-1}\bm{H}^{\tp}\bm{Y}_{d}.
\endaligned\]

\noindent Consequently, we can get (\ref{eq56}) for the same reason as (\ref{eq29}).
\end{proof}

By Theorem \ref{thm2} and with Lemmas \ref{lem3} and \ref{lem10}, we show a general solution result of the LAE (\ref{eq1}) in the case of any rank condition $\rank\left(\bm{G}\right)=m$ ($m\neq0$ and $m\leq\min\{p,q\}$) and any $\bm{Y}_{d}\in\mathbb{R}^{p}$.

\begin{thm}\label{thm3}
For the LAE (\ref{eq1}) with any given $\bm{Y}_{d}\in\mathbb{R}^{p}$, let the iterative algorithm (\ref{eq4}) that satisfies the property (P) be applied. Then the sequence of $\bm{U}_{k}$, $\forall k\in\mathbb{Z}_{+}$ generated by (\ref{eq4}) converges exponentially fast with its converged value depending heavily on the initial condition $\bm{U}_{0}$ and forming a set given by
\begin{equation}\label{eq20}
\aligned
\mathcal{U}_{\mathrm{ILC}}(\bm{Y}_{d})
=\bigg\{\bm{U}_{\infty}&=\left[I_{q}-\bm{F}\bm{H}\left(\bm{H}^{\tp}\bm{G}\bm{F}\bm{H}\right)^{-1}\bm{H}^{\tp}\bm{G}\right]\bm{U}_{0}\\
&~~~+\bm{F}\bm{H}\left(\bm{H}^{\tp}\bm{G}\bm{F}\bm{H}\right)^{-1}\bm{H}^{\tp}\bm{Y}_{d}
\Big|\bm{U}_{0}\in\mathbb{R}^{q}\bigg\},\quad\forall\bm{Y}_{d}\in\mathbb{R}^{p}
\endaligned
\end{equation}

\noindent if and only if the spectral radius condition (\ref{eq39}) holds. Besides, the resulting sequence of $\bm{Y}_{k}$, $\forall k\in\mathbb{Z}_{+}$ converges to
\begin{equation}\label{eq21}
\bm{Y}_{\infty}
=\bm{H}\left(\bm{H}^{\tp}\bm{H}\right)^{-1}\bm{H}^{\tp}\bm{Y}_{d},\quad\forall\bm{Y}_{d}\in\mathbb{R}^{p}.
\end{equation}

\noindent Furthermore, $\mathcal{U}_{\mathrm{ILC}}(\bm{Y}_{d})=\mathcal{U}_{d}(\bm{Y}_{d})$ holds if and only if the LAE (\ref{eq1}) is solvable; and $\mathcal{U}_{\mathrm{ILC}}\left(\bm{Y}_{d}\right)=\overline{\mathcal{U}}_{d}\left(\bm{Y}_{d}\right)$ follows, otherwise, where $\overline{\mathcal{U}}_{d}\left(\bm{Y}_{d}\right)$ denotes the set of all the least squares solutions to the unsolvable LAE (\ref{eq1}), namely,
\begin{equation}\label{eq27}
\aligned
\overline{\mathcal{U}}_{d}\left(\bm{Y}_{d}\right)
=\bigg\{\bm{U}_{d}\in\mathbb{R}^{q}
\Big|\left\|\bm{Y}_{d}-\bm{G}\bm{U}_{d}\right\|_{2}
&=\min_{\bm{\Omega}\in\mathbb{R}^{q}}\left\|\bm{Y}_{d}-\bm{G}\bm{\Omega}\right\|_{2}\\
&>0\bigg\},\quad\forall\bm{Y}_{d}\in\mathbb{R}^{p}.
\endaligned
\end{equation}
\end{thm}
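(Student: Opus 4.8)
The plan is to reduce the rank-deficient iteration to an $m$-dimensional contraction via the full rank factorization $\bm{G}=\bm{H}\widehat{\bm{G}}$ of Lemma \ref{lem3}, and then to reconstruct the full trajectory of $\bm{U}_{k}$ by summing its increments. First I would introduce the reduced state $\widehat{\bm{U}}_{k}=\widehat{\bm{G}}\bm{U}_{k}\in\mathbb{R}^{m}$; left-multiplying (\ref{eq4}) by $\widehat{\bm{G}}$ and using $\bm{G}=\bm{H}\widehat{\bm{G}}$ gives the closed recursion
\begin{equation*}
\widehat{\bm{U}}_{k+1}=\left(I_{m}-\widehat{\bm{G}}\bm{F}\bm{H}\right)\widehat{\bm{U}}_{k}+\widehat{\bm{G}}\bm{F}\bm{Y}_{d},\quad\forall k\in\mathbb{Z}_{+},
\end{equation*}
whose transition matrix is exactly the one appearing in (\ref{eq39}). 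Under (\ref{eq39}) this is an exponentially stable affine recursion, so $\widehat{\bm{U}}_{k}$ converges exponentially to $\widehat{\bm{U}}_{\infty}=(\widehat{\bm{G}}\bm{F}\bm{H})^{-1}\widehat{\bm{G}}\bm{F}\bm{Y}_{d}$, whence $\bm{Y}_{k}=\bm{H}\widehat{\bm{U}}_{k}$ converges exponentially too. At this point the property (P) enters decisively: writing $\bm{F}=\overline{\bm{F}}\bm{G}^{\tp}=\overline{\bm{F}}\widehat{\bm{G}}^{\tp}\bm{H}^{\tp}$ via Lemma \ref{lem11}, I would establish the algebraic identity $(\widehat{\bm{G}}\bm{F}\bm{H})^{-1}\widehat{\bm{G}}\bm{F}=(\bm{H}^{\tp}\bm{H})^{-1}\bm{H}^{\tp}$, which collapses $\bm{Y}_{\infty}=\bm{H}\widehat{\bm{U}}_{\infty}$ to (\ref{eq21}), the orthogonal projection of $\bm{Y}_{d}$ onto $\sn\bm{H}=\sn\bm{G}$.

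Next I would recover $\bm{U}_{k}$ itself. From (\ref{eq4}) the increment is $\bm{U}_{k+1}-\bm{U}_{k}=\bm{F}(\bm{Y}_{d}-\bm{Y}_{k})$, and the crucial consequence of (\ref{eq21}) together with property (P) is that $\bm{F}(\bm{Y}_{d}-\bm{Y}_{\infty})=\overline{\bm{F}}\widehat{\bm{G}}^{\tp}\bm{H}^{\tp}(I_{p}-\bm{H}(\bm{H}^{\tp}\bm{H})^{-1}\bm{H}^{\tp})\bm{Y}_{d}=0$, since $\bm{H}^{\tp}$ annihilates the projection residual. Hence the increment reduces to $\bm{F}(\bm{Y}_{\infty}-\bm{Y}_{k})$, which decays geometrically. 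Summing the increment series $\bm{U}_{\infty}-\bm{U}_{0}=\bm{F}\sum_{k=0}^{\infty}(\bm{Y}_{\infty}-\bm{Y}_{k})$, evaluating $\sum_{k=0}^{\infty}(I_{m}-\widehat{\bm{G}}\bm{F}\bm{H})^{k}=(\widehat{\bm{G}}\bm{F}\bm{H})^{-1}$, and then reusing the identity from property (P) to cancel the double inverse that the summation produces, I would arrive at $\bm{U}_{\infty}=(I_{q}-\bm{K}\bm{G})\bm{U}_{0}+\bm{K}\bm{Y}_{d}$ with $\bm{K}=\bm{F}\bm{H}(\bm{H}^{\tp}\bm{G}\bm{F}\bm{H})^{-1}\bm{H}^{\tp}$, which is precisely (\ref{eq20}); exponential convergence then follows from the geometric tail estimate. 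For the converse I would note that $\widehat{\bm{U}}_{k}=\widehat{\bm{G}}\bm{U}_{k}$ and that $\widehat{\bm{G}}$ is surjective, so exponential convergence of $\bm{U}_{k}$ for every $\bm{U}_{0}$ forces exponential convergence of the reduced recursion for every $\widehat{\bm{U}}_{0}\in\mathbb{R}^{m}$, which is possible only when $\rho(I_{m}-\widehat{\bm{G}}\bm{F}\bm{H})<1$, i.e. (\ref{eq39}).

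Finally I would identify the limit set, paralleling Corollary \ref{cor6}. Using $\bm{H}^{\tp}\bm{G}=\bm{H}^{\tp}\bm{H}\widehat{\bm{G}}$ I would verify $\bm{G}\bm{K}=\bm{H}(\bm{H}^{\tp}\bm{H})^{-1}\bm{H}^{\tp}$ and hence $\bm{G}\bm{K}\bm{G}=\bm{G}$, so that $\bm{G}(I_{q}-\bm{K}\bm{G})=0$ while $(I_{q}-\bm{K}\bm{G})$ fixes $\nl\bm{G}$ pointwise; therefore $\sn(I_{q}-\bm{K}\bm{G})=\nl\bm{G}$ and $\mathcal{U}_{\mathrm{ILC}}(\bm{Y}_{d})=\nl\bm{G}+\bm{K}\bm{Y}_{d}$. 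In the solvable case $\bm{Y}_{d}\in\sn\bm{G}$ yields $\bm{Y}_{\infty}=\bm{Y}_{d}$ and $\bm{G}\bm{U}_{\infty}=\bm{Y}_{d}$, so every limit solves (\ref{eq1}); choosing $\bm{U}_{0}$ to be any prescribed solution returns it as $\bm{U}_{\infty}$, giving $\mathcal{U}_{\mathrm{ILC}}(\bm{Y}_{d})=\mathcal{U}_{d}(\bm{Y}_{d})$ of (\ref{eq10}). In the unsolvable case $\bm{K}\bm{Y}_{d}$ is exactly the particular least squares solution in (\ref{eq56}) and $\bm{Y}_{d}-\bm{G}\bm{U}_{\infty}=\bm{Y}_{d}-\bm{Y}_{\infty}$ is the nonzero orthogonal residual, so by Lemma \ref{lem10} the set $\mathcal{U}_{\mathrm{ILC}}(\bm{Y}_{d})$ coincides with $\overline{\mathcal{U}}_{d}(\bm{Y}_{d})$ of (\ref{eq27}). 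I expect the main obstacle to be purely bookkeeping: tracking the inverses $(\widehat{\bm{G}}\bm{F}\bm{H})^{-1}$, $(\bm{H}^{\tp}\bm{H})^{-1}$ and $(\bm{H}^{\tp}\bm{G}\bm{F}\bm{H})^{-1}$ and invoking property (P) at exactly the right moments so that the double inverse created by the series summation cancels and the compact expression (\ref{eq20}) emerges.
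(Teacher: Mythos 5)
Your proof is correct, and it reaches the same conclusions as the paper by genuinely different mechanics. The paper, after using property (P) to rewrite (\ref{eq4}) as (\ref{eq22}), leans on the observability-decomposition machinery of Theorem \ref{thm2} and Remark \ref{rem7}: it completes $\widehat{\bm{G}}$ to a coordinate change $\widehat{\bm{P}}^{-1}=\left[\widehat{\bm{G}}^{\tp},\widehat{\bm{M}}^{\tp}\right]^{\tp}$, notes that the unobservable coordinates $\widehat{\bm{M}}\bm{U}_{k}$ stay constant along iterations, and recovers $\bm{U}_{\infty}$ by inverting that transformation. You never construct the complementary coordinates at all: you keep only the observable part $\widehat{\bm{U}}_{k}=\widehat{\bm{G}}\bm{U}_{k}$ (the same reduced recursion, with transition matrix $I_{m}-\widehat{\bm{G}}\bm{F}\bm{H}$) and reconstruct $\bm{U}_{\infty}$ by telescoping the increments $\bm{U}_{k+1}-\bm{U}_{k}=\bm{F}\left(\bm{Y}_{d}-\bm{Y}_{k}\right)=\bm{F}\left(\bm{Y}_{\infty}-\bm{Y}_{k}\right)$; the observation that property (P) annihilates $\bm{F}\left(\bm{Y}_{d}-\bm{Y}_{\infty}\right)$ is exactly the right use of (P), and the Neumann series $\sum_{k\geq0}\left(I_{m}-\widehat{\bm{G}}\bm{F}\bm{H}\right)^{k}=\left(\widehat{\bm{G}}\bm{F}\bm{H}\right)^{-1}$ lands you on the paper's intermediate formula (\ref{eq23}), hence on (\ref{eq20}) and (\ref{eq21}). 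For the set equalities, the paper argues element-wise by double inclusion with specially chosen initial conditions $\bm{U}_{0}=\bm{U}_{d}+\bm{F}\bm{H}\bm{\beta}$ (displays (\ref{eq24}), (\ref{eq28}), (\ref{eq30})), whereas you first prove the structural identity $\sn\left[I_{q}-\bm{K}\bm{G}\right]=\nl\bm{G}$ for $\bm{K}=\bm{F}\bm{H}\left(\bm{H}^{\tp}\bm{G}\bm{F}\bm{H}\right)^{-1}\bm{H}^{\tp}$ (via $\bm{G}\bm{K}\bm{G}=\bm{G}$ and pointwise fixing of $\nl\bm{G}$), so that $\mathcal{U}_{\mathrm{ILC}}(\bm{Y}_{d})=\nl\bm{G}+\bm{K}\bm{Y}_{d}$, and then quote (\ref{eq56}) of Lemma \ref{lem10} once. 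Your route is shorter, makes the affine structure of the limit map explicit, and yields the analogue of Corollary \ref{cor6} as a by-product; the paper's route keeps the observable/unobservable split in view, which is what Remark \ref{rem7} and Figs.~\ref{fig2} and \ref{fig3} are built on.

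One step deserves tightening, though the paper is equally terse there, so this is not a gap relative to its own proof. Your converse asserts that exponential convergence of the reduced recursion for every initial condition ``is possible only when'' $\rho\left(I_{m}-\widehat{\bm{G}}\bm{F}\bm{H}\right)<1$. Taken literally this is false: $\bm{F}=0$ satisfies property (P) and makes every sequence $\bm{U}_{k}\equiv\bm{U}_{0}$ trivially (hence exponentially) convergent, while (\ref{eq39}) fails. What rescues the equivalence is the other half of the hypothesis: the limits are required to form the set (\ref{eq20}), whose very formula presupposes that $\bm{H}^{\tp}\bm{G}\bm{F}\bm{H}$, equivalently $\widehat{\bm{G}}\bm{F}\bm{H}$, is nonsingular, i.e., that $I_{m}-\widehat{\bm{G}}\bm{F}\bm{H}$ has no eigenvalue at $1$. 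With that in hand your argument closes: differences of solutions together with the surjectivity of $\widehat{\bm{G}}$ show that $\left(I_{m}-\widehat{\bm{G}}\bm{F}\bm{H}\right)^{k}\bm{z}$ converges for every $\bm{z}\in\mathbb{R}^{m}$, a matrix power sequence converges only if every eigenvalue lies in the open unit disk or equals a semisimple $1$, and the latter possibility has been excluded, so (\ref{eq39}) follows.
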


\begin{proof}
Owing to the property (P), we can obtain $\bm{F}^{\tp}=\bm{G}\bm{Q}$ for some $\bm{Q}\in\mathbb{R}^{q\times q}$, which together with (\ref{eq54}) results in
\[
\aligned
\bm{F}^{\tp}
=\bm{G}\bm{Q}
=\bm{H}\left(\bm{H}^{\tp}\bm{H}\right)^{-1}\bm{H}^{\tp}\bm{G}\bm{Q}
=\bm{H}\left(\bm{H}^{\tp}\bm{H}\right)^{-1}\bm{H}^{\tp}\bm{F}^{\tp}
\endaligned\]

\noindent i.e., $\bm{F}=\bm{F}\bm{H}\left(\bm{H}^{\tp}\bm{H}\right)^{-1}\bm{H}^{\tp}$. Due to this fact and by substituting (\ref{eq18}) into (\ref{eq4}), we can derive
\begin{equation}\label{eq22}
\bm{U}_{k+1}
=\left(I_{q}-\bm{F}\bm{H}\widehat{\bm{G}}\right)\bm{U}_{k}
+\bm{F}\bm{H}\left(\bm{H}^{\tp}\bm{H}\right)^{-1}\bm{H}^{\tp}\bm{Y}_{d},\quad\forall k\in\mathbb{Z}_{+}.
\end{equation}

\noindent In the same way as the proof of Theorem \ref{thm2}, we know from (\ref{eq22}) that $\bm{U}_{k}$ converges exponentially fast if and only if the spectral radius condition (\ref{eq39}) holds. Simultaneously, $\bm{U}_{\infty}$ satisfies
\begin{equation}\label{eq23}
\aligned
\bm{U}_{\infty}
&=\left[I_{q}-\bm{F}\bm{H}\left(\widehat{\bm{G}}\bm{F}\bm{H}\right)^{-1}\widehat{\bm{G}}\right]\bm{U}_{0}\\
&~~~+\bm{F}\bm{H}\left(\widehat{\bm{G}}\bm{F}\bm{H}\right)^{-1}\left(\bm{H}^{\tp}\bm{H}\right)^{-1}\bm{H}^{\tp}\bm{Y}_{d},
\quad\forall\bm{Y}_{d}\in\mathbb{R}^{p}.
\endaligned
\end{equation}

\noindent Since $\widehat{\bm{G}}=\left(\bm{H}^{\tp}\bm{H}\right)^{-1}\bm{H}^{\tp}\bm{G}$ follows from (\ref{eq18}), the substitution of it into (\ref{eq23}) yields
\begin{equation*}\label{}
\aligned
\bm{U}_{\infty}
&=\left[I_{q}-\bm{F}\bm{H}\left(\bm{H}^{\tp}\bm{G}\bm{F}\bm{H}\right)^{-1}\bm{H}^{\tp}\bm{G}\right]\bm{U}_{0}\\
&~~~+\bm{F}\bm{H}\left(\bm{H}^{\tp}\bm{G}\bm{F}\bm{H}\right)^{-1}\bm{H}^{\tp}\bm{Y}_{d},\quad\forall\bm{Y}_{d}\in\mathbb{R}^{p}
\endaligned
\end{equation*}

\noindent with which we can obtain (\ref{eq20}). By resorting to (\ref{eq18}) and (\ref{eq23}), we can also arrive at
\begin{equation*}\label{}
\aligned
\bm{Y}_{\infty}
&=\left(\bm{H}\widehat{\bm{G}}\right)\bm{U}_{\infty}\\
&=\bm{H}\widehat{\bm{G}}\Bigg\{\left[I_{q}-\bm{F}\bm{H}\left(\widehat{\bm{G}}\bm{F}\bm{H}\right)^{-1}\widehat{\bm{G}}\right]\bm{U}_{0}\\
&~~~+\bm{F}\bm{H}\left(\widehat{\bm{G}}\bm{F}\bm{H}\right)^{-1}\left(\bm{H}^{\tp}\bm{H}\right)^{-1}\bm{H}^{\tp}\bm{Y}_{d}\Bigg\}\\
&=\bm{H}\left(\bm{H}^{\tp}\bm{H}\right)^{-1}\bm{H}^{\tp}\bm{Y}_{d},\quad\forall\bm{Y}_{d}\in\mathbb{R}^{p}.
\endaligned
\end{equation*}

\noindent Namely, (\ref{eq21}) holds.

For any given $\bm{Y}_{d}\in\mathbb{R}^{p}$, if the LAE (\ref{eq1}) is solvable, then the use of (\ref{eq21}) directly leads to
\[\bm{G}\bm{U}_{\infty}
=\bm{Y}_{\infty}
=\bm{H}\left(\bm{H}^{\tp}\bm{H}\right)^{-1}\bm{H}^{\tp}\left(\bm{H}\widehat{\bm{G}}\bm{U}_{d}\right)
=\bm{H}\widehat{\bm{G}}\bm{U}_{d}
=\bm{Y}_{d}
\]

\noindent which implies $\mathcal{U}_{\mathrm{ILC}}(\bm{Y}_{d})\subseteq\mathcal{U}_{d}(\bm{Y}_{d})$. In addition, for any $\bm{U}_{d}\in\mathcal{U}_{d}(\bm{Y}_{d})$, i.e., $\bm{Y}_{d}=\bm{G}\bm{U}_{d}$, we can validate
\begin{equation}\label{eq24}
\aligned
\bm{U}_{d}
&=\left[I_{q}-\bm{F}\bm{H}\left(\bm{H}^{\tp}\bm{G}\bm{F}\bm{H}\right)^{-1}\bm{H}^{\tp}\bm{G}\right]\bm{U}_{d}\\
&~~~+\bm{F}\bm{H}\left(\bm{H}^{\tp}\bm{G}\bm{F}\bm{H}\right)^{-1}\bm{H}^{\tp}\bm{G}\bm{U}_{d}\\
&=\left[I_{q}-\bm{F}\bm{H}\left(\bm{H}^{\tp}\bm{G}\bm{F}\bm{H}\right)^{-1}\bm{H}^{\tp}\bm{G}\right]\bm{U}_{d}\\
&~~~+\bm{F}\bm{H}\left(\bm{H}^{\tp}\bm{G}\bm{F}\bm{H}\right)^{-1}\bm{H}^{\tp}\bm{Y}_{d}\\
&=\left[I_{q}-\bm{F}\bm{H}\left(\bm{H}^{\tp}\bm{G}\bm{F}\bm{H}\right)^{-1}\bm{H}^{\tp}\bm{G}\right]\bm{U}_{0}\\
&~~~+\bm{F}\bm{H}\left(\bm{H}^{\tp}\bm{G}\bm{F}\bm{H}\right)^{-1}\bm{H}^{\tp}\bm{Y}_{d}\\
\endaligned
\end{equation}

\noindent where $\bm{U}_{0}=\bm{U}_{d}+\bm{F}\bm{H}\bm{\beta}$, $\forall\bm{\beta}\in\mathbb{R}^{m}$. Thus, (\ref{eq20}) and (\ref{eq24}) ensure $\bm{U}_{d}\in\mathcal{U}_{\mathrm{ILC}}(\bm{Y}_{d})$, from which $\mathcal{U}_{d}(\bm{Y}_{d})\subseteq\mathcal{U}_{\mathrm{ILC}}(\bm{Y}_{d})$ holds. With $\mathcal{U}_{\mathrm{ILC}}(\bm{Y}_{d})\subseteq\mathcal{U}_{d}(\bm{Y}_{d})$ and $\mathcal{U}_{d}(\bm{Y}_{d})\subseteq\mathcal{U}_{\mathrm{ILC}}(\bm{Y}_{d})$, it is immediate to get $\mathcal{U}_{d}(\bm{Y}_{d})=\mathcal{U}_{\mathrm{ILC}}(\bm{Y}_{d})$. Conversely, if $\mathcal{U}_{\mathrm{ILC}}(\bm{Y}_{d})=\mathcal{U}_{d}(\bm{Y}_{d})$ holds, then $\mathcal{U}_{d}(\bm{Y}_{d})\neq{\O}$ follows directly because $\mathcal{U}_{\mathrm{ILC}}(\bm{Y}_{d})\neq\O$ is obvious thanks to (\ref{eq20}).

For any given $\bm{Y}_{d}\in\mathbb{R}^{p}$, if the LAE (\ref{eq1}) is unsolvable, then by incorporating the result of (\ref{eq26}), we can combine (\ref{eq20}) and (\ref{eq21}) with (\ref{eq25}) to arrive at
\begin{equation}\label{eq28}
\aligned
\left\|\bm{Y}_{d}-\bm{G}\bm{U}_{\infty}\right\|_{2}
&=\left\|\bm{Y}_{d}-\bm{Y}_{\infty}\right\|_{2}\\
&=\left\|\bm{Y}_{d}-\bm{H}\left(\bm{H}^{\tp}\bm{H}\right)^{-1}\bm{H}^{\tp}\bm{Y}_{d}\right\|_{2}\\
&=\left\|\bm{Y}_{d}-\bm{H}\widehat{\bm{U}}_{d}\right\|_{2}\\
&=\min_{\widehat{\bm{\Omega}}\in\mathbb{R}^{m}}\left\|\bm{Y}_{d}-\bm{H}\widehat{\bm{\Omega}}\right\|_{2}\\
&=\min_{\bm{\Omega}\in\mathbb{R}^{q}}\left\|\bm{Y}_{d}-\bm{G}\bm{\Omega}\right\|_{2}
,\quad\forall\bm{U}_{\infty}\in\mathcal{U}_{\mathrm{ILC}}(\bm{Y}_{d}).
\endaligned
\end{equation}

\noindent It follows from (\ref{eq28}) that $\bm{U}_{\infty}\in\overline{\mathcal{U}}_{d}(\bm{Y}_{d})$, and thus $\mathcal{U}_{\mathrm{ILC}}(\bm{Y}_{d})\subseteq\overline{\mathcal{U}}_{d}(\bm{Y}_{d})$ holds. In addition, for any $\bm{U}_{d}\in\overline{\mathcal{U}}_{d}(\bm{Y}_{d})$, we notice (\ref{eq56}), and can follow the same lines as (\ref{eq24}) to derive
\begin{equation}\label{eq30}
\aligned
\bm{U}_{d}
&=\left[I_{q}-\bm{F}\bm{H}\left(\bm{H}^{\tp}\bm{G}\bm{F}\bm{H}\right)^{-1}\bm{H}^{\tp}\bm{G}\right]\bm{U}_{d}\\
&~~~+\bm{F}\bm{H}\left(\bm{H}^{\tp}\bm{G}\bm{F}\bm{H}\right)^{-1}\bm{H}^{\tp}\bm{G}\bm{U}_{d}\\
&=\left[I_{q}-\bm{F}\bm{H}\left(\bm{H}^{\tp}\bm{G}\bm{F}\bm{H}\right)^{-1}\bm{H}^{\tp}\bm{G}\right]\bm{U}_{d}\\
&~~~+\bm{F}\bm{H}\left(\bm{H}^{\tp}\bm{G}\bm{F}\bm{H}\right)^{-1}\bm{H}^{\tp}\bm{G}\bigg[\bm{\gamma}\\
&~~~+\bm{F}\bm{H}\left(\bm{H}^{\tp}\bm{G}\bm{F}\bm{H}\right)^{-1}\bm{H}^{\tp}\bm{Y}_{d}\bigg]\\
&=\left[I_{q}-\bm{F}\bm{H}\left(\bm{H}^{\tp}\bm{G}\bm{F}\bm{H}\right)^{-1}\bm{H}^{\tp}\bm{G}\right]\bm{U}_{0}\\
&~~~+\bm{F}\bm{H}\left(\bm{H}^{\tp}\bm{G}\bm{F}\bm{H}\right)^{-1}\bm{H}^{\tp}\bm{Y}_{d}\\
\endaligned
\end{equation}

\noindent where $\bm{U}_{0}=\bm{U}_{d}+\bm{F}\bm{H}\bm{\beta}$, $\forall\bm{\beta}\in\mathbb{R}^{m}$. Owing to (\ref{eq20}) and (\ref{eq30}), we can obtain $\bm{U}_{d}\in\mathcal{U}_{\mathrm{ILC}}(\bm{Y}_{d})$, which actually ensures $\overline{\mathcal{U}}_{d}(\bm{Y}_{d})\subseteq\mathcal{U}_{\mathrm{ILC}}(\bm{Y}_{d})$. This, together with $\mathcal{U}_{\mathrm{ILC}}(\bm{Y}_{d})\subseteq\overline{\mathcal{U}}_{d}(\bm{Y}_{d})$, leads to $\mathcal{U}_{\mathrm{ILC}}(\bm{Y}_{d})=\overline{\mathcal{U}}_{d}(\bm{Y}_{d})$ if the LAE (\ref{eq1}) is unsolvable. Conversely, if $\mathcal{U}_{\mathrm{ILC}}\left(\bm{Y}_{d}\right)=\overline{\mathcal{U}}_{d}\left(\bm{Y}_{d}\right)$, then $\mathcal{U}_{d}(\bm{Y}_{d})={\O}$ holds by (\ref{eq27}), that is, the LAE (\ref{eq1}) is unsolvable.
%
%
\end{proof}

Since there may exist different forms of the rank decomposition (\ref{eq18}), we proceed with Theorem \ref{thm3} to show that this does not affect the developed results of our iterative algorithm (\ref{eq4}).

\begin{cor}\label{cor7}
The result of Theorem \ref{thm3} is independent of the different selections of the rank decomposition in (\ref{eq18}). In other words, for any other rank decomposition of $\bm{G}$ given by
\begin{equation}\label{eq55}
\bm{G}=\overline{\bm{H}}\overline{\widehat{\bm{G}}}
\quad\hbox{with}~\rank\left(\overline{\bm{H}}\right)=\rank\left(\overline{\widehat{\bm{G}}}\right)
=m
\end{equation}

\noindent the same result of Theorem \ref{thm3} can still be obtained even though (\ref{eq55}) instead of (\ref{eq18}) is used. In particular, a candidate selection of $\bm{F}$ to satisfy the condition (\ref{eq39}) and the property (P) is given by $\bm{F}=\bm{\sigma}\bm{G}^{\tp}$ for some $\bm{\sigma}\in\left(0,2/\lambda_{\max}\left(\bm{G}^{\tp}\bm{G}\right)\right)$, which ensures the monotonic convergence of the iterative algorithm (\ref{eq4}), i.e.,
\begin{equation*}\label{}
\left\|\bm{U}_{k+1}-\bm{U}_{\infty}\right\|_{2}
\leq\left\|\bm{U}_{k}-\bm{U}_{\infty}\right\|_{2},\quad\forall k\in\mathbb{Z}_{+}.
\end{equation*}
\end{cor}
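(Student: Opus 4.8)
The plan is to handle the two claims of Corollary~\ref{cor7} separately. For the invariance under the choice of rank decomposition, I would first establish that any two full-rank factorizations of $\bm{G}$ are linked by a single nonsingular matrix, and then verify that every quantity appearing in Theorem~\ref{thm3} is unchanged under this linkage. For the candidate gain $\bm{F}=\bm{\sigma}\bm{G}^{\tp}$, I would check the property~(P) and the spectral radius condition~\eqref{eq39} directly, and then read off monotonic convergence from the symmetric structure of the resulting error recursion.

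First I would prove the factorization-uniqueness step. Since $\widehat{\bm{G}}$ and $\overline{\widehat{\bm{G}}}$ are full-row rank while $\bm{H}$ and $\overline{\bm{H}}$ are full-column rank, Remark~\ref{rem4} gives $\sn\bm{H}=\sn\bm{G}=\sn\overline{\bm{H}}$. As $\bm{H}$ and $\overline{\bm{H}}$ are two full-column-rank bases of the same $m$-dimensional column space, there is a nonsingular $\bm{T}\in\mathbb{R}^{m\times m}$ with $\overline{\bm{H}}=\bm{H}\bm{T}^{-1}$; left-multiplying $\bm{H}\widehat{\bm{G}}=\overline{\bm{H}}\,\overline{\widehat{\bm{G}}}$ by a left inverse of $\bm{H}$ then forces $\overline{\widehat{\bm{G}}}=\bm{T}\widehat{\bm{G}}$. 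With this single relation in hand, I would substitute $\overline{\bm{H}}=\bm{H}\bm{T}^{-1}$ and $\overline{\widehat{\bm{G}}}=\bm{T}\widehat{\bm{G}}$ into the three objects that carry all the content of Theorem~\ref{thm3}: the oblique projector $\bm{F}\bm{H}\left(\bm{H}^{\tp}\bm{G}\bm{F}\bm{H}\right)^{-1}\bm{H}^{\tp}$ appearing in \eqref{eq20}, the orthogonal projector $\bm{H}\left(\bm{H}^{\tp}\bm{H}\right)^{-1}\bm{H}^{\tp}$ defining $\bm{Y}_\infty$ in \eqref{eq21}, and the matrix $I_m-\widehat{\bm{G}}\bm{F}\bm{H}$ governing \eqref{eq39}. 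In the first two, the factors $\bm{T}^{-1}$ and $\bm{T}^{-\tp}$ introduced by $\overline{\bm{H}}$ cancel against the corresponding factors emerging from the inverted middle term, leaving both projectors unchanged; in the third, $I_m-\overline{\widehat{\bm{G}}}\bm{F}\overline{\bm{H}}=\bm{T}\left(I_m-\widehat{\bm{G}}\bm{F}\bm{H}\right)\bm{T}^{-1}$ is a similarity transformation, so its spectral radius, and hence condition~\eqref{eq39}, is preserved. Consequently $\mathcal{U}_{\mathrm{ILC}}(\bm{Y}_d)$ and $\bm{Y}_\infty$ coincide for the two decompositions.

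For the candidate $\bm{F}=\bm{\sigma}\bm{G}^{\tp}$, property~(P) is immediate because $\bm{F}^{\tp}=\bm{\sigma}\bm{G}$ gives $\sn\bm{F}^{\tp}=\sn\bm{G}$. To verify \eqref{eq39} I would note that $\bm{F}\bm{G}=\bm{\sigma}\bm{G}^{\tp}\bm{G}=\bm{F}\bm{H}\widehat{\bm{G}}$ is symmetric positive semidefinite of rank $m$, so its nonzero eigenvalues are exactly $\bm{\sigma}\lambda_i\left(\bm{G}^{\tp}\bm{G}\right)>0$, $i=1,\dots,m$. Since the nonzero eigenvalues of $\bm{F}\bm{H}\widehat{\bm{G}}$ coincide with those of the $m\times m$ matrix $\widehat{\bm{G}}\bm{F}\bm{H}$, all $m$ eigenvalues of $\widehat{\bm{G}}\bm{F}\bm{H}$ equal these positive values, whence the eigenvalues of $I_m-\widehat{\bm{G}}\bm{F}\bm{H}$ are $1-\bm{\sigma}\lambda_i\left(\bm{G}^{\tp}\bm{G}\right)$, each lying in $(-1,1)$ precisely when $\bm{\sigma}\in\left(0,2/\lambda_{\max}\left(\bm{G}^{\tp}\bm{G}\right)\right)$; this is \eqref{eq39}. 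Finally, as in the computation of Remark~\ref{rem2}, property~(P) reduces the update to $\bm{U}_{k+1}-\bm{U}_\infty=\left(I_q-\bm{\sigma}\bm{G}^{\tp}\bm{G}\right)\left(\bm{U}_k-\bm{U}_\infty\right)$, and because $I_q-\bm{\sigma}\bm{G}^{\tp}\bm{G}$ is symmetric with all eigenvalues in $(-1,1]$ under the stated range of $\bm{\sigma}$, its spectral norm is at most $1$, which yields the claimed monotonic bound.

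The main obstacle I anticipate is the factorization-uniqueness step together with the bookkeeping that the transformation $\bm{T}$ cancels in the two projectors; once the relation $\overline{\bm{H}}=\bm{H}\bm{T}^{-1}$, $\overline{\widehat{\bm{G}}}=\bm{T}\widehat{\bm{G}}$ is secured, the invariances are routine. For the second claim, the only non-obvious point is the passage from the $q\times q$ matrix $\bm{\sigma}\bm{G}^{\tp}\bm{G}$ to the $m\times m$ matrix $\widehat{\bm{G}}\bm{F}\bm{H}$ through their shared nonzero spectrum, which is what certifies \eqref{eq39}; the monotonicity then follows from symmetry alone.
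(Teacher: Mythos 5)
Your proposal is correct and follows essentially the same route as the paper's proof: a nonsingular matrix linking the two full-rank factorizations (the paper's $\bm{\Phi}$ plays the role of your $\bm{T}^{-1}$), the transfer of nonzero eigenvalues between $\bm{F}\bm{H}\widehat{\bm{G}}$ and $\widehat{\bm{G}}\bm{F}\bm{H}$ to verify \eqref{eq39}, and the bound $\left\|I_{q}-\bm{\sigma}\bm{G}^{\tp}\bm{G}\right\|_{2}\leq1$ for a symmetric matrix to obtain monotonicity. The only differences are that you spell out the invariance computations (cancellation of $\bm{T}$ in the two projectors and the similarity of $I_{m}-\widehat{\bm{G}}\bm{F}\bm{H}$) that the paper dismisses as ``clear,'' and you certify positivity of the eigenvalues of $\widehat{\bm{G}}\bm{F}\bm{H}$ by a multiplicity count rather than the paper's nonsingularity argument via $\widehat{\bm{G}}\widehat{\bm{G}}^{\tp}\bm{H}^{\tp}\bm{H}$ --- both are valid.
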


\begin{proof}
For any two rank decompositions (\ref{eq18}) and (\ref{eq55}) of $\bm{G}$, there exists some nonsingular matrix $\bm{\Phi}\in\mathbb{R}^{m\times m}$ such that
\[
\overline{\bm{H}}=\bm{H}\bm{\Phi},\quad
\overline{\widehat{\bm{G}}}=\bm{\Phi}^{-1}\widehat{\bm{G}}.
\]

\noindent This fact clearly renders the result of Theorem \ref{thm3} independent of the different selections of the rank decomposition of $\bm{G}$.

If $\bm{F}=\bm{\sigma}\bm{G}^{\tp}$ holds, then the property (P) is immediate. Since $\widehat{\bm{G}}\bm{G}^{\tp}\bm{H}$ has the same nonzero eigenvalues with $\bm{G}^{\tp}\bm{H}\widehat{\bm{G}}=\bm{G}^{\tp}\bm{G}$, it is obvious that all the eigenvalues of  $\widehat{\bm{G}}\bm{G}^{\tp}\bm{H}$ are nonnegative, and thus positive due to $\widehat{\bm{G}}\bm{G}^{\tp}\bm{H}= \widehat{\bm{G}}\widehat{\bm{G}}^{\tp}\bm{H}^{\tp}\bm{H}$. This ensures that (\ref{eq39}) holds if
\[
\left|1-\bm{\sigma}\lambda_{i}\left(\widehat{\bm{G}}\bm{G}^{\tp}\bm{H}\right)\right|<1,\quad\forall i=1,2,\cdots,m
\]

\noindent which can be guaranteed for $\bm{\sigma}\in\left(0,2/\lambda_{\max}\left(\bm{G}^{\tp}\bm{G}\right)\right)$ thanks to $\lambda_{\max}\left(\widehat{\bm{G}}\bm{G}^{\tp}\bm{H}\right)=\lambda_{\max}\left(\bm{G}^{\tp}\bm{G}\right)$. Further, this selection of $\bm{\sigma}$ yields $\left\|I_{q}-\bm{\sigma}\bm{G}^{\tp}\bm{G}\right\|_{2}\leq1$, where the equality holds if and only if $m=q$. We hence use (\ref{eq20}) and (\ref{eq22}), and can also deduce
\begin{equation*}
\aligned
\left\|\bm{U}_{k+1}-\bm{U}_{\infty}\right\|_{2}
&=\left\|\left(I_{q}-\bm{F}\bm{H}\widehat{\bm{G}}\right)\left(\bm{U}_{k}-\bm{U}_{\infty}\right)\right\|_{2}\\
&=\left\|\left(I_{q}-\bm{\sigma}\bm{G}^{\tp}\bm{G}\right)\left(\bm{U}_{k}-\bm{U}_{\infty}\right)\right\|_{2}\\
&\leq\left\|\bm{U}_{k}-\bm{U}_{\infty}\right\|_{2},\quad\forall k\in\mathbb{Z}_{+}.
\endaligned
\end{equation*}

\noindent That is, the monotonic convergence is achieved for the iterative algorithm (\ref{eq4}).
%
\end{proof}

\begin{rem}\label{rem5}
With Theorem \ref{thm3}, we present a systematic design method of the iterative algorithm (\ref{eq4}) for solving the LAE (\ref{eq1}), regardless of any rank conditions of $\bm{G}$ and any given $\bm{Y}_{d}\in\mathbb{R}^{p}$. This is thanks to reasonably incorporating the design of $\bm{F}$ and the rank decomposition of $\bm{G}$ into the application of Theorem \ref{thm2}. In contrast to the selection condition (\ref{eq8}) that can be used for Theorem \ref{thm2} to derive $\bm{F}$ in the state feedback design framework, (\ref{eq39}) is more like a design condition in the framework of output feedback, despite which the design of $\bm{F}$ is always feasible due to the full rank conditions of the rank decomposition (see also Lemma \ref{lem3} and Corollary \ref{cor7}). In addition, it is worth pointing out that the result of Theorem \ref{thm3} benefits from the linear calculation of the iterative algorithm (\ref{eq4}), where it actually does not require the property (P) in the presence of any solvable LAEs. Because of the satisfaction $\bm{Y}_{d}=\bm{G}\bm{U}_{d}$, we can use (\ref{eq4}) and (\ref{eq18}) to derive
\begin{equation*}\label{}
\bm{U}_{k+1}
=\left(I_{q}-\bm{F}\bm{H}\widehat{\bm{G}}\right)\bm{U}_{k}
+\bm{F}\bm{H}\left(\widehat{\bm{G}}\bm{U}_{d}\right),\quad\forall k\in\mathbb{Z}_{+}
\end{equation*}

\noindent and in the same way as the proof of Theorem \ref{thm3}, we can obtain its corresponding results developed for solvable LAEs.
\end{rem}
\begin{rem}\label{rem7}
Note that the observability decomposition works for developing Theorem \ref{thm3}. Because there exists some full-row rank matrix $\widehat{\bm{M}}\in\mathbb{R}^{(q-m)\times q}$ such that we can gain a nonsingular matrix $\widehat{\bm{P}}^{-1}=\left[\widehat{\bm{G}}^{\tp},\widehat{\bm{M}}^{\tp}\right]^{\tp}$, and then leverage (\ref{eq39}) to express its inverse matrix in a particular form of
$\widehat{\bm{P}}=\left[\bm{F}\bm{H}\left(\widehat{\bm{G}}\bm{F}\bm{H}\right)^{-1},\widehat{\bm{N}}\right]$ for some full-column rank matrix $\widehat{\bm{N}}\in\mathbb{R}^{q\times(q-m)}$, we consider $\widehat{\bm{U}}_{k}=\widehat{\bm{P}}^{-1}\bm{U}_{k}$, $\forall k\in\mathbb{Z}_{+}$ for (\ref{eq22}) and can follow the same lines as the proof of Lemma \ref{lem9} to arrive at
\begin{equation}\label{eq32}
\left\{\aligned
\widehat{\bm{U}}_{k+1}
&=\begin{bmatrix}I_{m}-\widehat{\bm{G}}\bm{F}\bm{H}&0\\0&I_{q-m}\end{bmatrix}\widehat{\bm{U}}_{k}
+\begin{bmatrix}\widehat{\bm{G}}\bm{F}\bm{H}\\0\end{bmatrix}\left(\bm{H}^{\tp}\bm{H}\right)^{-1}\bm{H}^{\tp}\bm{Y}_{d}\\
\widehat{\bm{Y}}_{k}
&=\begin{bmatrix}I_{m}&0\end{bmatrix}\widehat{\bm{U}}_{k}
\endaligned\right.
\end{equation}

\noindent where $\widehat{\bm{Y}}_{k}=\left(\bm{H}^{\tp}\bm{H}\right)^{-1}\bm{H}^{\tp}\bm{Y}_{k}$. With (\ref{eq32}), we can make the same discussions as Remark \ref{rem3} for the result of Theorem \ref{thm3} from the viewpoint of observability decomposition, as shown in Fig. \ref{fig2}. Specifically, for $\bm{U}_{\infty}$ in (\ref{eq20}), $\bm{F}\bm{H}\left(\bm{H}^{\tp}\bm{G}\bm{F}\bm{H}\right)^{-1}\bm{H}^{\tp}\bm{Y}_{d}$ leads to a particular (least squares) solution for any LAE (\ref{eq1}) (see also (\ref{eq56})), while the other term constitutes the solution space of the corresponding homogeneous equation $\bm{G}\bm{U}_{d}=0$ for (\ref{eq1}), giving
\[\aligned
\mathcal{O}_{NO}
&=\bigg\{\left[I_{q}-\bm{F}\bm{H}\left(\bm{H}^{\tp}\bm{G}\bm{F}\bm{H}\right)^{-1}\bm{H}^{\tp}\bm{G}\right]\bm{U}_{0}
\Big|\bm{U}_{0}\in\mathbb{R}^{q}\bigg\}\\
&=\sn\left[I_{q}-\bm{F}\bm{H}\left(\bm{H}^{\tp}\bm{G}\bm{F}\bm{H}\right)^{-1}\bm{H}^{\tp}\bm{G}\right].
\endaligned
\]
%
\end{rem}

\begin{figure}
\centering
\includegraphics[width=3.3in]{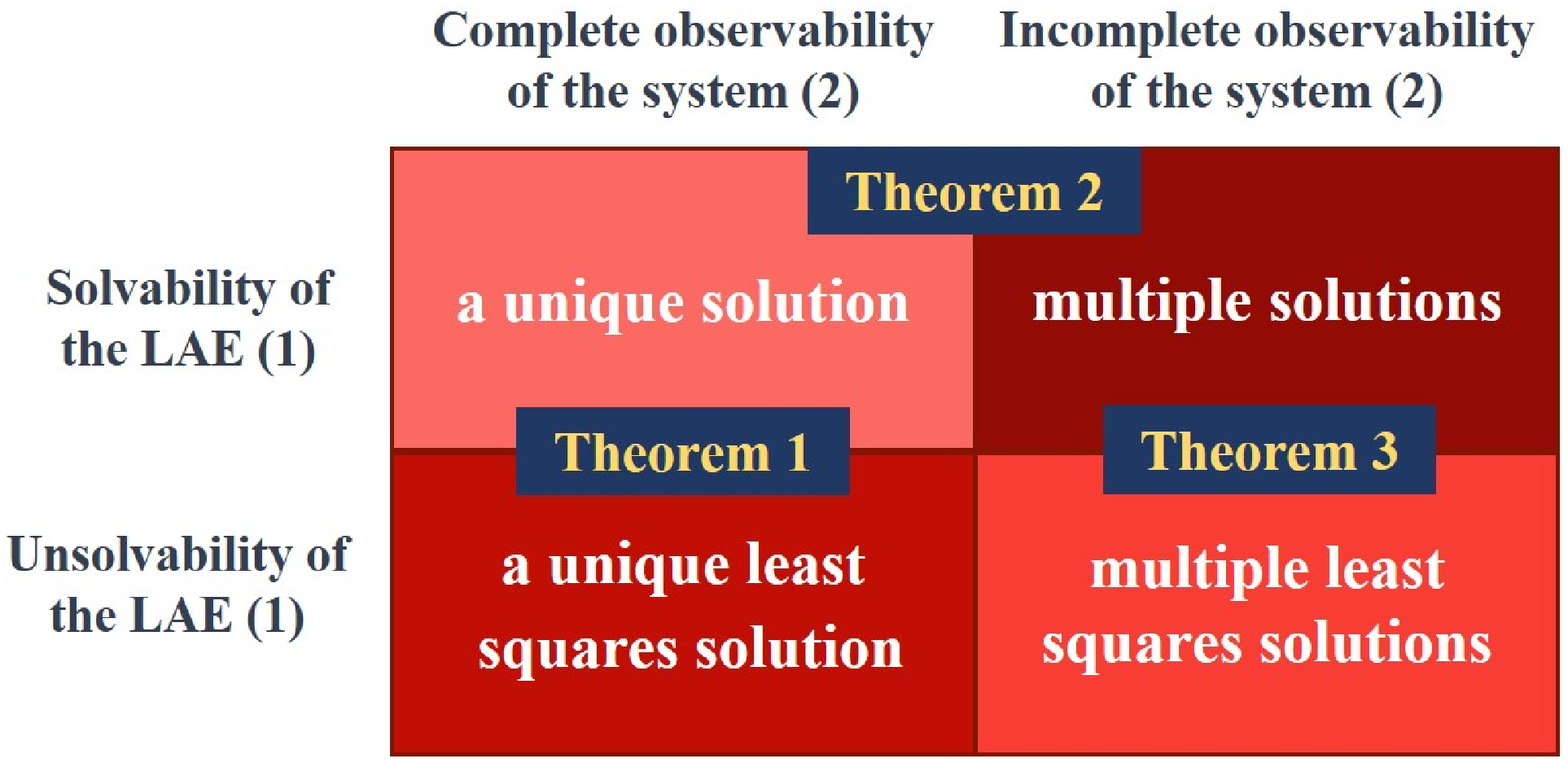}
\caption{An illustration of the relationship between the (least squares) solutions of the LAE (\ref{eq1}) and the observability properties of the system (\ref{eq2}).}\label{fig3}
\end{figure}

\begin{rem}\label{rem8}
It is worth emphasizing that based on Theorems \ref{thm1}-\ref{thm3}, how to carry out the iterative algorithm (\ref{eq4}) for solving any LAEs is developed by incorporating the ideas of both observer design and observability decomposition of the control systems. We clearly disclose the connection between the (least squares) solution to the LAE \eqref{eq1} and the observability of the system \eqref{eq2}, as shown by Fig. \ref{fig3}. To be specific, we incorporate the control design such that we can identify the solutions for any solvable LAEs and the least squares solutions for any unsolvable LAEs in a unified way, where we can determine all the (least squares) solutions for any LAEs through the different selections of the initial condition. Furthermore, we can regulate the convergence performance of our iterative algorithm by the proper selection of the gain matrix. These observations indicate the advantages of the iterative solution methods developed for LAEs under the idea of ``control design,'' which may provide new insights for enhancing the interaction between control and mathematics.
\end{rem}

\subsection{Implementation Algorithm for Solving LAEs}

With Theorems \ref{thm1}-\ref{thm3}, we introduce a specific implementation of the iterative algorithm \eqref{eq4} for calculating the (least squares) solutions to the LAE \eqref{eq1}, despite the solvability of it. We focus on the general case for the LAE (\ref{eq1}), where $\rank\left(\bm{G}\right)=m$ holds for any $m$ such that $m\neq0$ and $m\leq\min\{p,q\}$.

{
\renewcommand\arraystretch{1}{\center
\begin{tabular}{p{16cm}}
\specialrule{0.1em}{10pt}{5pt}
{\bf Algorithm 1: Solving the LAE \eqref{eq1} for any $\bm{Y}_{d}\in\mathbb{R}^{p}$}\\
\specialrule{0.1em}{5pt}{3pt}
\textbf{Input:} The mapping matrix $\bm{G}$, the vector $\bm{Y}_d$, the gain matrix $\bm{F}$ satisfying the property (P) and the condition (\ref{eq39}), the initial condition $\bm{U}_0$, and a tolerance $\varepsilon>0$.\\
\specialrule{0em}{0pt}{1.2pt}
\textbf{Output:} The (least squares) solution $\bm{U}_{\infty}$.\\
\specialrule{0em}{0pt}{1.2pt}
\quad $k\leftarrow0$\\
\specialrule{0em}{0pt}{1pt}
\quad\textbf{while} $k=0$ \textbf{or} $\left\|\bm{U}_{k}-\bm{U}_{k-1}\right\|_2\geq\varepsilon$ \textbf{do}\\
\specialrule{0em}{0pt}{1pt}
\quad\quad $\bm{U}_{k+1}\leftarrow\left(I_{q}-\bm{F}\bm{G}\right)\bm{U}_{k}+\bm{F}\bm{Y}_{d}$\\
\specialrule{0em}{0pt}{1pt}
\quad\quad$k\leftarrow k+1$\\
\specialrule{0em}{0pt}{1pt}
\quad\textbf{end while}\\
\quad$\bm{U}_{\infty}\leftarrow\bm{U}_{k}$\\
\specialrule{0.1em}{3pt}{10pt}
\end{tabular}}
}

\begin{rem}\label{rem11}
The Algorithm 1 can effectively work and apply to the solving of any LAEs, despite whether they are solvable or not. Actually, it applies the norm condition $\left\|\bm{U}_{k}-\bm{U}_{k-1}\right\|_{2}<\varepsilon$ as the criterion to stop its iteration, meaning that the variation between $\bm{U}_{k}$ and $\bm{U}_{k-1}$ decreases to be sufficiently small. This generally implies that $\bm{U}_{k}$ approximates to some (least squares) solution of the LAE (\ref{eq1}) in an acceptable tolerance. In addition, we may simultaneously take a norm condition $\left\|\bm{G}\bm{U}_k-\bm{Y}_d\right\|_{2}<\tilde{\varepsilon}$ into account for some tolerance $\tilde{\varepsilon}>0$, which can be adopted to help examine the solvability of the LAE (\ref{eq1}).
\end{rem}

To leverage the Algorithm 1 for solving LAEs, the key is to select a proper gain matrix $\bm{F}$, which fulfills both the property (P) and the spectral radius condition \eqref{eq39}. Based on Corollary \ref{cor7}, a candidate selection of $\bm{F}$ is given by $\bm{F}=\bm{\sigma}\bm{G}^{\tp}$ for some $\bm{\sigma}\in\left(0,2/\lambda_{\max}\left(\bm{G}^{\tp}\bm{G}\right)\right)$, which further leads to the monotonic convergence of the iterative algorithm \eqref{eq4}. But, for $\bm{G}$ with high dimensions, the derivation of $\lambda_{\max}\left(\bm{G}^{\tp}\bm{G}\right)$ may be challenging. To show a more direct criterion on how to properly choose $\bm{\sigma}$, we present the following lemma.

\begin{lem}\label{lem12}
If ${\bm{F}}=\bm{\sigma}{\bm{G}}^{\tp}$ is chosen for $\bm{\sigma}$ satisfying
\begin{equation*}\label{e7}
0<\bm{\sigma}<2/{\rm tr}\left({\bm{G}}{\bm{G}}^{\tp}\right)
\end{equation*}

\noindent then the spectral radius condition \eqref{eq39} and the property (P) can be guaranteed, and furthermore the monotonic convergence of the iterative algorithm \eqref{eq4} can be realized.
\end{lem}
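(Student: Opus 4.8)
The plan is to show that the new, easily computable threshold $2/{\rm tr}\left(\bm{G}\bm{G}^{\tp}\right)$ is no larger than the threshold $2/\lambda_{\max}\left(\bm{G}^{\tp}\bm{G}\right)$ already admitted in Corollary \ref{cor7}, so that every $\bm{\sigma}$ allowed here is also allowed there and all conclusions of Corollary \ref{cor7} transfer verbatim. Since Corollary \ref{cor7} establishes that the selection $\bm{F}=\bm{\sigma}\bm{G}^{\tp}$ with $\bm{\sigma}\in\left(0,2/\lambda_{\max}\left(\bm{G}^{\tp}\bm{G}\right)\right)$ simultaneously guarantees the property (P), the spectral radius condition \eqref{eq39}, and the monotonic convergence of \eqref{eq4}, it suffices to verify the single interval inclusion $\left(0,2/{\rm tr}\left(\bm{G}\bm{G}^{\tp}\right)\right)\subseteq\left(0,2/\lambda_{\max}\left(\bm{G}^{\tp}\bm{G}\right)\right)$.

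I would reduce this inclusion to the scalar inequality ${\rm tr}\left(\bm{G}\bm{G}^{\tp}\right)\geq\lambda_{\max}\left(\bm{G}^{\tp}\bm{G}\right)$. First I would invoke the cyclic property of the trace to write ${\rm tr}\left(\bm{G}\bm{G}^{\tp}\right)={\rm tr}\left(\bm{G}^{\tp}\bm{G}\right)$. Next, recalling that $\bm{G}^{\tp}\bm{G}$ is symmetric positive semidefinite, all its eigenvalues $\lambda_{i}\left(\bm{G}^{\tp}\bm{G}\right)$, $i=1,2,\ldots,q$, are real and nonnegative, and the trace equals their sum, ${\rm tr}\left(\bm{G}^{\tp}\bm{G}\right)=\sum_{i=1}^{q}\lambda_{i}\left(\bm{G}^{\tp}\bm{G}\right)$. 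Because $\lambda_{\max}\left(\bm{G}^{\tp}\bm{G}\right)$ is one of these nonnegative summands, it is dominated by the whole sum, whence ${\rm tr}\left(\bm{G}^{\tp}\bm{G}\right)\geq\lambda_{\max}\left(\bm{G}^{\tp}\bm{G}\right)>0$, the strict positivity following from $\bm{G}\neq0$. Combining these facts gives $2/{\rm tr}\left(\bm{G}\bm{G}^{\tp}\right)\leq2/\lambda_{\max}\left(\bm{G}^{\tp}\bm{G}\right)$, which is exactly the desired inclusion.

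Consequently, any $\bm{\sigma}$ with $0<\bm{\sigma}<2/{\rm tr}\left(\bm{G}\bm{G}^{\tp}\right)$ also satisfies $0<\bm{\sigma}<2/\lambda_{\max}\left(\bm{G}^{\tp}\bm{G}\right)$, so Corollary \ref{cor7} applies directly and yields the property (P), the condition \eqref{eq39}, and the claimed monotonic convergence of \eqref{eq4}. I do not expect a genuine obstacle; the only point deserving a moment's care is the trace identity ${\rm tr}\left(\bm{G}\bm{G}^{\tp}\right)={\rm tr}\left(\bm{G}^{\tp}\bm{G}\right)$, since $\bm{G}\bm{G}^{\tp}$ and $\bm{G}^{\tp}\bm{G}$ have different sizes ($p\times p$ versus $q\times q$) and share only their nonzero eigenvalues, so it is the equality of the traces, rather than of the full spectra, on which the argument rests.
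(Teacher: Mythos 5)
Your proposal is correct and follows essentially the same route as the paper: both reduce the claim to the interval inclusion $\left(0,2/{\rm tr}\left(\bm{G}\bm{G}^{\tp}\right)\right)\subseteq\left(0,2/\lambda_{\max}\left(\bm{G}^{\tp}\bm{G}\right)\right)$ and then invoke Corollary \ref{cor7}. The only cosmetic difference is how the key inequality is phrased --- the paper writes $\lambda_{\max}\left(\bm{G}^{\tp}\bm{G}\right)=\left\|\bm{G}\right\|_{2}^{2}\leq\left\|\bm{G}\right\|_{F}^{2}={\rm tr}\left(\bm{G}\bm{G}^{\tp}\right)$, while you argue via trace cyclicity and the fact that the largest eigenvalue of the positive semidefinite matrix $\bm{G}^{\tp}\bm{G}$ is dominated by the sum of its eigenvalues, which is the same fact in different vocabulary.
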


\begin{proof}
Thanks to $\lambda_{\max}\left(\bm{G}^{\tp}\bm{G}\right)=\left\|\bm{G}\right\|_{2}^2\leq\left\|\bm{G}\right\|_{F}^2$, this lemma can be directly derived from Corollary \ref{cor7} by the equivalence between $\left\|\bm{G}\right\|_{F}^2$ and ${\rm tr}\left({\bm{G}}{\bm{G}}^{\tp}\right)$.
\end{proof}

Based on Lemma \ref{lem12}, we can gain ${\bm{F}}$ for the implementation of the Algorithm 1 through the simple calculation of ${\rm tr}\left(\bm{G}\bm{G}^{\tp}\right)$. By directly adopting this candidate selection of $\bm{F}$, we even do not need to perform the full rank decomposition of $\bm{G}$ in (\ref{eq18}) any longer, which may make the Algorithm 1 more available in practical applications. To clearly illustrate how the Algorithm 1 is applied, we present the following example for the solving of the LAE \eqref{eq1}.

{\it Example 1:} Consider the LAE \eqref{eq1} with
\[
\bm{G}=\left[\begin{matrix}
1&3&5&7&2\\
2&4&6&1&5\\
1&2&5&3&3\\
1&2&1&-2&2
\end{matrix}\right]
\]

\noindent for which it is easy to verify $\rank(\bm{G})=3$ and ${\rm tr}\left({\bm{G}}{\bm{G}}^{\tp}\right)=232$. Clearly, $\bm{G}$ is neither of full-row rank nor of full-column rank. To apply the Algorithm 1, we choose $\bm{F}=1/120\bm{G}^\tp$ according to Lemma \ref{lem12}, set the tolerance as $\varepsilon=10^{-5}$, and take the initial condition as $\bm{U}_0=[1,1,0,0,0]^{\tp}$. For the vector $\bm{Y}_{d}$, we consider two cases associated with the solvability and unsolvability of the LAE (\ref{eq1}), respectively.

{\it Case 1): $\bm{Y}_d=[1,0,2,-2]^{\tp}$.}

In this case, the operation of the Algorithm 1 ends after $587$ iterations because of $\left\|\bm{U}_{587}-\bm{U}_{586}\right\|_2<10^{-5}$. Simultaneously, we have
\[\bm{U}_{587}=\left[\frac{\Ds585}{\Ds2308},\frac{\Ds-496}{\Ds357},\frac{\Ds241}{\Ds248},\frac{\Ds103}{\Ds1846},\frac{\Ds-573}{\Ds3427}\right]^{\tp}
\]

\noindent which fulfills $\left\|\bm{Y}_d-\bm{G}\bm{U}_{587}\right\|_{2}=9.1018\times10^{-4}$. Consequently, $\bm{U}_{d}\approx\bm{U}_{587}$ can be regarded as a solution for the LAE \eqref{eq1}. This actually is consistent with the fact that the LAE (\ref{eq1}) is indeed solvable due to $\rank\left(\left[\bm{G}~\bm{Y}_d\right]\right)=\rank\left(\bm{G}\right)=3$.

\begin{figure}
\centering
\includegraphics[width=2.8in]{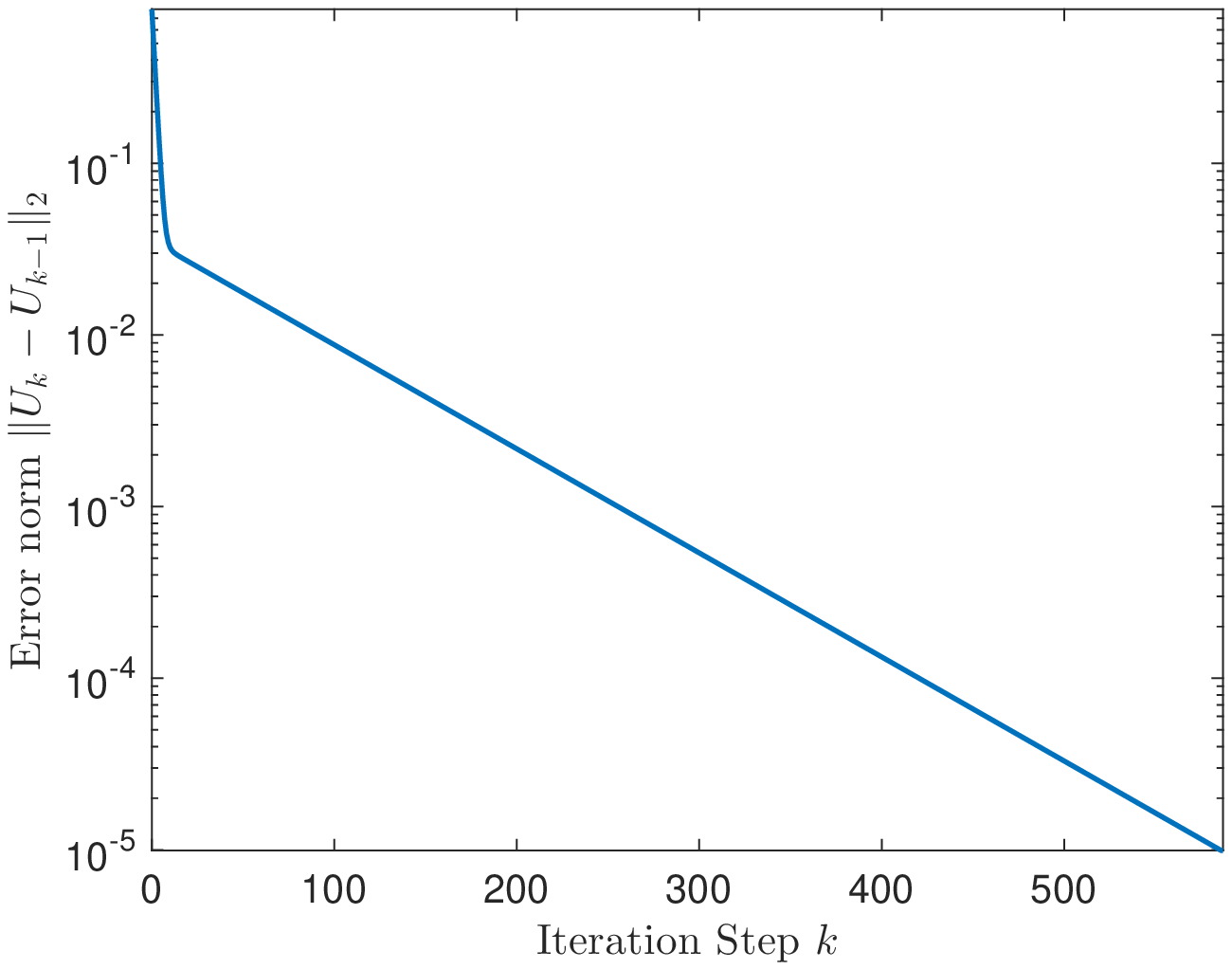}~\includegraphics[width=2.8in]{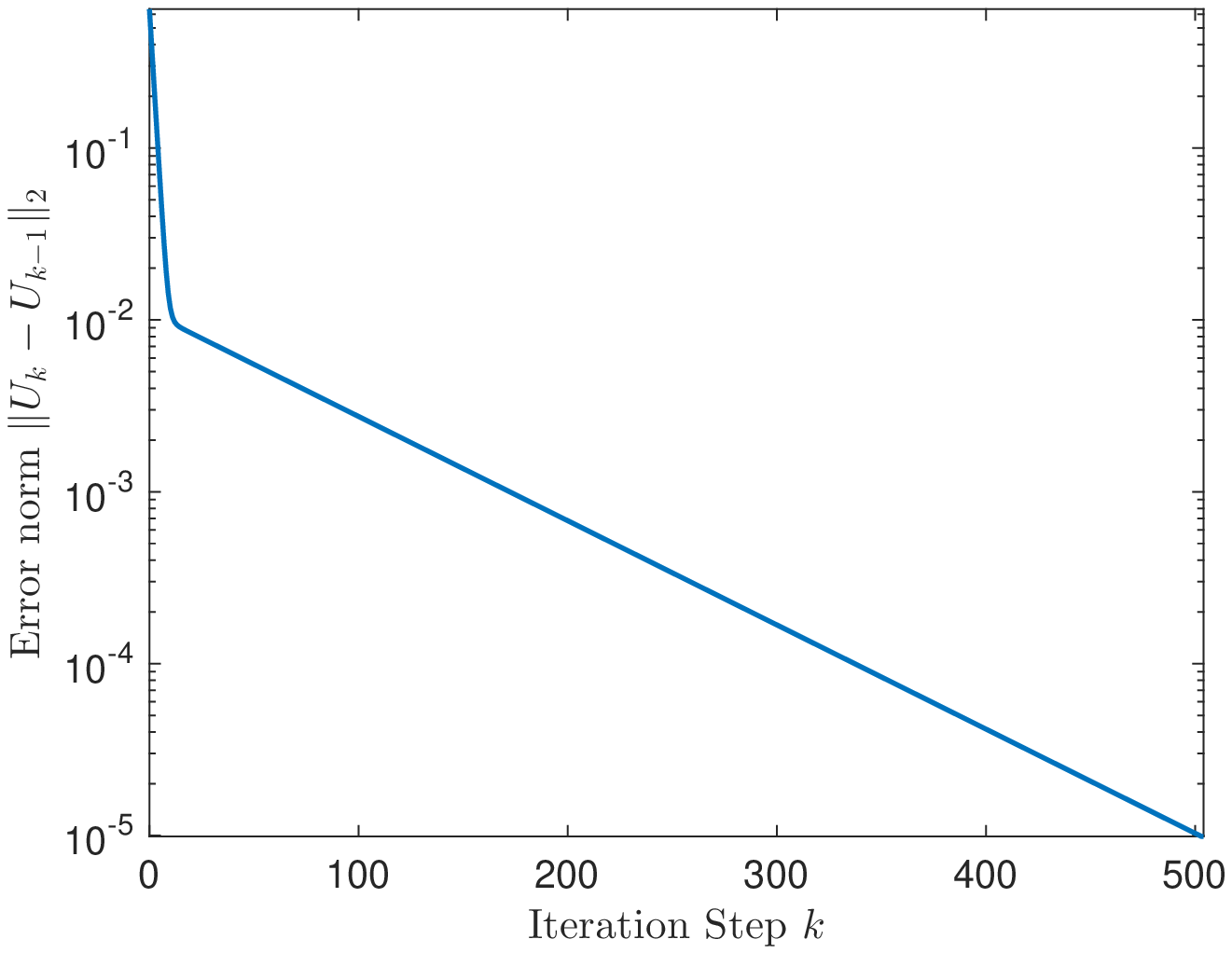}
\caption{(Example 1). Monotonic convergence of the error norm $\left\|\bm{U}_{k}-\bm{U}_{k-1}\right\|_{2}$ along the iteration axis. Left: Case1). Right: Case 2).}\label{fig4}
\end{figure}

{\it Case 2): $\bm{Y}_d=[1,1,2,2]^{\tp}$.}

Based on the application of the Algorithm 1, it follows that $\left\|\bm{U}_{k}-\bm{U}_{k-1}\right\|_{2}<10^{-5}$ occurs after $k=504$ iterations, and also it leads to
\[\bm{U}_{504}=\left[\frac{\Ds547}{\Ds758},\frac{\Ds383}{\Ds2489},\frac{\Ds252}{\Ds1811},-\frac{\Ds440}{\Ds5573},-\frac{\Ds205}{\Ds1258}\right]^{\tp}.
\]

\noindent By simple calculations, we can easily derive $\left\|\bm{Y}_d-\bm{G}\bm{U}_{504}\right\|_{2}\approx1.7321$, owing to which $\bm{U}_{d}\approx\bm{U}_{504}$ can be regarded as a least squares solution for the LAE \eqref{eq1}. This actually coincides with the unsolvability of the LAE (\ref{eq1}) because $\rank\left(\left[\bm{G}~\bm{Y}_d\right]\right)=4$ is greater than $\rank\left(\bm{G}\right)=3$. In addition, $\left\|\bm{Y}_d-\bm{G}\bm{U}_{504}\right\|_{2}$ is close to the least squares norm: $\min_{\bm{\Omega}\in\mathbb{R}^{q}}\left\|\bm{Y}_{d}-\bm{G}\bm{\Omega}\right\|_{2}=1351/780$.

{\it Discussions:} The results developed for two cases verify the effectiveness of the Algorithm 1, regardless of either solvable or unsolvable LAEs. We also depict the evolution of the error norm $\|\bm{U}_{k}-\bm{U}_{k-1}\|_{2}$ along the iteration axis in Fig. \ref{fig4}. Clearly, we can see that the monotonic convergence is achieved for the iterative algorithm (\ref{eq4}) applied to both solvable and unsolvable cases of the LAE (\ref{eq1}). This validates the result in Lemma \ref{lem12}.

\section{Finite-Iteration Convergence Design and Analysis for Solving LAEs}\label{sec4}

In this section, we proceed further to discuss how to improve the exponential convergence results for the iterative algorithm (\ref{eq4}) in Theorems \ref{thm1}-\ref{thm3} by accelerating its convergence rate. Thus, we try to investigate properties of nilpotent matrices such that we can achieve the finite-iteration convergence for the iterative algorithm (\ref{eq4}). We first provide a helpful lemma with properties of nilpotent matrices.

\begin{lem}\label{lem4}
For the iterative algorithm (\ref{eq4}), the following are equivalent design conditions about its gain matrix $\bm{F}\in\mathbb{R}^{q\times p}$.
\begin{enumerate}
\item
The matrix $I_{m}-\widehat{\bm{G}}\bm{F}\bm{H}$ is nilpotent.

\item
There exists some integer $\nu$ ($1\leq\nu\leq m$) such that
\begin{equation}\label{eq43}
\left(I_{m}-\widehat{\bm{G}}\bm{F}\bm{H}\right)^{\nu-1}\neq0~\hbox{and}~
\left(I_{m}-\widehat{\bm{G}}\bm{F}\bm{H}\right)^{\nu}=0.
\end{equation}

\item
The spectral radius of $I_{m}-\widehat{\bm{G}}\bm{F}\bm{H}$ satisfies
\begin{equation}\label{eq58}
\rho\left(I_{m}-\widehat{\bm{G}}\bm{F}\bm{H}\right)=0.
\end{equation}
\end{enumerate}
\end{lem}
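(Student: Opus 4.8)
The plan is to regard $\bm{A}\triangleq I_{m}-\widehat{\bm{G}}\bm{F}\bm{H}\in\mathbb{R}^{m\times m}$ as a generic square matrix and to establish the three equivalences through a spectral argument, relying only on elementary matrix theory (see, e.g., \cite{lt:85}). I would organize the proof as the cycle $1)\Rightarrow3)\Rightarrow2)\Rightarrow1)$. First, for $1)\Rightarrow3)$, suppose $\bm{A}$ is nilpotent, say $\bm{A}^{k}=0$ for some positive integer $k$. For any eigenvalue $\lambda_{i}(\bm{A})$ with a corresponding nonzero eigenvector $\bm{v}$, applying $\bm{A}^{k}$ gives $\left(\lambda_{i}(\bm{A})\right)^{k}\bm{v}=\bm{A}^{k}\bm{v}=0$, whence $\lambda_{i}(\bm{A})=0$ for every $i$. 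By the definition of the spectral radius, this yields $\rho(\bm{A})=0$, i.e., (\ref{eq58}) holds.

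Next I would establish $3)\Rightarrow2)$, which is the step carrying the quantitative bound $1\le\nu\le m$. From $\rho(\bm{A})=0$, every eigenvalue of $\bm{A}$ vanishes, so its characteristic polynomial is $\det\left(\lambda I_{m}-\bm{A}\right)=\lambda^{m}$. Invoking the Cayley--Hamilton theorem then gives $\bm{A}^{m}=0$, so the set of positive integers $j$ with $\bm{A}^{j}=0$ is nonempty and bounded above by $m$. Taking $\nu$ to be its least element produces an index satisfying $1\le\nu\le m$ together with $\bm{A}^{\nu-1}\neq0$ and $\bm{A}^{\nu}=0$, which is precisely (\ref{eq43}). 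Finally, the implication $2)\Rightarrow1)$ is immediate, since (\ref{eq43}) exhibits a power $\bm{A}^{\nu}=0$, so $\bm{A}$ is nilpotent by definition; this closes the cycle and delivers the full equivalence.

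I do not anticipate a genuine obstacle here: the only point requiring care is that nilpotency by itself guarantees a nilpotency index no larger than the dimension $m$, and this is exactly where the Cayley--Hamilton theorem (rather than the bare definition of nilpotency) must be invoked to pin down the range $1\le\nu\le m$ asserted in statement 2). Every step is independent of the particular structure of $\widehat{\bm{G}}$, $\bm{F}$, and $\bm{H}$, so no appeal to the full-rank decomposition (\ref{eq18}) is needed beyond the fact that $\bm{A}$ is a square matrix of size $m\times m$.
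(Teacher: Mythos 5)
Your proof is correct, and it is genuinely more self-contained than the paper's. The paper gives no argument of its own: it settles $1)\Leftrightarrow2)$ by appeal to the definition of the degree of nilpotency in \cite[Subsection 1.4]{lt:85}, and $1)\Leftrightarrow3)$ by citing Exercise 8 of \cite[Subsection 4.11]{lt:85}. You instead close the cycle $1)\Rightarrow3)\Rightarrow2)\Rightarrow1)$ with elementary tools: the eigenvector computation $\bm{A}^{k}\bm{v}=\left(\lambda_{i}(\bm{A})\right)^{k}\bm{v}=0$ shows that a nilpotent matrix has zero spectrum; the Cayley--Hamilton theorem converts $\rho(\bm{A})=0$ (hence characteristic polynomial $\lambda^{m}$) into $\bm{A}^{m}=0$, which is exactly what delivers the bound $1\leq\nu\leq m$; and $2)\Rightarrow1)$ is definitional. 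Your closing remark correctly isolates the one non-definitional ingredient of the lemma --- that the nilpotency index cannot exceed the dimension $m$ --- which the paper's citation-style proof leaves implicit inside the textbook reference (there it follows from the minimal polynomial dividing the characteristic polynomial, which is the same fact your Cayley--Hamilton step encodes). Two minor points that your argument handles silently but are worth being aware of: the eigenvector $\bm{v}$ may be complex since $\bm{A}$ is only real, yet the computation is valid over $\mathbb{C}$ and the conclusion $\rho(\bm{A})=0$ is unaffected; and the edge case $\nu=1$ of (\ref{eq43}) is covered because $\bm{A}^{0}=I_{m}\neq0$. In short, the paper's approach buys brevity, while yours buys a verification that does not outsource the lemma's only substantive step; you are also right that nothing about the structure of $\widehat{\bm{G}}$, $\bm{F}$, $\bm{H}$, or the decomposition (\ref{eq18}) is needed beyond squareness of $\bm{A}$.
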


\begin{proof}
The equivalence between statements 1) and 2) is a direct consequence of the definition of nilpotent matrices (see, e.g., \cite[Subsection 1.4]{lt:85}), where $\nu$ is the least positive integer satisfying $\left(I_{m}-\widehat{\bm{G}}\bm{F}\bm{H}\right)^{i}=0$ or the degree of nilpotency of the matrix $I_{m}-\widehat{\bm{G}}\bm{F}\bm{H}$. For the equivalence between statements 1) and 3), see, e.g., Exercise 8 of \cite[Subsection 4.11]{lt:85}.
\end{proof}

A fact worth noting for Lemma \ref{lem4} is that there always exists some matrix $\bm{F}\in\mathbb{R}^{q\times p}$ to fulfill its three equivalent conditions, where (\ref{eq38}) taking $\widetilde{\bm{F}}$ as any nilpotent matrix gives a candidate selection of $\bm{F}$. Then with Lemma \ref{lem4}, we show a design result to accomplish the finite-iteration convergence for the iterative solution algorithm (\ref{eq4}) of the LAE (\ref{eq1}).

\begin{thm}\label{thm4}
For the LAE (\ref{eq1}) with any given $\bm{Y}_{d}\in\mathbb{R}^{p}$, let the iterative algorithm (\ref{eq4}) of the selection property (P) be applied. Then there exist some positive integer $\nu$ ($1\leq\nu\leq m$) satisfying (\ref{eq43}) and some $\bm{U}_{\nu}\in\mathbb{R}^{q}$ belonging to $\mathcal{U}_{\mathrm{ILC}}(\bm{Y}_{d})$ defined by (\ref{eq20}) such that
\begin{equation}\label{e42}
\aligned
\bm{U}_{k}&
\left\{\aligned
&\neq\bm{U}_{\nu}, &\forall k&\leq\nu-1\\
&=\bm{U}_{\nu}, &\forall k&\geq\nu
\endaligned
\right.\\
\bm{Y}_{k}&
\left\{\aligned
&\neq\bm{H}\left(\bm{H}^{\tp}\bm{H}\right)^{-1}\bm{H}^{\tp}\bm{Y}_{d}, &\forall k&\leq\nu-1\\
&=\bm{H}\left(\bm{H}^{\tp}\bm{H}\right)^{-1}\bm{H}^{\tp}\bm{Y}_{d}, &\forall k&\geq\nu
\endaligned
\right.
\endaligned
\end{equation}

\noindent if and only if there exists some $\bm{F}\in\mathbb{R}^{q\times p}$ such that (\ref{eq58}) holds. Moreover, the converged values of $\bm{U}_{k}$, $\forall k\in\mathbb{Z}_{+}$ under different selections of the initial condition $\bm{U}_{0}$ can constitute the defined set $\mathcal{U}_{\mathrm{ILC}}(\bm{Y}_{d})$ in (\ref{eq20}) such that
\[
\mathcal{U}_{\mathrm{ILC}}(\bm{Y}_{d})=
\left\{\aligned
&\mathcal{U}_{d}(\bm{Y}_{d}),&&\hbox{if the LAE (\ref{eq1}) is solvable}\\
&\overline{\mathcal{U}}_{d}\left(\bm{Y}_{d}\right),&&\hbox{otherwise}
\endaligned
\right.
\]

\noindent where $\mathcal{U}_{d}(\bm{Y}_{d})$ and $\overline{\mathcal{U}}_{d}\left(\bm{Y}_{d}\right)$ are the solution set in (\ref{eq10}) and the least squares solution set in (\ref{eq27}) for the LAE (\ref{eq1}), respectively.
\end{thm}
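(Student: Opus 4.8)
The plan is to reduce the recursion (\ref{eq4}) under property (P) to the lifted form (\ref{eq22}) and then to pass to the observability decomposition (\ref{eq32}) established in Remark \ref{rem7}, which is the natural coordinate system for reading off finite-iteration behaviour. In these coordinates $\widehat{\bm{U}}_k=\widehat{\bm{P}}^{-1}\bm{U}_k$ splits into an observable part $\widehat{\bm{Y}}_k=\left(\bm{H}^{\tp}\bm{H}\right)^{-1}\bm{H}^{\tp}\bm{Y}_k$, governed by the closed-loop matrix $\bm{A}:=I_m-\widehat{\bm{G}}\bm{F}\bm{H}$, and an unobservable part $\widehat{\bm{U}}_{2,k}$ that (\ref{eq32}) shows is frozen, $\widehat{\bm{U}}_{2,k}=\widehat{\bm{U}}_{2,0}$. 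Writing $\bm{e}_k:=\widehat{\bm{Y}}_k-\left(\bm{H}^{\tp}\bm{H}\right)^{-1}\bm{H}^{\tp}\bm{Y}_d$ and using that $\left(\bm{H}^{\tp}\bm{H}\right)^{-1}\bm{H}^{\tp}\bm{Y}_d$ is the unique fixed point of the observable subsystem (here the nonsingularity of $\widehat{\bm{G}}\bm{F}\bm{H}$, ensured by (\ref{eq39}) and a fortiori by (\ref{eq58}), is invoked), the error obeys $\bm{e}_{k+1}=\bm{A}\bm{e}_k$, hence $\bm{e}_k=\bm{A}^k\bm{e}_0$. Since $\bm{Y}_k=\bm{G}\bm{U}_k\in\sn\bm{H}$, one also has $\bm{Y}_k=\bm{H}\widehat{\bm{Y}}_k$, so the second line of (\ref{e42}) is exactly the statement $\bm{e}_k=0$.

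For the sufficiency direction I would invoke Lemma \ref{lem4}: condition (\ref{eq58}), i.e. $\rho\left(\bm{A}\right)=0$, is equivalent to $\bm{A}$ being nilpotent and to the existence of an integer $\nu$ ($1\le\nu\le m$) with $\bm{A}^{\nu-1}\neq0$ and $\bm{A}^{\nu}=0$, i.e. (\ref{eq43}). From $\bm{e}_k=\bm{A}^k\bm{e}_0$ and $\bm{A}^{\nu}=0$ it is immediate that $\bm{e}_k=0$ for all $k\ge\nu$, which gives the equalities in (\ref{e42}) for $\bm{Y}_k$ and, through $\bm{U}_k=\widehat{\bm{P}}\left[\widehat{\bm{Y}}_k^{\tp},\widehat{\bm{U}}_{2,0}^{\tp}\right]^{\tp}$ with the frozen unobservable block, for $\bm{U}_k$ as well; the constant value $\bm{U}_{\nu}$ coincides with the limit $\bm{U}_{\infty}$ and therefore lies in $\mathcal{U}_{\mathrm{ILC}}(\bm{Y}_d)$ by Theorem \ref{thm3}. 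For the strict inequalities at $k\le\nu-1$, because $\bm{A}^{\nu-1}\neq0$ there is an initial error $\bm{e}_0$ with $\bm{A}^{\nu-1}\bm{e}_0\neq0$, and then $\bm{A}^{k}\bm{e}_0\neq0$ for every $k\le\nu-1$ (a power cannot vanish at an earlier index and revive later), realizing the exact step-$\nu$ profile. The necessity direction is read backwards through Lemma \ref{lem4}: the profile (\ref{e42}), whose equality part forces $\bm{A}^{\nu}\bm{e}_0=0$ while its strict part forces $\bm{A}^{\nu-1}\bm{e}_0\neq0$, exhibits $\nu$ as the degree of nilpotency and hence yields $\rho\left(\bm{A}\right)=0$; conversely such an $\bm{F}$ always exists by taking $\widetilde{\bm{F}}$ nilpotent in (\ref{eq38}), so (\ref{eq58}) is attainable.

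The ``moreover'' claim is then inherited from Theorem \ref{thm3}: sweeping $\bm{U}_0$ over $\mathbb{R}^q$, the converged values $\bm{U}_{\nu}=\bm{U}_{\infty}$ trace out precisely the set $\mathcal{U}_{\mathrm{ILC}}(\bm{Y}_d)$ of (\ref{eq20}), which Theorem \ref{thm3} identifies with the solution set $\mathcal{U}_d(\bm{Y}_d)$ in (\ref{eq10}) when (\ref{eq1}) is solvable and with the least squares solution set $\overline{\mathcal{U}}_d(\bm{Y}_d)$ in (\ref{eq27}) otherwise. The finite-iteration mechanism alters only the speed, not the attained limit, so no fresh computation is required at this step.

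I expect the main obstacle to be the rigorous treatment of the strict inequalities in (\ref{e42}) together with the exact stopping time $\nu$: a single trajectory may already lie in the kernel of a lower power of $\bm{A}$ and so terminate early, so $\nu$ must be characterized as the nilpotency degree realized by a suitably chosen (generic) initial error $\bm{e}_0$, and the two-sided equivalence phrased in terms of this degree rather than of one arbitrary run. Tying ``$\neq$ for $k\le\nu-1$'' cleanly to $\bm{A}^{\nu-1}\neq0$ and ``$=$ for $k\ge\nu$'' to $\bm{A}^{\nu}=0$, while keeping the dependence on $\bm{U}_0$ transparent, is the delicate bookkeeping; everything else reduces to the decomposition (\ref{eq32}) and to quoting Lemma \ref{lem4} and Theorem \ref{thm3}.
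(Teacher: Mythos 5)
Your proposal is correct and takes essentially the same route as the paper's own proof: reduction of (\ref{eq4}) under the property (P) to (\ref{eq22}), passage to the observability decomposition (\ref{eq32}) of Remark \ref{rem7}, the frozen unobservable block together with the error recursion governed by $I_{m}-\widehat{\bm{G}}\bm{F}\bm{H}$, Lemma \ref{lem4} to identify (\ref{eq58}) with the nilpotency condition (\ref{eq43}), reconstruction of $\bm{U}_{k}$ through $\widehat{\bm{P}}$, and inheritance of the ``moreover'' part from Theorem \ref{thm3}. The quantifier subtlety you flag for the strict inequalities (a trajectory whose initial error lies in the kernel of a lower power of $I_{m}-\widehat{\bm{G}}\bm{F}\bm{H}$ stops early) is treated no more rigorously in the paper, whose proof asserts the equivalence of (\ref{eq35}) and (\ref{eq43}) with the same implicit genericity of the initial error.
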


\begin{proof}
In view of the proof of Theorem \ref{thm3}, we can leverage the property (P) to derive (\ref{eq22}), and consequently arrive at (\ref{eq32}). Hence, for (\ref{eq32}), we denote $\widehat{\bm{U}}_{1,k}=\widehat{\bm{G}}\bm{U}_{k}$ and $\widehat{\bm{U}}_{2,k}=\widehat{\bm{M}}\bm{U}_{k}$ such that $\widehat{\bm{U}}_{k}=\left[\widehat{\bm{U}}_{1,k}^{\tp},\widehat{\bm{U}}_{2,k}^{\tp}\right]^{\tp}$, and then it is straightforward to obtain from (\ref{eq32}) that
\begin{equation}\label{eq33}
\aligned
\widehat{\bm{U}}_{1,k+1}
&=\left(I_{m}-\widehat{\bm{G}}\bm{F}\bm{H}\right)\widehat{\bm{U}}_{1,k}\\
&~~~+\widehat{\bm{G}}\bm{F}\bm{H}\left(\bm{H}^{\tp}\bm{H}\right)^{-1}\bm{H}^{\tp}\bm{Y}_{d},\quad\forall k\in\mathbb{Z}_{+}
\endaligned
\end{equation}

\noindent and that
\begin{equation}\label{eq34}
\widehat{\bm{U}}_{2,k+1}
=\widehat{\bm{U}}_{2,k},\quad\forall k\in\mathbb{Z}_{+}.
\end{equation}

\noindent Since we can rewrite (\ref{eq33}) as
\begin{equation*}\label{}
\aligned
\widehat{\bm{U}}_{1,k+1}-\left(\bm{H}^{\tp}\bm{H}\right)^{-1}\bm{H}^{\tp}\bm{Y}_{d}
&=\left(I_{m}-\widehat{\bm{G}}\bm{F}\bm{H}\right)\Big[\widehat{\bm{U}}_{1,k}
-\left(\bm{H}^{\tp}\bm{H}\right)^{-1}\bm{H}^{\tp}\bm{Y}_{d}\Big],\quad\forall k\in\mathbb{Z}_{+}
\endaligned
\end{equation*}

\noindent we can easily conclude that there follows
\begin{equation}\label{eq35}
\widehat{\bm{U}}_{1,k}
\left\{\aligned
&\neq\left(\bm{H}^{\tp}\bm{H}\right)^{-1}\bm{H}^{\tp}\bm{Y}_{d}, &\forall k&\leq\nu-1\\
&=\left(\bm{H}^{\tp}\bm{H}\right)^{-1}\bm{H}^{\tp}\bm{Y}_{d}, &\forall k&\geq\nu
\endaligned
\right.
\end{equation}

\noindent if and only if (\ref{eq43}) holds. Based on Lemma \ref{lem4}, we can establish the equivalence between (\ref{eq58}) and (\ref{eq35}).

By applying $\widehat{\bm{U}}_{2,k}=\widehat{\bm{M}}\bm{U}_{k}$ and $\widehat{\bm{N}}\widehat{\bm{M}}=I_{q}-\bm{F}\bm{H}\left(\widehat{\bm{G}}\bm{F}\bm{H}\right)^{-1}\widehat{\bm{G}}$ (see Remark \ref{rem7}), we can employ (\ref{eq34}) to deduce
\begin{equation}\label{eq60}
\aligned
\bm{U}_{k}
&=\widehat{\bm{P}}\widehat{\bm{U}}_{k}\\
&=\bm{F}\bm{H}\left(\widehat{\bm{G}}\bm{F}\bm{H}\right)^{-1}\widehat{\bm{U}}_{1,k}
+\widehat{\bm{N}}\widehat{\bm{U}}_{2,k}\\
&=\widehat{\bm{N}}\widehat{\bm{M}}\bm{U}_{0}
+\bm{F}\bm{H}\left(\widehat{\bm{G}}\bm{F}\bm{H}\right)^{-1}\widehat{\bm{U}}_{1,k}\\
&=\left[I_{q}-\bm{F}\bm{H}\left(\widehat{\bm{G}}\bm{F}\bm{H}\right)^{-1}\widehat{\bm{G}}\right]\bm{U}_{0}\\
&~~~+\bm{F}\bm{H}\left(\widehat{\bm{G}}\bm{F}\bm{H}\right)^{-1}\widehat{\bm{U}}_{1,k},\quad\forall k\in\mathbb{Z}_{+}
\endaligned
\end{equation}

\noindent and therefore, we can conclude from the equivalence between (\ref{eq58}) and (\ref{eq35}) that (\ref{eq58}) also provides a necessary and sufficient condition to arrive at
\[
\aligned
\bm{U}_{k}
&\left\{\aligned
&\neq\bm{U}_{\nu}, &\forall k&\leq\nu-1\\
&=\bm{U}_{\nu}, &\forall k&\geq\nu
\endaligned
\right.
\endaligned
\]

\noindent where the use of (\ref{eq35}) and (\ref{eq60}) with $\widehat{\bm{G}}=\left(\bm{H}^{\tp}\bm{H}\right)^{-1}\bm{H}^{\tp}\bm{G}$ yields
\[\aligned
\bm{U}_{\nu}
&=\left[I_{q}-\bm{F}\bm{H}\left(\widehat{\bm{G}}\bm{F}\bm{H}\right)^{-1}\widehat{\bm{G}}\right]\bm{U}_{0}\\
&~~~+\bm{F}\bm{H}\left(\widehat{\bm{G}}\bm{F}\bm{H}\right)^{-1}\left(\bm{H}^{\tp}\bm{H}\right)^{-1}\bm{H}^{\tp}\bm{Y}_{d}\\
&=\left[I_{q}-\bm{F}\bm{H}\left(\bm{H}^{\tp}\bm{G}\bm{F}\bm{H}\right)^{-1}\bm{H}^{\tp}\bm{G}\right]\bm{U}_{0}\\
&~~~+\bm{F}\bm{H}\left(\bm{H}^{\tp}\bm{G}\bm{F}\bm{H}\right)^{-1}\bm{H}^{\tp}\bm{Y}_{d},\quad\forall\bm{Y}_{d}\in\mathbb{R}^{p}.
\endaligned\]

\noindent Clearly, by (\ref{eq20}), $\bm{U}_{\nu}\in\mathcal{U}_{\mathrm{ILC}}(\bm{Y}_{d})$ holds. Thanks to $\bm{Y}_{k}=\bm{G}\bm{U}_{k}=\bm{H}\widehat{\bm{G}}\bm{U}_{k}=\bm{H}\widehat{\bm{U}}_{1,k}$, we can use (\ref{eq35}) to obtain
\begin{equation*}\label{}
\bm{Y}_{k}
\left\{\aligned
&\neq\bm{H}\left(\bm{H}^{\tp}\bm{H}\right)^{-1}\bm{H}^{\tp}\bm{Y}_{d}, &\forall k&\leq\nu-1\\
&=\bm{H}\left(\bm{H}^{\tp}\bm{H}\right)^{-1}\bm{H}^{\tp}\bm{Y}_{d}, &\forall k&\geq\nu
\endaligned
\right.
\end{equation*}

\noindent which, conversely, also yields \eqref{eq35} by the full-column rank of $\bm{H}$. Consequently, we can derive that (\ref{eq58}) provides a necessary and sufficient condition for the finite-iteration convergence result in (\ref{e42}). In addition, we can directly deduce the remaining results of this theorem by resorting to the results of Theorem \ref{thm3}, which thus are not proved in details here.
\end{proof}

\begin{rem}\label{rem6}
By Theorem \ref{thm4}, we show that the finite-iteration convergence of the iterative algorithm \eqref{eq4} can be achieved. It is thanks to the incorporation of the idea of deadbeat control. The convergence speed of the iterative algorithm \eqref{eq4} depends on the degree $\nu$ of the minimal polynomial of $I_{m}-\widehat{\bm{G}}\bm{F}\bm{H}$ regardless of the solvability of the LAE \eqref{eq1}, namely, $\bm{U}_k$ converges to $\bm{U}_{\nu}$ in the $\nu$th iteration step. Since $\nu\leq m$, the (least squares) solutions for the LAE \eqref{eq1} can be determined within at most $m$ iterations by the application of the iterative algorithm (\ref{eq4}). Additionally, Theorem \ref{thm4} is developed in a unified way for both solvable and unsolvable LAEs by resorting to the property (P). As shown in Remark \ref{rem10} and Lemma \ref{lem11}, this property is actually needed only when using the iterative algorithm (\ref{eq4}) for unsolvable LAEs.
\end{rem}
\begin{rem}\label{rem9}
In Theorem \ref{thm4}, our iterative method is designed such that the solving of LAEs can be realized in finite-iteration steps. In contrast to the common existing iterative methods that generally converge asymptotically, our iterative method gets a faster convergence rate and operates more like a direct method able to obtain the exact (least squares) solutions for any LAEs within finite operations, in addition to the robustness property and predictable behaviors of iterative methods. It is particularly significant for practical applications, especially for those under the high accuracy requirements. Thus, the ideas in the control may provide a feasible way to improve the design of iterative algorithms for solving the mathematical problems, particularly for improving the convergence rate.  This, however, may not be easily realized by resorting to the popular iterative methods in the mathematics. As a consequence, it discloses that although the tools from the mathematics provide the basis for the control design and analysis, the control design and analysis tools may, in turn, be beneficial to enhance the problem-solving methods in the mathematics.
\end{rem}

\section{Applications to ILC Design and Analysis}\label{sec5}

In this section, we show how to employ the lifting technique to transform the perfect tracking problem for discrete-time ILC systems into the solving problem for LAEs. Then we apply the observer-based design methods of solving LAEs to accomplish the design and analysis of conventional 2-D ILC systems with evolution along not only the infinite iteration axis, denoted by $k\in\mathbb{Z}_{+}$, but also a finite time axis, denoted by $t\in\mathbb{Z}_{N}$.

We specifically consider ILC for a discrete-time system as
\begin{equation}\label{eq42}
\left\{\aligned
x_{k}(t+1)
&=Ax_{k}(t)+Bu_{k}(t)+w(t)\\
y_{k}(t)
&=Cx_{k}(t)+v(t)
\endaligned\right.,\quad\forall t\in\mathbb{Z}_{N},\forall k\in\mathbb{Z}_{+}
\end{equation}

\noindent where $x_{k}(t)\in\mathbb{R}^{n_{s}}$, $u_{k}(t)\in\mathbb{R}^{n_{i}}$ and $y_{k}(t)\in\mathbb{R}^{n_{o}}$ are the system state, input and output, respectively; $A$, $B$ and $C$ are the system matrices with required dimensions; and $w(t)\in\mathbb{R}^{n_{s}}$ and $v(t)\in\mathbb{R}^{n_{o}}$ are the iteration-invariant disturbances. To implement ILC for the system \eqref{eq42}, we adopt the iteration-invariant initial state as $x_{k}(0)=x_{0}\in\mathbb{R}^{n_{s}}$, $\forall k\in\mathbb{Z}_{+}$, and take any initial input $u_0(t)\in\mathbb{R}^{n_{i}}$, $\forall t\in\mathbb{Z}_{N-1}$. Without loss of generality, we assume that the relative degree of the system \eqref{eq42} is $r>0$, where $r$ is also the time delay in the output when the influence of the input works \cite{sw:03}. That is, we have
\begin{equation}\label{eq46}
CA^{j}B=0, \forall j=0,1,\cdots,r-2~~\hbox{and}~~CA^{r-1}B\neq0.
\end{equation}

\noindent Furthermore, we denote the Markov parameter matrices of the system \eqref{eq42} as ${G}_i=CA^{i+r-1}B$, $\forall i\in\mathbb{Z}_{+}$, where $G_0$ is the first nonzero Markov parameter matrix.

In general, a problem of ILC is to find some input sequence, under which the output of the system \eqref{eq42} can track the desired reference for all time steps within the finite interval of interest after an iterative process. To be specific, in view of the relative degree $r$, the perfect tracking task is to determine some proper control input sequence $\left\{u_k(t):k\in\mathbb{Z}_{+},t\in\mathbb{Z}_{N-1}\right\}$, converging to some desired input $u_d(t)$, $\forall t\in\mathbb{Z}_{N-1}$, such that
\begin{equation}\label{eq47}
\lim_{k\to\infty}y_{k}(t)=y_{d}(t),\quad\forall t=r,r+1,\cdots,r+N-1
\end{equation}

\noindent where $y_d(t)\in\mathbb{R}^{n_o}$ denotes the desired reference.

To proceed, we consider the lifting technique and introduce six supervectors in the form of
\begin{equation}\label{eq48}
\aligned
\bm{U}_{k}
&=\left[u_{k}^{\tp}(0),u_{k}^{\tp}(1),\cdots,u_{k}^{\tp}(N-1)\right]^{\tp},~~\forall k\in\mathbb{Z}_{+}\\
\bm{U}_{d}
&=\left[u_{d}^{\tp}(0),u_{d}^{\tp}(1),\cdots,u_{d}^{\tp}(N-1)\right]^{\tp}\\
\bm{W}
&=\left[w^{\tp}(0),w^{\tp}(1),\cdots,w^{\tp}(N+r-2)\right]^{\tp}\\
\bm{Y}_{k}
&=\left[y_{k}^{\tp}(r),y_{k}^{\tp}(r+1),\cdots,y_{k}^{\tp}(N+r-1)\right]^{\tp},~~\forall k\in\mathbb{Z}_{+}\\
\bm{Y}_{d}
&=\left[y_{d}^{\tp}(r),y_{d}^{\tp}(r+1),\cdots,y_{d}^{\tp}(N+r-1)\right]^{\tp}\\
\bm{V}
&=\left[v^{\tp}(r),v^{\tp}(r+1),\cdots,v^{\tp}(N+r-1)\right]^{\tp}.
\endaligned
\end{equation}

\noindent Thus, we can write the system \eqref{eq42} with supervectors as
\begin{equation}\label{e1}
\bm{Y}_k=\bm{G}\bm{U}_k+\bm{X}_0+\bm{D}\bm{W}+\bm{V}
\end{equation}

\noindent where
\begin{equation}\label{eq49}
{\small
\aligned
\bm{G}
&=\begin{bmatrix}
G_{0}&0&\cdots&0\\
G_{1}&G_{0}&\ddots&\vdots\\
\vdots&\vdots&\ddots&0\\
G_{N-1}&G_{N-2}&\cdots&G_{0}
\end{bmatrix},\quad
\bm{X}_0=\begin{bmatrix}
CA^{r}\\
CA^{r+1}\\
\vdots\\
CA^{r+N-1}\end{bmatrix}x_0\\
\bm{D}
&=\begin{bmatrix}
CA^{r-1}&CA^{r-2}&\cdots&C&0&\cdots&0\\
CA^{r}&CA^{r-1}&\vdots&CA&C&\ddots&\vdots\\
\vdots&\vdots&\vdots&\vdots&\vdots&\ddots&0\\
CA^{r+N-2}&CA^{r+N-3}&\cdots&CA^{N-1}&CA^{N-2}&\cdots&C
\end{bmatrix}.
\endaligned
}
\end{equation}

\noindent Correspondingly, the tracking objective (\ref{eq47}) is equivalent to
\begin{equation}\label{eq50}
\lim_{k\to\infty}\bm{Y}_k=\bm{Y}_d.
\end{equation}

\noindent If we denote $\widetilde{\bm{Y}}_d=\bm{Y}_d-(\bm{X}_0+\bm{D}\bm{W}+\bm{V})$, then we can easily see that the iterative seeking of $\bm{U}_{k}$ for (\ref{e1}) to reach the objective (\ref{eq50}) coincides with the iterative solving of an LAE given by
\begin{equation}\label{e2}
\widetilde{\bm{Y}}_{d}=\bm{G}\bm{U}_{d}.
\end{equation}

\noindent In particular, this clearly explains the relation between designing ILC and solving LAEs, as exemplified in Subsection \ref{sec11}.

Motivated by the abovementioned analysis, we benefit from Theorems \ref{thm3} and \ref{thm4} to develop a tracking result of ILC for the system \eqref{eq42} in the following theorem.

\begin{thm}\label{thm5}
For any desired reference $y_{d}(t)$, the system (\ref{eq42}) can achieve the tracking objective (\ref{eq47}) under an updating law of ILC involving the gain matrices $F_{ij}\in\mathbb{R}^{n_{i}\times n_{o}}$, $\forall i$, $j=1$, $2$, $\cdots$, $N$ and taking the form of
\begin{equation}\label{eq51}
\aligned
u_{k+1}(t)&=u_{k}(t)+\sum_{i=0}^{N-1}F_{t+1,i+1}\big[y_{d}(i+r)
-y_{k}(i+r)\big],\quad\forall t\in\mathbb{Z}_{N-1},\forall k\in\mathbb{Z}_{+}
\endaligned
\end{equation}

\noindent if and only if the LAE (\ref{e2}) is solvable. Moreover, the design condition of (\ref{eq51}) is that its gain matrices can make $\bm{F}=\left[F_{ij}\right]\in\mathbb{R}^{Nn_{i}\times Nn_{o}}$ to satisfy the spectral radius condition (\ref{eq39}) or (\ref{eq58}).
\end{thm}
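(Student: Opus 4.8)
The plan is to recognize that the iteration-domain update law (\ref{eq51}) is nothing but the observer-based iterative algorithm (\ref{eq4}) in disguise, applied to the lifted LAE (\ref{e2}), and then to invoke Theorems \ref{thm3} and \ref{thm4} directly. First I would stack the scalar-time update (\ref{eq51}) over $t=0,1,\cdots,N-1$: the double-index gain $F_{t+1,i+1}$ assembles exactly into the block matrix $\bm{F}=[F_{ij}]\in\mathbb{R}^{Nn_i\times Nn_o}$, and the stacked tracking errors $y_d(i+r)-y_k(i+r)$ assemble into the supervector $\bm{Y}_d-\bm{Y}_k$ defined in (\ref{eq48}). This yields the compact form $\bm{U}_{k+1}=\bm{U}_k+\bm{F}(\bm{Y}_d-\bm{Y}_k)$.

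Next I would eliminate the output supervector using the lifted dynamics (\ref{e1}). Since $x_0$, $w(t)$, and $v(t)$ are iteration-invariant, the term $\bm{X}_0+\bm{D}\bm{W}+\bm{V}$ is a fixed vector, so $\bm{Y}_d-\bm{Y}_k=\widetilde{\bm{Y}}_d-\bm{G}\bm{U}_k$ with $\widetilde{\bm{Y}}_d=\bm{Y}_d-(\bm{X}_0+\bm{D}\bm{W}+\bm{V})$. Substituting gives $\bm{U}_{k+1}=(I_{Nn_i}-\bm{F}\bm{G})\bm{U}_k+\bm{F}\widetilde{\bm{Y}}_d$, which is precisely (\ref{eq4}) for the LAE (\ref{e2}). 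Therefore the entire iteration-domain analysis of (\ref{eq51}) reduces to the solving of (\ref{e2}) by the iterative algorithm (\ref{eq4}).

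It then remains to connect convergence of $\bm{U}_k$ back to the tracking objective. By (\ref{e1}) and the iteration-invariance of the disturbances, $\lim_{k\to\infty}\bm{Y}_k=\bm{G}\bm{U}_\infty+\bm{X}_0+\bm{D}\bm{W}+\bm{V}$, so the lifted objective (\ref{eq50}) (equivalently (\ref{eq47})) holds if and only if $\bm{G}\bm{U}_\infty=\widetilde{\bm{Y}}_d$, i.e., $\bm{U}_\infty$ is a genuine solution of (\ref{e2}). For the sufficiency direction I would suppose (\ref{e2}) is solvable and $\bm{F}$ obeys (\ref{eq39}); then Theorem \ref{thm3} (together with Remark \ref{rem5}, which shows property (P) is dispensable for solvable LAEs) guarantees that $\bm{U}_k$ converges with $\bm{U}_\infty\in\mathcal{U}_d(\widetilde{\bm{Y}}_d)$, forcing $\bm{Y}_\infty=\bm{Y}_d$. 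For necessity I would note that if (\ref{eq47}) is achieved then the converged state satisfies $\bm{G}\bm{U}_\infty=\widetilde{\bm{Y}}_d$, so (\ref{e2}) must be solvable. The ``moreover'' claim follows by reading off the two design conditions: (\ref{eq39}) yields exponential convergence via Theorem \ref{thm3}, while (\ref{eq58}) yields finite-iteration (deadbeat) convergence via Theorem \ref{thm4}.

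I expect the main obstacle to be bookkeeping rather than conceptual: carefully verifying that the block-stacked form of (\ref{eq51}) reproduces $\bm{F}=[F_{ij}]$ acting on $\bm{Y}_d-\bm{Y}_k$ exactly, and---more delicately---justifying that property (P) can be dropped. Since perfect tracking is possible only in the solvable case, invoking Remark \ref{rem5} is the clean way to restrict the design requirement to (\ref{eq39}) or (\ref{eq58}) alone, without the additional span condition (P), matching the statement of the theorem.
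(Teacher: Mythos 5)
Your proposal is correct and follows essentially the same route as the paper's own proof: stacking (\ref{eq51}) into the supervector form $\bm{U}_{k+1}=\bm{U}_{k}+\bm{F}\left(\bm{Y}_{d}-\bm{Y}_{k}\right)$, using (\ref{e1}) to obtain the recursion $\bm{U}_{k+1}=\left(I_{Nn_{i}}-\bm{F}\bm{G}\right)\bm{U}_{k}+\bm{F}\widetilde{\bm{Y}}_{d}$ of the same structure as (\ref{eq4}), and then invoking Theorems \ref{thm3} and \ref{thm4} to tie convergence under (\ref{eq39}) or (\ref{eq58}) to the solvability of the LAE (\ref{e2}). If anything, your treatment is slightly more careful than the paper's, since you explicitly spell out the tracking-versus-solvability equivalence and justify dropping property (P) via Remark \ref{rem5}, both of which the paper's proof leaves implicit.
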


\begin{proof}
By the supervectors in (\ref{eq48}), we can write (\ref{eq51}) as
\[
\bm{U}_{k+1}
=\bm{U}_{k}+\bm{F}\left(\bm{Y}_{d}-\bm{Y}_{k}\right)
\]

\noindent which, together with (\ref{e1}), further leads to
\begin{equation}\label{e8}
\bm{U}_{k+1}
=\left(I_{n_{i}N}-\bm{F}\bm{G}\right)\bm{U}_{k}+\bm{F}\widetilde{\bm{Y}}_{d},\quad\forall k\in\mathbb{Z}_{+}.
\end{equation}

\noindent Clearly, (\ref{e8}) that has the same structure as (\ref{eq4}) gives an iterative solution algorithm for the LAE (\ref{e2}). Hence, by the application of Theorems \ref{thm3} and \ref{thm4} to (\ref{e8}), we know that either the condition (\ref{eq39}) or (\ref{eq58}) can ensure the convergence of $\bm{U}_{k}$. Consequently, we can further develop the equivalence between the solvability of the LAE (\ref{e2}) and the accomplishment of (\ref{eq47}) for the system (\ref{eq42}) under the ILC updating law (\ref{eq51}).
\end{proof}

\begin{rem}\label{rem12}
In Theorem \ref{thm5}, the ILC tracking task is realized by bridging an equivalence relation between it and the solving problem of an LAE. This provides us a way to newly introduce an observer-based design method of ILC that yields algorithms beyond the typical algorithm framework for ILC, especially for PID-type ILC. Furthermore, our observer-based design method of ILC effectively works even without the full (row or column) rank condition of $G_{0}$, which is generally considered as one of the most fundamental assumptions for ILC (see, e.g., \cite{bta:06,acm:07}). As another benefit, our observer-based design method provides a potential way to narrow the gap between the classical design methods of ILC and the popular design methods based on state observers, which makes it possible to improve the ILC design and analysis with feedback-based tools. A direct consequence of this benefit is that when the spectral radius condition \eqref{eq58} is ensured, the perfect tracking objective of ILC can be achieved only within finite iterations.
\end{rem}

For the most commonly considered cases when the full rank condition of $G_0$ is satisfied, we may employ Theorems \ref{thm1} and \ref{thm2} to develop further tracking results of ILC for the system (\ref{eq42}). We first establish a corollary for ILC of the system (\ref{eq42}) under the full-column rank condition on $G_{0}$.

\begin{cor}\label{cor2}
Consider the system (\ref{eq42}) with $\rank(G_{0})=n_{i}$. If the updating law (\ref{eq51}) of ILC is applied with its gain matrices selected such that $\bm{F}$ satisfies the spectral radius condition (\ref{eq5}), then for any desired reference $y_{d}(t)$, the tracking objective (\ref{eq47}) can be achieved, together with yielding a unique learned input $\lim_{k\to\infty}u_{k}(t)=u_{d}(t)$, $\forall t\in\mathbb{Z}_{N-1}$, if and only if the LAE (\ref{e2}) is solvable. In particular, if the updating law (\ref{eq51}) takes a simple form of
\begin{equation}\label{eq52}
u_{k+1}(t)=u_{k}(t)+F_{0}\left[y_{d}(t+r)-y_{k}(t+r)\right],~\forall t\in\mathbb{Z}_{N-1},\forall k\in\mathbb{Z}_{+}
\end{equation}

\noindent for some gain matrix $F_{0}\in\mathbb{R}^{n_{i}\times n_{o}}$, then $\bm{F}=I_N\otimes F_{0}$ holds, and the spectral radius condition \eqref{eq5} is equivalent to
\[
\rho\left(I_{n_{i}}-F_{0}G_{0}\right)<1.
\]
\end{cor}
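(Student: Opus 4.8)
The plan is to recognize this corollary as a direct specialization of Theorem \ref{thm1} to the lifted ILC system, so the core task reduces to verifying that the full-column rank hypothesis on $G_0$ translates into the full-column rank of the block Toeplitz matrix $\bm{G}$ in \eqref{eq49}; once that is in hand, Theorem \ref{thm1} supplies the convergence, the uniqueness, and the ``if and only if'' with solvability essentially for free.

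First I would establish that $\rank(G_0)=n_i$ forces $\rank(\bm{G})=Nn_i$, i.e., that $\bm{G}$ has full-column rank. This follows from the block lower-triangular structure of \eqref{eq49}: writing $\bm{G}\bm{U}=0$ with $\bm{U}=[u^\tp(0),\dots,u^\tp(N-1)]^\tp$ and reading off the block rows successively yields $G_0u(0)=0$, then $G_0u(1)=0$, and so on, so the full-column rank of $G_0$ forces each $u(t)=0$ by induction. Hence $\nl\bm{G}=\{0\}$, and by Lemma \ref{lem7} the full-column rank of $\bm{G}$ is equivalent to the existence of some gain matrix $\bm{F}$ fulfilling the spectral radius condition \eqref{eq5}, which is precisely the design hypothesis placed on the updating law \eqref{eq51}.

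Next I would invoke the lifted form of \eqref{eq51} already derived in the proof of Theorem \ref{thm5}, namely $\bm{U}_{k+1}=(I_{n_iN}-\bm{F}\bm{G})\bm{U}_k+\bm{F}\widetilde{\bm{Y}}_d$, which is exactly the iterative algorithm \eqref{eq4} applied to the LAE \eqref{e2}. Since $\bm{G}$ has full-column rank and \eqref{eq5} holds, Theorem \ref{thm1} guarantees that $\bm{U}_k$ converges exponentially fast to $\bm{U}_\infty=(\bm{F}\bm{G})^{-1}\bm{F}\widetilde{\bm{Y}}_d$ independently of $\bm{U}_0$, and that $\bm{U}_\infty$ is the unique solution of \eqref{e2} if and only if \eqref{e2} is solvable. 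Translating back through \eqref{e1} and the definition $\widetilde{\bm{Y}}_d=\bm{Y}_d-(\bm{X}_0+\bm{D}\bm{W}+\bm{V})$, the relation $\bm{G}\bm{U}_\infty=\widetilde{\bm{Y}}_d$ is equivalent to $\lim_{k\to\infty}\bm{Y}_k=\bm{Y}_d$, that is, to the tracking objective \eqref{eq47}; and the full-column rank of $\bm{G}$ makes the learned input $\bm{U}_\infty$, hence each $u_d(t)$, unique by Lemma \ref{lem8}. The ``only if'' direction is immediate, since $\lim_{k\to\infty}\bm{Y}_k=\bm{Y}_d$ forces $\widetilde{\bm{Y}}_d\in\sn\bm{G}$, i.e., solvability of \eqref{e2}.

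For the ``in particular'' part, I would first match \eqref{eq52} term-by-term with \eqref{eq51}, which forces $F_{t+1,t+1}=F_0$ and $F_{t+1,i+1}=0$ for $i\neq t$, so the block matrix $\bm{F}=[F_{ij}]$ is block diagonal, i.e., $\bm{F}=I_N\otimes F_0$. The only mildly delicate computation is then evaluating $\rho(I_{n_iN}-\bm{F}\bm{G})$: since $\bm{F}=I_N\otimes F_0$ is block diagonal and $\bm{G}$ is block lower-triangular with every diagonal block equal to $G_0$, the product $\bm{F}\bm{G}$ is block lower-triangular with every diagonal block equal to $F_0G_0$, so $I_{n_iN}-\bm{F}\bm{G}$ is block lower-triangular with all diagonal blocks equal to $I_{n_i}-F_0G_0$. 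As the spectrum of a block-triangular matrix is the union of the spectra of its diagonal blocks, I would obtain $\rho(I_{n_iN}-\bm{F}\bm{G})=\rho(I_{n_i}-F_0G_0)$, reducing \eqref{eq5} to $\rho(I_{n_i}-F_0G_0)<1$. I expect no genuine obstacle in this argument; the only point demanding care is the bookkeeping of the block-triangular structure, which renders the whole statement a corollary of Theorem \ref{thm1}.
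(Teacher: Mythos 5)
Your proposal is correct and follows essentially the same route as the paper: the paper's (omitted) proof consists precisely of noting the equivalence between the full-column rank of $G_0$ and that of $\bm{G}$, lifting the updating law (\ref{eq51}) as in the proof of Theorem \ref{thm5}, and invoking Theorem \ref{thm1}. Your write-up simply supplies the details the paper leaves implicit --- the block-row induction showing $\nl\bm{G}=\{0\}$, the translation between $\bm{G}\bm{U}_{\infty}=\widetilde{\bm{Y}}_{d}$ and the tracking objective (\ref{eq47}), and the block-triangular spectrum argument reducing (\ref{eq5}) to $\rho\left(I_{n_{i}}-F_{0}G_{0}\right)<1$ --- all of which are accurate.
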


\begin{proof}
Owing to the equivalence between the full-column rank of $\bm{G}$ and that of $G_0$, this corollary can be obtained based on Theorem \ref{thm1} by directly following the same lines as the proof of Theorem \ref{thm5}, and thus its proof is omitted here.
\end{proof}

For the case when ${G}_0$ is with the full-row rank, the following corollary reveals that the tracking objective \eqref{eq47} for the system \eqref{eq42} can be realized in the presence of any desired reference.

\begin{cor}\label{cor3}
For the system \eqref{eq42} with any desired reference $y_{d}(t)$, the tracking objective (\ref{eq47}) can always be achieved under the updating law \eqref{eq51} of ILC if and only if any of the following conditions holds:
\begin{enumerate}
\item
the LAE (\ref{e2}) is always solvable;

\item
$\rank(G_{0})=n_{o}$ holds;

\item
there exist some gain matrices of (\ref{eq51}) such that $\bm{F}$ fulfills the spectral radius condition (\ref{eq8}).
\end{enumerate}

\noindent In particular, when applying the updating law (\ref{eq52}), the spectral radius condition (\ref{eq8}) equivalently becomes
\[
\rho\left(I_{n_{i}}-G_{0}F_{0}\right)<1.
\]
\end{cor}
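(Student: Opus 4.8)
The plan is to reduce everything to the already-established solvability and convergence theory for the lifted LAE \eqref{e2}, so the heart of the argument is a structural fact linking the block-Toeplitz matrix $\bm{G}$ of \eqref{eq49} to its diagonal block $G_{0}$. I would first note that $\bm{G}$ is block lower-triangular with $G_{0}$ repeated along the diagonal, so attempting to solve $\bm{G}\bm{U}=\bm{Y}$ by block forward-substitution (the first block row forces the first block of $\bm{Y}$ into $\mathrm{range}(G_{0})$, and each subsequent block row is then solvable precisely when $G_{0}$ is surjective) shows that $\bm{G}$ maps onto $\mathbb{R}^{N n_{o}}$ \emph{if and only if} $\rank(G_{0})=n_{o}$. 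Equivalently, $\bm{G}$ has full-row rank iff $G_{0}$ does; this is the bridge identifying statement 2) with surjectivity of the lifted map.

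With that bridge in hand, the equivalence of the three listed conditions is immediate from Lemma \ref{lem2} applied to \eqref{e2} with row dimension $p=N n_{o}$: statement 1) is precisely the requirement that \eqref{e2} be solvable for every $\widetilde{\bm{Y}}_{d}$, which Lemma \ref{lem2} equates with the full-row rank of $\bm{G}$ (hence statement 2, via the bridge) and with the existence of gain matrices making $\bm{F}$ satisfy the spectral radius condition \eqref{eq8} (statement 3).

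Next I would close the loop with the tracking objective through the iteration \eqref{e8} already derived in the proof of Theorem \ref{thm5}, which has the form of \eqref{eq4}. For the ``if'' direction, assume statement 3): then Theorem \ref{thm2} guarantees that for every $\widetilde{\bm{Y}}_{d}$ the sequence $\bm{U}_{k}$ converges with $\bm{G}\bm{U}_{\infty}=\widetilde{\bm{Y}}_{d}$; substituting into \eqref{e1} yields $\bm{Y}_{k}\to\bm{G}\bm{U}_{\infty}+\bm{X}_{0}+\bm{D}\bm{W}+\bm{V}=\bm{Y}_{d}$, i.e. \eqref{eq50}, equivalently \eqref{eq47}, for every desired reference. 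For the converse, if \eqref{eq47} can always be achieved under \eqref{eq51}, then $\bm{Y}_{k}\to\bm{Y}_{d}$ forces $\widetilde{\bm{Y}}_{d}=\bm{Y}_{d}-(\bm{X}_{0}+\bm{D}\bm{W}+\bm{V})$ to lie in the (closed) range of $\bm{G}$; since $\bm{X}_{0}+\bm{D}\bm{W}+\bm{V}$ is fixed while $\bm{Y}_{d}$ is arbitrary, $\widetilde{\bm{Y}}_{d}$ ranges over all of $\mathbb{R}^{N n_{o}}$, so $\bm{G}$ must be surjective, i.e. of full-row rank, which is statement 2). This passage is the logical crux: quantifying over \emph{every} reference is exactly what upgrades mere solvability to surjectivity.

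Finally, for the special updating law \eqref{eq52} I would observe that using a common gain $F_{0}$ amounts to $\bm{F}=I_{N}\otimes F_{0}$, whence $\bm{G}\bm{F}$ inherits the block lower-triangular Toeplitz structure of $\bm{G}$ with the single diagonal block $G_{0}F_{0}$; its spectrum is therefore that of $G_{0}F_{0}$ (each eigenvalue repeated $N$ times), so \eqref{eq8} collapses to the block-level inequality $\rho\left(I_{n_{o}}-G_{0}F_{0}\right)<1$. I expect the main obstacle to be less any single computation than correctly threading the fixed disturbance offset $\bm{X}_{0}+\bm{D}\bm{W}+\bm{V}$ through the equivalence and, in the converse direction, exploiting the arbitrariness of $y_{d}(t)$ to pass from solvability of \eqref{e2} to the full-row rank condition on $G_{0}$.
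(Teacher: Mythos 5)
Your proposal is correct and takes essentially the same route as the paper's own proof: both reduce the tracking objective to the lifted LAE \eqref{e2} and then invoke Lemma \ref{lem2} (for the three-way equivalence) and Theorem \ref{thm2} (for convergence via the iteration \eqref{e8}), joined by the equivalence between full-row rank of $\bm{G}$ and of $G_{0}$ --- the paper merely asserts this bridge and the final reduction of \eqref{eq8}, whereas you supply the block forward-substitution argument and the block-lower-triangular spectrum collapse under $\bm{F}=I_{N}\otimes F_{0}$ explicitly. Note also that your block-level condition $\rho\left(I_{n_{o}}-G_{0}F_{0}\right)<1$ is the dimensionally consistent one (since $G_{0}F_{0}\in\mathbb{R}^{n_{o}\times n_{o}}$), so the $I_{n_{i}}$ appearing in the corollary's statement is evidently a typographical slip.
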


\begin{proof}
Let $\widetilde{\bm{Y}}_{k}=\bm{Y}_{k}-(\bm{X}_0+\bm{D}\bm{W}+\bm{V})$. Thus, we can write (\ref{e1}) as $\widetilde{\bm{Y}}_{k}=\bm{G}\bm{U}_{k}$, and (\ref{eq47}) (or equivalently (\ref{eq50})) holds if and only if $\lim_{k\to\infty}\widetilde{\bm{Y}}_{k}=\widetilde{\bm{Y}}_{d}$. With this fact and by the equivalence between the full-row rank of $\bm{G}$ and that of $G_0$, we can develop the proof of this corollary in a similar way as that of Theorem \ref{thm5} by resorting to Lemma \ref{lem2} and Theorem \ref{thm2}.
\end{proof}

In Corollaries \ref{cor2} and \ref{cor3}, we employ our observer-based design method to particularly give some tracking results of ILC in the presence of the full rank condition of $G_{0}$. It is clear to see that the widely used P-type updating law (\ref{eq52}) in ILC is included as a special case of our observer-based updating law (\ref{eq51}). Thanks to Theorem \ref{thm5} and Corollary \ref{cor3}, we can also note an interesting fact that to accomplish the same output tracking task, there can emerge different inputs learned via the ILC process, depending on the different selections of the initial input. It thus implies that the commonly made realizability hypothesis, requiring the uniqueness of the input to realize the specified output tracking task in the ILC literature (see, e.g., \cite{sw:02,xypy:16,mm:171,szwc:16}), is not necessary for the perfect tracking tasks of ILC, thanks to the application of the observer-based design method of ILC.

Next, we use a simple example to illustrate the effectiveness of our observer-based design method of ILC without requiring the full rank condition of the first nonzero Markov parameter matrix $G_{0}$.

{\it Example 2:} Consider the system \eqref{eq42} whose system matrices are given by
\[
A=\begin{bmatrix}
1&1&0\\
0&1&1\\
0&0&1
\end{bmatrix},\quad
B=\begin{bmatrix}
1&-1\\
2&-2\\
0&0
\end{bmatrix},\quad
C=\begin{bmatrix}
1&0\\
0&1\\
1&-1
\end{bmatrix}^{\tp}
\]

\noindent and thus, (\ref{eq42}) is an unstable system \cite{am:06}. We can easily obtain
\[G_0=CB=\begin{bmatrix}
1&-1\\
2&-2
\end{bmatrix}\]

\noindent such that the relative degree $r=1$. It is clear that $G_0$ is neither of full-column rank nor of full-row rank. Additionally, we set the time period as $N=30$, consider disturbances as
\[
w(t)=\left[0,0,0\right]^{\tp},\quad
v(t)=\left[2\sin(0.2t+1)-1,2\sin(0.2t)\right]^{\tp}
\]

\noindent and adopt the desired reference as
\[
y_d(t)=\left[2\sin(0.2t+1),2\sin(0.2t)\right]^{\tp}.
\]

\noindent To achieve the tracking objective (\ref{eq47}) for the system (\ref{eq42}), we employ the updating law (\ref{eq51}) by adopting the initial conditions with $x_0=[1,0,0]^{\tp}$ and $u_0(t)=[5,1]^{\tp}$, $\forall t\in\mathbb{Z}_{29}$.

Even though $\bm{G}$ is neither of full-column rank nor of full-row rank, we can verify that the LAE \eqref{e2} is solvable. By Theorem \ref{thm5}, the perfect tracking objective \eqref{eq47} can be achieved under the updating law (\ref{eq51}). To use this updating law, we next determine its gain matrices, for which we decompose $\bm{G}=\bm{H}\widehat{\bm{G}}$ such that
\[
\widehat{\bm{G}}=\begin{bmatrix}
1&-1&0&0&\cdots&0&0\\
0&0&1&-1&\ddots&\vdots&\vdots\\
\vdots&\vdots&\vdots&\vdots&\ddots&0&0\\
0&0&0&0&\cdots&1&-1
\end{bmatrix}\in\mathbb{R}^{30\times60}
\]

\noindent and
\[
\bm{H}=\begin{bmatrix}
1&2&3&2&\cdots&59&2\\
0&0&1&2&\ddots&\vdots&\vdots\\
\vdots&\vdots&\vdots&\vdots&\ddots&3&2\\
0&0&0&0&\cdots&1&2
\end{bmatrix}^{\tp}\in\mathbb{R}^{60\times30}.
\]

\noindent Obviously, we have $\rank(\bm{H})=\rank(\widehat{\bm{G}})=\rank(\bm{G})=30$. From the structured forms of $\widehat{\bm{G}}$ and $\bm{H}$, we denote $\widehat{G}_0=\left[1,-1\right]$ and $H_0=\left[1,2\right]^{\tp}$, and particularly choose $\bm{F}=I_{30}\otimes F_{0}$ with
\[F_{0}=\left[\begin{matrix}2&1\\
1&1
\end{matrix}\right].
\]

\noindent We can thus obtain $1-\widehat{G}_{0}F_{0}H_{0}=0$, which ensures the spectral radius condition (\ref{eq58}). Actually, $\left(I_{30}-\widehat{\bm{G}}\bm{F}\bm{H}\right)^{30}=0$ holds, and as a result, we can realize the perfect tracking task (\ref{eq47}) within $30$ iterations from Theorems \ref{thm4} and \ref{thm5}, i.e., (\ref{eq47}) further becomes $y_{k}(t)=y_{d}(t)$, $\forall t=1$, $2$, $\cdots$, $30$, $\forall k\geq30$.

\begin{figure}
\centering
\includegraphics[width=2.8in]{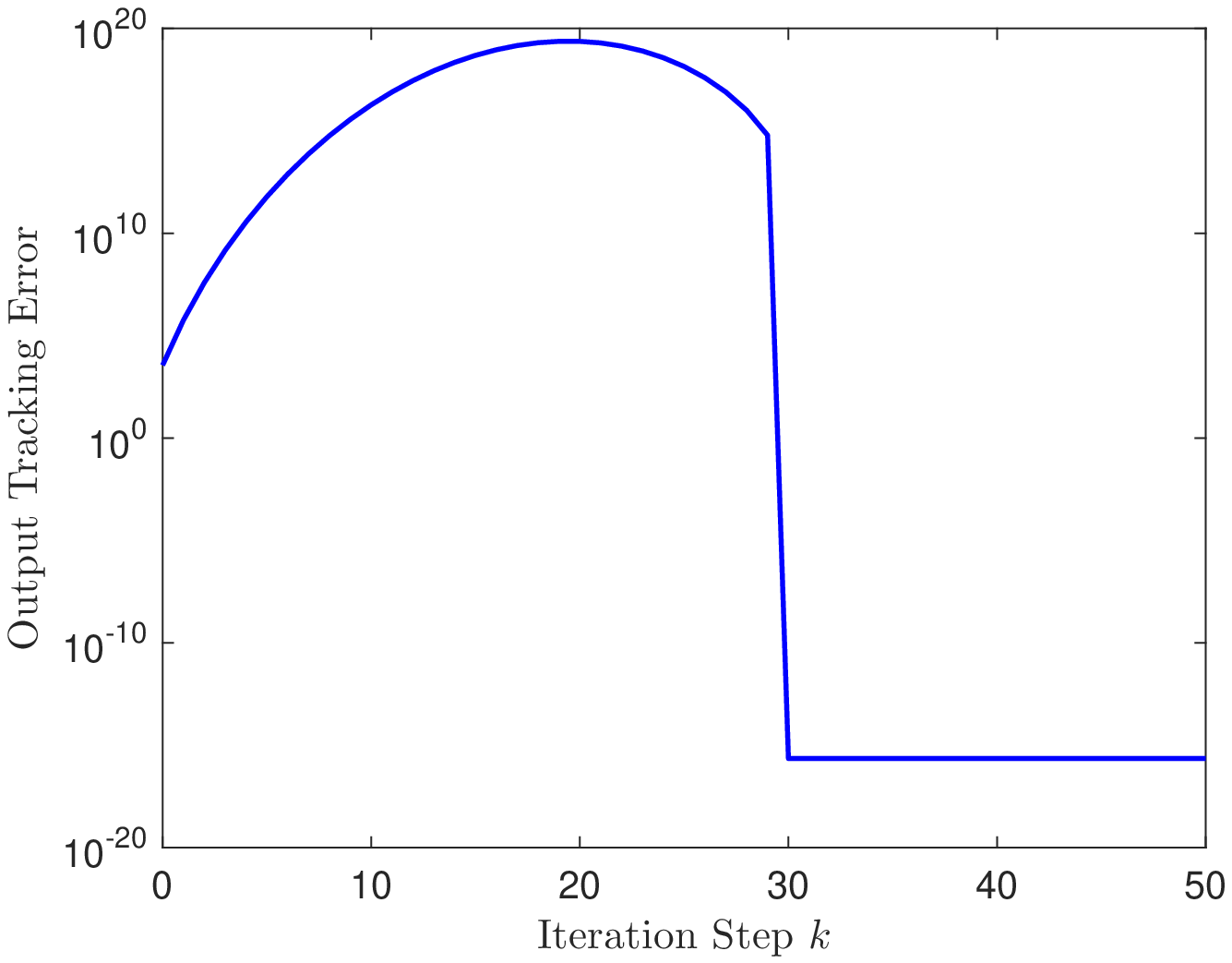}\includegraphics[width=2.8in]{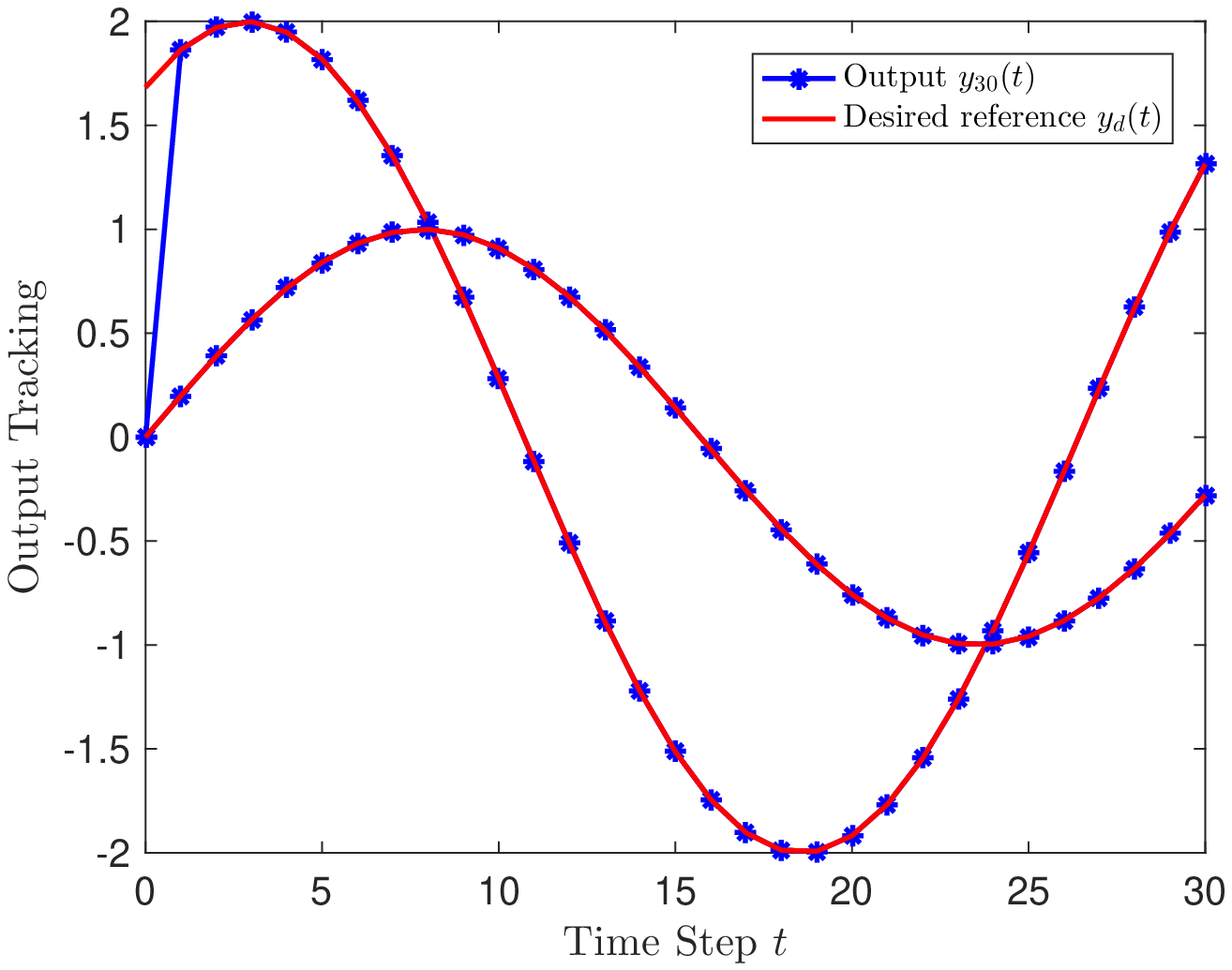}
\caption{(Example 2). The tracking performance of the system (\ref{eq42}) under the ILC updating law (\ref{eq51}). Left: Iteration evolution of the tracking error given by $\max_{1\leq t\leq30}\left\|y_{d}(t)-y_{k}(t)\right\|_{2}$. Right: Output tracking result at the $30$th iteration.}\label{fig5}
\end{figure}

In Fig. \ref{fig5}, we depict the output tracking performance for the system (\ref{eq42}) when peforming simulations with the updating law (\ref{eq51}) of ILC for the first $50$ iterations. By the iteration evolution of the tracking error, evaluated by $\max_{1\leq t\leq30}\left\|y_{d}(t)-y_{k}(t)\right\|_{2}$, we can find that the finite-iteration convergence is achieved for ILC within $30$ iterations in spite of the large learning transient growth. This clearly illustrates the validity of the incorporation of the observer-based design for bettering the ILC convergence speed, which even works in the presence of an unstable system. In particular, the output tracking result given in Fig. \ref{fig5} indicates that the output learned via ILC at the $30$th iteration can exactly follow the desired reference for all time steps except the initial time step. It coincides with the typical output tracking results of ILC in the presence of nonzero system relative degrees \cite{acm:07}.


\section{Conclusions}\label{sec6}

In this paper, how to feed back the ``control design'' idea into the mathematic problems has been discussed, which focuses on facilitating the bidirectional interactions between mathematics and control. Towards this end, an equivalence relation between the solving problem of any LAE and the state observer design problem of its associated control system has been established. It provides guarantee for incorporating the design idea of state observers into the design of a new iterative solution method for the solving problem of LAEs. Consequently, a tight connection of the (least squares) solutions for LAEs with the fundamental observability properties for control systems has been disclosed reasonably. It has been also revealed that all the (least squares) solutions for any LAEs can be determined, depending linearly on the initial condition. These facts may provide new ideas and insights, especially from the perspective of basic observability-related problems, for bettering control-based iterative methods of solving LAEs.

For the convergence analysis of our observer-based iterative solution method, conditions have been established for not only exponential convergence but also monotonic convergence of it, regardless of any LAEs. Moreover, the design idea of deadbeat control has been incorporated for accelerating its convergence speed such that the finite-iteration convergence can be realized. As an application, the observer-based iterative solution method of LAEs has been employed to accomplish the perfect tracking tasks for conventional 2-D ILC systems. Not only can new ILC algorithms falling beyond the typical design framework of ILC be obtained, but also some fundamental hypotheses commonly made in ILC can be removed, such as the full rank assumption relevant to the system relative degree. In addition, our adopted method may provide a feasible way to connect the ILC design with the  popular feedback-based design.


\section*{Acknowledgement}

The author would like to thank Dr. Y. Wu and Dr. J. Zhang, Beihang University (BUAA), for their helpful discussions.

\end{document}